\newtheorem{theorem}{Theorem}[section]
\newtheorem{prop}[theorem]{Proposition}
\newtheorem{cor}[theorem]{Corollary}
\newtheorem{lemma}[theorem]{Lemma}
\newtheorem{defi}[theorem]{Definition}
\newtheorem{rem}[theorem]{Remark}
\newtheorem{prob}[theorem]{Problem}%
\def\z*{\bar z}
\def\RE{\mathbb R}
\def\p{\par\noindent}
\newcommand{\be}{\begin{equation}}
\newcommand{\ee}{\end{equation}}
\newcommand{\bey}{\begin{eqnarray}}
\newcommand{\eey}{\end{eqnarray}}
\newcommand{\bes}{{\begin{split}}}
\newcommand{\ees}{{\end{split}}}
\newcommand{\bees}{{\begin{equation}\begin{split}}}
\newcommand{\es}{{\end{split}\end{equation}}}
\newcommand{\erre}{{\mathbb R}}
\newcommand{\sech} {{\rm{sech}}}
\newcommand{\tah}{{\rm{tanh}}}
\newcommand{\comple}{{\mathbb C}}
\newcommand{\bea}{\begin{eqnarray}}
\newcommand{\eea}{\end{eqnarray}}
\newcommand{\n}{\noindent}
\newcommand{\f}{\frac}
\newcommand{\ve}{\varepsilon}
\newcommand{\om}{\omega}
\newcommand{\nnn} {\nonumber}
\newcommand{\mc}{\mathcal}
\numberwithin{equation}{section}
\newcommand{\wt}{\widetilde}
\newcommand{\ov}{\overline}
\newcommand{\donothing}[1]{}
\begin{document}
\title{Stability and symmetry-breaking bifurcation for the ground
  states  of a NLS with a $\delta^\prime$ interaction} 

\author{Riccardo Adami and Diego Noja
\\
\\
Dipartimento di Matematica e Applicazioni, Universit\`a
 di Milano
Bicocca \\
via R. Cozzi, 53, 20125 Milano, Italy \\ \\ and \\ \\
Istituto di Matematica Applicata e Tecnologie Informatiche \\
Consiglio Nazionale delle Ricerche \\
via Ferrata, 1, 27100, Pavia, Italy
}


\maketitle

\begin{abstract}
\par\noindent
We determine and study the ground states of a focusing
Schr\"odinger equation in dimension one with a power nonlinearity
$|\psi|^{2\mu} \psi$ and a strong inhomogeneity represented 
by a singular point perturbation, the so-called (attractive) $\delta^\prime$ 
interaction, located at the origin. 

\n
The time-dependent
problem turns out to be globally well posed in the subcritical regime,
and locally well posed in the supercritical and critical regime in the
appropriate energy space.  The set of the (nonlinear) ground states 
is completely determined. 
For any value of the nonlinearity power, 
it
exhibits a symmetry breaking bifurcation structure as a function of
the frequency (i.e., the nonlinear eigenvalue) $\omega$. More precisely,
 there exists
a critical value $\om^*$ of the nonlinear eigenvalue $\om$, such that:
if $\om_0 < \om < \om^*$, then there is a single
ground state and it is an odd function; if $\om > \om^*$ then
there exist two non-symmetric ground states.

\n
We prove that before bifurcation (i.e., for $\om < \om^*$) and for any
subcritical power, every ground state
is orbitally stable. After bifurcation ($\om =\om^*+0$), ground states are 
stable if $\mu$ does not exceed a value $\mu^\star$ that lies between $2$ and $2.5$,
and become unstable for $\mu > \mu^*$. Finally, for $\mu > 2$ and $\om \gg \om^*$,
all ground states are unstable. The branch of odd ground states for 
$\om < \om^*$ can be 
continued at any $\om > \om^*$, obtaining a family of 
orbitally unstable stationary states.

\n
Existence of ground states is proved by variational techniques, 
and the stability properties of stationary states are investigated by means of the 
Grillakis-Shatah-Strauss framework, where some non standard 
techniques have to be used to establish the needed properties 
of linearization operators. 
\end{abstract}

\section{Introduction}
\setcounter{equation}{0}
The present paper is devoted to the analysis of existence and stability of the ground states
 of a nonlinear Schr\"odinger equation with a point defect in dimension one.  The
Schr\"odinger equation bears an attractive power nonlinearity,  and the defect is
described by a particular point interaction in dimension one, the
so-called attractive $\delta'$ interaction. In a formal way, the time
dependent equation to be studied is given by 
\be \begin{split}\label{nlsdelta'} \left\{
\begin{array} {ccc} i \partial_t \psi (t) &=-&\partial_{xx}\psi(t) - \gamma \delta'_0  \psi (t) - \lambda
 | \psi (t) |^{2\mu }\psi (t) \\
\psi(0)&=& \psi_0 \end{array}
\right.
\end{split}
\ee
where $\psi_0$ represents the initial data, and $\lambda>0$,
$\gamma>0$, $\mu > 0$. 
The non rigorous character of the expression  
\eqref{nlsdelta'} is due to the
fact that the combination $\frac{1}{2}\partial_{xx}\psi(t)+ \gamma
\delta'_0  \psi (t)$ is meaningless if literally interpreted as
an operator sum or as a form sum, due to the exceedingly singular
character of the perturbation given by $\delta'_0$. Nevertheless, it is
well known (\cite {[AGHH]}) that it is possible to define a singular
perturbation $H_\gamma$ of the one-dimensional laplacian
$-\frac{d^2}{dx^2}$ 
which is a
self-adjoint operator in $L^2(\erre)$ and has the expected properties of the stated
formal expression. The self-adjointness is implemented through the singular
boundary condition which defines the domain of the operator, i.e  
$$D(H_\gamma)=\{ \psi \in H^2(\erre \backslash \{ 0 \}) , 
\quad |\quad \psi'(0+)= \psi' (0-)\ , \quad
\psi (0+) - \psi (0-) = -\gamma \psi' (0+)\ \}$$  and the action is
$H_{\gamma}\psi = -\psi''$ out of the origin. 

The $\delta'$ is called {\it repulsive} when $\gamma<0$ and {\it
  attractive} when $\gamma>0$.  
Note that the generic element of the domain of $\delta'$
interactions is not continuous at the origin (but its left and right derivative 
exist and coincide), at variance with the milder and better known case of the $\delta$
interaction (which is defined by $\psi\in H^1(\RE)$ and boundary condition 
$\psi^\prime(0^+)-\psi^\prime(0^-)= \alpha \psi(0)$, where the
  real parameter $\alpha$ is interpreted as the strength of the
  interaction). 
The quadratic form computed on a domain element $\psi$ turns out to be 
$$ 
( H_\gamma \psi, \psi) = \|\psi^{\prime}\|^2_{L^2} - 
\gamma|\psi^{\prime}(0)|^2
$$ which justifies the name of $\delta'$ interaction given to
$H_{\gamma}$. Concerning the physical meaning of the $\delta'$ interaction, 
perhaps its best known use is in the analysis of Wannier-Stark effect (see \cite{[AEL]}). 
It is known that the spectrum of Wannier-Stark hamiltonians in the presence of a periodic 
array of $\delta'$ interactions shows a remarkably different behaviour with respect 
to the case of regular periodic  potential; in particular, the spectrum has no absolutely continuous 
part and it is typically pure point (by the way, the corresponding properties 
of Wannier-Stark for the $\delta$ array are not known). In the present paper, 
the core interpretation is that of a strongly singular and non trivial scatterer. 
It is known that
the $\delta'$ 
interaction cannot be obtained as
the limit of Schr\"odinger operators in which the potential 
is a derivative of a $\delta$--like regular function, as the name could 
erroneously suggest (see \cite{[CS],[ENZ]} for a thorough analysis of this problem). 
An approximation through three scaled $\delta$ potentials exists, but the scaling is 
nonlinear as the distance of the centres vanishes. Nevertheless, the $\delta'$ 
interaction has a high energy scattering behaviour that can be reproduced, 
up to a phase factor, through scaling limits of scatterers with internal structure, the 
so-called spiked-onion graphs (see \cite{[AEL]}). 
These are obtained joining two halflines by $N$ edges 
of length $L$ and letting $L\rightarrow 0$, $N \rightarrow \infty$ 
while keeping the product $NL$ fixed. An analogous behaviour is 
obtained considering a sphere with two halflines attached. These 
results enforce the interpretation of the $\delta'$ interaction 
as an effective model of a scatterer with non elementary structure. 
By the way, in this respect the results of the present paper give 
support to this view through the analysis of the bifurcation of 
nonlinear bound states.

  \par\noindent From a more abstract point of view, both $\delta$
and $\delta'$ interactions are members of a $4$-parameter family
of self-adjoint perturbations of the one dimensional laplacian, the 
so-called $1$-dimensional point interactions (see \cite
{[ABD],[AGHH],[EG]}).   As explained above, we interpret the presence of a point 
interaction in the equation \eqref{nlsdelta'} as a model of strongly singular interaction between
nonlinear waves and an inhomogeneity. When the inhomogeneity is
described by a $\delta$ interaction, a fairly extended literature
exists of both physical and numerical character; more recently, 
there has been a growing interest in this model from the mathematical
side,  in the
attempt to establish rigorous results concerning the existence of
stationary states (\cite{[CM],[WMK]}), the asymptotic
behaviour in time (\cite {[HMZ]}), and the reduced dynamics on the
stable soliton manifold (\cite {[GHW]}). To
the knowledge of the authors the only rigorous result about the NLS with
$\delta'$ interaction concerns the well-posedness of the dynamics, and
is contained in \cite {[AN]}, where the whole family of point
perturbation is treated in the presence of a cubic nonlinearity.

\noindent
 We
give here a brief description of the results of the present paper.\p
We first extend the result given in \cite {[AN]} to cover global
well-posedness for the problem \ref{nlsdelta'} in the energy space ($
Q \ = \ H^1 (\erre^+) \oplus H^1 (\erre^-)$, coinciding with the form
domain of $H_\gamma$) in the subcritical case 
$\mu < 2$ and local well-posedness for $\mu \geq 2$. In particular, the
equation has two conserved quantities, the energy $E$ and the charge
(i.e. the $L^2$-norm)
$ M$, as in the unperturbed NLS equation. They are associated to
symmetries of the space of solutions, respectively time translation
and phase invariance. The free NLS equation has one more conserved
quantity, linear momentum, associated to space translation; this
symmetry is broken by the defect, and correspondingly there is not
momentum conservation.\p In the main part of the paper we are
concerned with the identification of the ground states for 
equation \ref{nlsdelta'} and the analysis of their stability in the
case of attractive $\delta'$ interaction. 

\noindent
A nonlinear standing wave, or a nonlinear bound state in physical
terminology, as in the linear case is a solution of the form 
$$ \psi (t) \ = \ e^{i \om t} \phi_\omega\ .$$
Correspondingly, $\phi_\omega$ fulfils the stationary equation
\be \label{stat}
H_\gamma \phi_\omega - \lambda | \phi_\omega |^{2 \mu} \phi_\omega = -\om \phi_\omega.
\ee

We call $\mathcal A$ the set of stationary states of the equation \ref{nlsdelta'}.\p
Every member of $\mathcal A$ has to be a  classical ($C^2(\RE^\pm)$)
and square integrable solution of the  standard NLS to the left and to the right of the singularity. 
This gives, for the spatial part of the standing wave, the only 
possible forms
\begin{equation*}
\phi_\omega (x) = 
\left\{ \begin{array}{cc}
\pm \lambda^{-\f 1 {2\mu}} (\mu + 1)^{\f 1
  {2\mu}} \om^{\f 1 {2\mu}} \cosh^{-\f 1 \mu} [\mu \sqrt \om ( x - x_1)], & x<0 \\
\lambda^{-\f 1 {2\mu}} (\mu + 1)^{\f 1
  {2\mu}} \om^{\f 1 {2 \mu}} \cosh^{-\f 1 \mu} [\mu \sqrt \om ( x - x_2)], & x>0.
\end{array}
\right.
\end{equation*}
Note that in the introduction we omit, to simplify notation, the dependence of $\phi$ from every parameter other than the frequency $\omega$.
The standing wave solution is represented as a solitary wave of the NLS
centered at $x_1$ on the left of the origin and a solitary wave of the
same NLS centered at $x_2$ on the right of the origin; the parameters
 $x_1, x_2$ defining the solution are to be chosen in such a
way that the function $\phi_\omega$
satisfies the boundary conditions embodied in domain of
$H_\gamma$. Eventually, they depend on the parameter $\lambda,\
\mu,\ \gamma,\ \omega$ which enter the equation. It turns
out that there exist two families of stationary states, a family
$\mathcal F_1$ whose members respect the symmetry $x_1=-x_2 $  and a
family $\mathcal F_2$ whose members do not enjoy this symmetry. 

It is an important point that the analysis of standing waves by ODE methods has a variational counterpart in the fact that the standing waves turn out to be critical points of an action functional. The action
functional for our problem is defined as
\be
\label{action0}
S_{\omega}(\phi) = \frac{1}{2}F_{\gamma}({\phi}) - \frac{\lambda}{2\mu
  +2}\|\phi\|_{2\mu+2}^{2\mu+2}+\frac{\omega}{2}\|\phi\|_2^2 = E (\phi)
+ \f \omega 2 M (\phi),\quad\quad \phi\in Q, \ee 
where we indicate by
$F_{\gamma}({\phi})$ the quadratic form associated to the $\delta'$
point interaction $H_{\gamma}$.\p It is easy to see that solutions of \eqref{stat} are critical point of \eqref{action0}. The above action is easily seen to be unbounded from below for focusing nonlinearities. We show however that the action $S_\om$ attains
a minimum when constrained on the
natural constraint $\langle S_\omega^\prime(\phi),\phi\rangle=0$ (the
so-called  Nehari manifold) which obviously contains the set of solutions to Euler-Lagrange
equations for $S_\omega$. \n
In the present paper we adhere to the customary mathematical use to call {\it ground states} the minimizers of the action on the natural constraint.
\n To prove that the constrained minimum
problem has a solution we exploit a) the boundedness from below of the
action on the associated Nehari manifold; b) the fact that the Nehari manifold is bounded
away from zero; c) classical Brezis-Lieb inequalities showing that the limit
of a minimizing sequence exists and it is an element of the minimization domain;
d) finally, minimizers turn out to be  elements of $D(H_{\gamma})$ and 
satisfy equation \eqref{stat}. 

\n
For an analysis of the analogous problem in the simpler case of NLS with a $\delta $ interaction see \cite{reika}. \par\noindent
It turns out that ground states do not exist for $\omega \leq
\frac{4}{\gamma^2}$ and that  
for every $\omega > \frac{4}{\gamma^2}$ there is at least one ground
state. More precisely, for 
every $\omega\in (\frac{4}{\gamma^2}, \frac{4(\mu+1)}{\gamma^2\mu}) $,
there exists a 
unique (up to a phase factor) stationary state. Furthermore, it
belongs to  the family $\mathcal F_1$, so it is symmetric with respect
to the defect.  From such a family,  
a couple of non symmetric stationary states, i.e., belonging to the
family $\mathcal F_2$, bifurcates in correspondence of the value $\om^*
= \frac{4(\mu+1)}{\gamma^2\mu}$.\p 

Before focusing on the issue of determining the stability of the
stationary states,  
we recall that,
to the $U(1)$-invariance of equation \eqref{nlsdelta'},
the appropriate notion of stability for our problem 
corresponds to the so-called {\it orbital
  stability}, which is Lyapunov stability up to symmetries. The
stationary state $\phi_{\omega}$ is orbitally stable if given a
tubular neighbourhood $U({\phi_{\omega}})$ of the orbit of the ground
state (i.e., a neighbourhood modulo symmetries), the evolution
$\psi(t)$ is in  $U({\phi_{\omega}})$ for all 
times if the $\psi(0)$ datum is near as it needs to the orbit of
$\phi_{\omega}$. Precise definitions are given in section \ref{sec:stab}. 

There are two
main approaches to determine orbital stability or instability of stationary states:
the variational or direct method, pioneered in the classic paper \cite{[CL]}, and linearization.  
In our analysis we employ the linearization method, studied in a
rigorous way first by Weinstein (\cite{[W2]} \cite{[W3]}) and then developed 
as a general theory for hamiltonian systems by Grillakis, Shatah, and Strauss (\cite{[GSS1]},
\cite{[GSS2]}). ``Linearization'' in this context is a conventional
denomination because the theory includes in fact the control of the
nonlinear remainder.

A summary of the essential steps of the method can be sketched as
follows. The equation \eqref{nlsdelta'} can be written as a
hamiltonian system on a real Hilbert space after separating real and
imaginary part of $\psi=u+iv$. Given a stationary state, the linearization of the so constructed hamiltonian system at the point
 $\phi_\om$ is described by  the second derivative of the action
 $S'' (\phi_\om)$. Such a quantity is identified with a linear operator, whose action
 is suitably represented with the aid of operators $L_1^{\gamma, \om}$ and
 $L_2^{\gamma, \om}$ defined through
$$S''(\phi_{\omega})(u+iv)\ = \ L_1^{\gamma, \om} u + iL_2^{\gamma, \om} v\ $$
(we refer to \ref{elleunoeddue} and subsequent comments for the explicit definition).
$L_1^{\gamma, \om}$ and $L_2^{\gamma, \om}$ are easily seen to be
self-adjoint on their natural domains.\par\noindent 
Then,
 denote by $n(L_1^{\gamma, \om})$ is the number of
negative eigenvalues of $L_1^{\gamma, \om}$, and define the function

\be \nnn 
p(\omega) = \left\{             
\begin{array}{ll}                   
 1\quad {\mbox{if}} \quad \f{d^2} {d\om ^2} S_\om (\phi_\omega)>0 \\                   
   0 \quad {\mbox{if}} \quad \f{d^2} {d \om^2}
S_\om (\phi_\om) < 0          
\end{array}       \right. .
\ee


\n
Now, provided that

\n
a) the essential spectrum of $S''_\omega(\phi_\om)$ is bounded away
from zero,\par\noindent 
b) ${\mbox{Ker}}(L_2^{\gamma, \om})= {\mbox{Span}}\{\phi_\omega \}$,\par\noindent
c) $n(L_1^{\gamma, \om})=n$, \par\noindent
the stationary state $\phi_\om$ is
stable if $n-p=0$, and unstable if $ n-p$ is odd.

We
accurately compute $\f{d^2} {d \om^2} S_\om (\phi_\om)$ for the
ground states of our model, in critical and supercritical regime, and prove the occurrence of
an exchange of stability between the two subfamilies of
$\mathcal F_1$. For subcritical and critical nonlinearity power $\mu \leq 2$ and low frequency
$\omega\in (\frac{4}{\gamma^2},\frac{4(\mu+1)}{\gamma^2\mu}
) $, the symmetric (odd) states are stable, while crossing the critical
frequency $\omega^*$  the symmetric branch of stationary states becomes unstable 
and the two newborn asymmetric states prove stable.
After bifurcation ($\om =\om^*+0$), ground states are stable for not
too strong nonlinearities 
$2< \mu< \mu^\star$, where $\mu^\star$ lies between $2$ and $2.5$,
and they become unstable for $\mu > \mu^\star$. Finally, in the
supercritical  regime $\mu>2$ and large frequency $\om \gg \om^*$,
all ground states are unstable.

We are then in the
presence of a pitchfork bifurcation, 
accompanied by a spontaneous symmetry breaking on the stable branches
of the bifurcation.  The phenomenon of bifurcation of asymmetric
ground states from a branch of symmetric ones was discovered by
Akhmedeev (\cite{[AKM]}) in a model of propagation of electromagnetic
pulses in nonlinear layered media, and then studied in a rigorous way
in several mathematical papers (see \cite{[MJS]} and references
therein) with various generalizations of the original result, which
was concerned with an exactly solvable example. In these works the
bifurcation is induced by a change of sign in the nonlinearity
(i.e., a transition
to a defocusing regime) in a localized region.  Here we have an
analogous effect in a different model, in which the bifurcation is
induced by a strong point defect.  \p We remark that no such a
phenomenon shows up in the case of a single $\delta$ perturbation of
NLS. Nevertheless, a situation similar in some respects appears in the case of the
NLS with {\it two} attractive $\delta$ interactions separated by a
distance, a {\it double $\delta$ - well}. This model is studied in
\cite {[JW]}, where an analysis of the model is given by means of
dynamical systems techniques, and \cite{[FS]}, where the bifurcation
is explored in the semiclassical regime. See also \cite{[KKP]} for the
analogous phenomenon with a regular potential of double well type and
\cite{tuoc} for the introduction of a related normal form.  The analogy is
 a reminiscence of the
the fact that the $\delta'$ interaction can be approximated in the
norm resolvent sense by a suitable scaled set of three $\delta$
interactions (see \cite{[CS]} and \cite{[ENZ]}), so it may be
considered as
a singular model of a multiple well.  We finally recall that a
different definition of ground state for a semilinear Schr\"odinger
equation requires that they are minimizers of the total energy
constrained to have constant mass. The two definitions are related,
but not in an obvious way, because the constrained action is bounded
from below irrespective of the power nonlinearity $\mu$ and the
constrained energy on the contrary is bounded from below only in the
subcritical case $\mu<2$. On the other hand, the general GSS theory
guarantees that the ground states are {\em local} minima of the
constrained energy if and only if they are stable, and this gives a
connection between two definitions.

The paper is organized as
follows: after giving the main notation, we introduce the model by
recalling few elementary properties of the NLS and of the point
interaction (sections 2.1. and 2.2). Then we state the variational problem
that embodies the search for the ground states and two variants of it
(section 2.3). In section 3 we establish the well-posedness of the
dynamical problem, that turns out to be only local in time for strong
nonlinearity ($\mu \geq 2$) and global in time for weak nonlinearity
($\mu < 2$).  In section 4 we prove, by variational techniques, the
existence of a ground state (theorem \ref{teo:exmin}), 
while in section 5 we explicitly compute
the ground state (proposition \ref{propsystem}) and show the
occurrence of a bifurcation with symmetry breaking (theorem
\ref{theo:bifurcation}).  Section 6 is devoted to the issue of determining the
stability and instability of the ground states: first, in section 6.1,
we prove the spectral properties a), b), and c) for $S_\om''$
evaluated at the ground states (propositions \ref{diecitre},
\ref{l1simm}, and \ref{l1asimm}); then, in section 6.2, we compute $\f
{d^2} {d\om^2} S_\om (\psi_\om)$ and study its sign (proposition
\ref{propdisecondo}); finally, in 
section 6.3 we collect all information and prove stability,
instability and the occurrence of a pitchfork bifurcation (proposition
\ref{nature} and theorem \ref{pitchfork}).

\subsection{Notation}
For the convenience of the reader,
we collect here some notation that will be used through the paper. Most
symbols will be defined again, when introduced.

\n
- $\om_0 \ : = \ \f 4 {\gamma^2}$ is the frequency of the linear ground
state.

\n
- $\om^* \ : = \ \f 4 {\gamma^2} \f {\mu + 1} \mu$ is the frequency of bifurcation.

\n
- The symbol $f'$ denotes the derivative of the function $f$ with respect to the
space variable $x$. Time derivative is explicited through the symbol $\partial_t$. 

\n - We denote the $L^p$-norm by $\| \cdot \|_p$. When $p=2$ we omit
the subscript. The squared $H^1$-norm of $f$ is defined as the sum of
the squared $L^2$-norms of $f$ and of $f'$.

\n
- The following abuse of notation is repeatedly used:
\begin{equation*} \begin{split}
\| \psi^\prime \|^p_p \ : = \ & \lim_{\ve \to 0+} \left( \int_{-\infty}^{- \ve}
| \psi^\prime (x) |^p \, dx + \int^{\infty}_{\ve}
| \psi^\prime (x) |^p \, dx, \qquad 1 \leq p < \infty 
\right) 
\end{split}
\end{equation*}

\n
- Our framework is the energy space
\begin{equation*} \begin{split}
Q \ : = \ & H^1 (\erre^+) \oplus H^1 (\erre^-) \\
\| \psi \|_Q^2 \ : = \ & \| \psi \|^2 + \| \psi^\prime \|^2 \\
\end{split}
\end{equation*}

\n
- The ordinary scalar product in $L^2$ is denoted by $( \cdot, \cdot)$ and is antilinear
on the left factor. We often use the duality product in the space $Q$, denoted by
$\langle f, g \rangle$, where $f \in Q^\star$ and $g \in Q$. Again, it is antilinear in
$f$. For simplicity, in the bracket $\langle \cdot, \cdot \rangle$ we
often  exchange in the bracket the place of the element
of $Q$ with the place of the element of $Q^*$.

\n
- The symbol $\epsilon$ denotes the sign function.

\section{Preliminaries}
\noindent
Here we recall some well-known facts.

\subsection{Free stationary NLS}

The set of the solutions to the free (i.e. without point perturbation)
stationary  
Schr\"odinger equation with focusing power nonlinearity 
\be \nonumber 
-\psi'' - \lambda |\psi|^{2\mu} \psi = -\om \psi, 
  \qquad \om \, > \, 0, \ \, \lambda \, > \, 0, \ \, \mu \, > \, 0   
\ee
is given by $\left\{ \phi^{x_0}_\om, x_0 \in \erre, \, \om > 0 \right\}$, where
\be \label{solitons}
\phi^{x_0}_\om (x) \ = \ \lambda^{-\f 1 {2\mu}} (\mu + 1)^{\f 1
  {2\mu}} \om^{\f 1 {2\mu}} \sech^{\f 1 \mu} [\mu \sqrt \om ( x - x_0)]  
\ee
Note that they are smooth, exponentially decaying functions.

\subsection{$\delta^\prime$ interaction: hamiltonian operator}
The hamiltonian operator describing the so-called 
$\delta^\prime_0$ interaction is defined on the domain
\be \label{domgamma}
D(H_\gamma) \ = \ \{ \psi \in H^2(\erre \backslash \{ 0 \}) , 
\, \psi^\prime (0^+)=\psi^\prime (0^-)\ , 
\  \psi (0^+) - \psi (0^-) =- \gamma \psi^\prime(0^+) \}
\ee
and its action is given by
$$
H_\gamma \psi= -\psi^{\prime\prime}\ , \ \ \ \ x \neq 0\ .
$$
Note that in the current literature the opposite sign of $\gamma$ appears 
in the definition of $H_\gamma$. This is a matter of convention, 
and we prefer to adopt the present one because we consider
 the case of attractive $\delta^\prime$ interaction only.\p
An equivalent description of the elements of the operators domain $D(H_\gamma)$ is
\be \label{domgamma_2}
 \psi = \chi_+ \psi_+ + \chi_- \psi_-, 
\ee
where
$$
\psi_\pm \in  H^2(\erre\backslash \{ 0 \}) \cap H^1 (\erre)
$$
are {\it even} functions which satisfy the boundary conditions
$$ 
\psi_+ (0) - \psi_- (0) = - \gamma \psi_+^\prime (0) =
- \gamma \psi_-^\prime (0). 
$$
In this representation, the action of $H_\gamma$ is
\be \nnn 
H_\gamma \psi (x) : = - \chi_+ (x) \psi^{\prime \prime}_+ (x)
- \chi_- (x) \psi^{\prime \prime}_- (x).
\ee
Let us recall the main spectral properties of $H_{\gamma}$. 
The essential spectrum is purely absolutely continuous: $\sigma_{ess}
(H_{\gamma})=[0,+\infty)$.\p
Concerning the discrete spectrum,
if $\gamma\leq 0$, then
 $\sigma_p(H_{\gamma})=\emptyset$; if $\gamma>0$, then there exists a unique eigenvalue, 
given by $\sigma_p(H_{\gamma})=\{-\frac{4}{\gamma^2}\}$.
The corresponding normalized eigenfunction is given, $\forall\ \gamma \in (0,+\infty)$, by
$$
\psi_{\gamma}(x)=\left(\frac 2 {\gamma} \right)^{\frac{1}{2}}\ \epsilon(x)e^{-\frac 2 {\gamma}|x|}\ ,
 \quad \epsilon(x)\equiv\frac{x}{|x|} \ .
$$

\n
Finally, the singular continuous spectrum is empty: $\sigma_{sc} (H_{\gamma})=\emptyset$.\p

\n
In the following we use the notation $\om_0 = \f 4 {\gamma^2}$.

\subsection{$\delta^\prime$-interaction: quadratic form}

It is known (see \cite {[AN]} that the description of the dynamics generated by $H_\gamma$ can be
equivalently accomplished by using the quadratic form 
\be 
\label{formdelta'}
F_{\gamma} (\psi) \, : = \, \| \psi^\prime \|^2 - \gamma^{-1}| \psi(0^+)-\psi(0^-) |^2 , 
\ee
where, with an abuse of notation, we denoted $\psi^\prime : = \chi_+ \psi^\prime_+ 
+ \chi_- \psi^\prime_-$, according to the notation introduced in Section 1.1.
 The domain of $F_\gamma$ is given by  (see also Section 2.2 in \cite{[AN]})
\be 
\label{domformdelta'}
D(F_\gamma) \ := \ H^1 (\erre^+) \oplus H^1 (\erre^-).
\ee
We will consider the form domain as the natural energy space for the $\delta^\prime$ interaction; 
it is independent of $\gamma$ and from now on we will indicate it by $Q$. 
If $\psi$ belongs to the operator domain of a
$\delta^\prime$-interaction  with strength $\gamma$, then one has
\be \nnn 
F_{\gamma} (\psi) \, : = \, \| \psi^\prime \|^2 - \gamma| \psi^\prime (0) |^2\ . 
\ee
which explains the name given to the operator.


Due to \eqref{domformdelta'}, it is natural to endow the space $Q$ with the norm
\be \nnn 
\| \psi \|_Q^2 \ : = \ 
\f 1 2 \| \psi_+ \|^2_{H^1 (\erre)} + \f 1 2 \| \psi_- \|^2_{H^1 (\erre)} \ = \
\| \psi \|^2 + \| \psi^\prime \|^2.
\ee
It is immediately seen that the following proposition holds
\begin{prop} 
The form $F_\gamma$ is continuous in the topology induced by the norm $\| \cdot \|_Q$.
\end{prop}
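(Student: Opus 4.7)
The plan is to reduce continuity of the quadratic form $F_\gamma$ to boundedness of the associated sesquilinear form on $Q\times Q$, and then to control each of its two pieces separately in the $\|\cdot\|_Q$ norm. Concretely, I would polarize $F_\gamma$ into
\[
B_\gamma(\psi,\phi) \ := \ (\psi',\phi') \ - \ \gamma^{-1} \bigl(\psi(0^+)-\psi(0^-)\bigr)\overline{\bigl(\phi(0^+)-\phi(0^-)\bigr)},
\]
and then it suffices to prove the estimate $|B_\gamma(\psi,\phi)|\le C\|\psi\|_Q\|\phi\|_Q$; continuity of $F_\gamma$ at any $\psi_0\in Q$ then follows by writing $F_\gamma(\psi)-F_\gamma(\psi_0)=B_\gamma(\psi-\psi_0,\psi-\psi_0)+2\,\mathrm{Re}\,B_\gamma(\psi-\psi_0,\psi_0)$ and letting $\|\psi-\psi_0\|_Q\to 0$.

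The first term in $B_\gamma$ is immediate: by Cauchy--Schwarz, $|(\psi',\phi')|\le\|\psi'\|\,\|\phi'\|\le\|\psi\|_Q\|\phi\|_Q$, using the definition of $\|\cdot\|_Q$. The real work is in controlling the boundary term, and for this I would invoke the one-sided trace inequality: if $u\in H^1(\erre^+)$, then $u$ has a well-defined boundary value $u(0^+)$ and
\[
|u(0^+)|^2 \ = \ -\int_0^{\infty}\frac{d}{dx}|u(x)|^2\,dx \ = \ -2\,\mathrm{Re}\int_0^{\infty}\overline{u(x)}\,u'(x)\,dx \ \le \ \|u\|_{L^2(\erre^+)}^2+\|u'\|_{L^2(\erre^+)}^2,
\]
and symmetrically for $H^1(\erre^-)$. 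Since $\psi\in Q$ means $\psi$ restricted to each half-line lies in $H^1$, this gives $|\psi(0^{\pm})|^2\le\|\psi\|_Q^2$, hence $|\psi(0^+)-\psi(0^-)|^2\le 4\|\psi\|_Q^2$, and the analogous polarized bound $|(\psi(0^+)-\psi(0^-))\overline{(\phi(0^+)-\phi(0^-))}|\le 4\|\psi\|_Q\|\phi\|_Q$.

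Putting the two estimates together yields $|B_\gamma(\psi,\phi)|\le (1+4|\gamma|^{-1})\|\psi\|_Q\|\phi\|_Q$, which finishes the argument. The only genuinely subtle point, and what I would call out explicitly, is that a generic element of $Q$ has a jump at the origin, so the symbols $\psi(0^+)$ and $\psi(0^-)$ must be interpreted as one-sided traces rather than a common pointwise value; once this is acknowledged the trace inequality applied separately on each half-line is entirely standard. I do not anticipate any real obstacle beyond making this identification clear.
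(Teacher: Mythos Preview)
Your proof is correct and follows essentially the same approach as the paper: both control the boundary term $|\psi(0^+)-\psi(0^-)|^2$ by a one-sided Sobolev/trace inequality on each half-line and note that the derivative term $\|\psi'\|^2$ is trivially dominated by $\|\psi\|_Q^2$. Your version is slightly more explicit in that you polarize to the sesquilinear form and spell out how boundedness yields continuity, whereas the paper simply bounds the quadratic form and leaves that passage implicit.
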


\begin{proof}
In order to estimate the pointwise term in $F_\gamma$, notice that 
\be \begin{split}
| \psi (0+) - \psi (0-) |^2 \ \leq & \ 2 | \psi (0+) |^2 + 2 | \psi (0-) 
|^2 \ \leq \  2 \| \psi_+ \| \| \psi^\prime_+
\| +  2 \| \psi_- \| \| \psi^\prime_- \| 
 \ \leq  \ 2 \| \psi \|_Q^2,
\end{split}
\ee
where we used the well-known fact that $\| \phi \|_\infty^2 \leq \|
\phi \|_2 \| \phi'\|_2$ for any $\phi \in H^1 (\erre)$.
The first term in \eqref{formdelta'} is trivially estimated by the
squared $Q$-norm, so the proof is complete.
\end{proof}
Finally, we notice that, for any $m \geq - \om_0$ one has
\be \label{m-rep}
F_\gamma (\psi) \ = \ ( (H_\gamma + m)^{\f 1 2} \psi, (H_\gamma +
m)^{\f 1 2} \psi ) - m (\psi, \psi).
\ee
This is immediate for $\psi \in D (H_\gamma)$, then a density argument
shows that \eqref{m-rep} holds
for any $\psi \in Q$.

\subsection{Functionals and variational problems}
We first define the functionals
we use through the paper.
All of them
 act on the energy space $Q$, defined in
\eqref{domformdelta'}.\p 
We define the Hamiltonian of the NLS with a point defect as the sum of
the linear  Hamiltonian of the corresponding point interaction and of
the  nonlinear self-interacting part. \par So,  in the particular case of the $H_\gamma$ interaction, or $\delta^\prime$ defect, we have for the total energy 
\be \label{energy}
E (\psi) \ = \ \f 1 2 \| \psi^\prime \|_2^2 - \f 1 {2 \gamma} | \psi (0+) - \psi
(0-) |^2 - \f \lambda {2 \mu + 2} \| \psi \|_{2 \mu + 2}^{2 \mu + 2}\ .
\ee 
Standard results in the calculus of variations (see for example \cite{[C]} show that $E\in C^1(Q, \RE)$ and the Fr\'echet derivative 
$$
E^\prime(\psi)=H_\gamma \psi -\lambda|\psi|^{2\mu}\psi \in Q^* \quad\quad \forall \psi\in Q\ .
$$
Moreover we define the {\em mass
functional} (sometimes called charge)
\be \nnn 
 M (\psi) \ : = \ \| \psi \|^2.
\ee
The mass obviously belongs to $C^1(Q, \RE)$. Both mass and energy are conserved by the flow (see Proposition \ref{conslaws}).\p
A prominent role in the variational characterization of stationary
states is played by the {\em action functional} 
\be \label{som} 
S_\om : = \ E + \f \om 2 M \ ,
\ee
or explicitly
\be
S_\om (\psi) = \  \f 1 2 \| \psi^\prime \|^2 + \f \om 2   \| \psi
\|^2 - \f 1 {2\gamma}  | \psi (0+) - \psi (0-) |^2 - \f {\lambda} {2 \mu +
  2} \| \psi \|_{2 \mu + 2}^{2 \mu + 2}\ .
\ee
Again, $S_{\omega}\in C^1(Q, \RE)$ and for every $\psi\in Q$
$$
S_\omega^\prime (\psi) = H_\gamma \psi +\omega\psi -\lambda|\psi|^{2\mu}\psi \in Q^*\ .
$$
Stationary states $\psi_{\omega}$ satisfy $S_{\omega}^{\prime}(\psi_{\omega})=0\ .$\par
Moreover, it is useful to define the so-called {\em Nehari functional}
\be \label{iom} 
\begin{split}
I_\om (\psi) \  :&=  \langle S_{\om}^\prime(\psi),\psi\rangle \ =
  \| \psi^\prime \|^2 +  \om    \| \psi
\|^2 - \f 1 {\gamma}  | \psi (0+) - \psi (0-) |^2 -  {\lambda} 
   \| \psi \|_{2 \mu + 2}^{2 \mu + 2}\\ 
   & \  = \ \ 2 S_\om (\psi) -  \f
   {\lambda \mu} {\mu + 1} 
\| \psi \|^{2 \mu + 2}_{2 \mu + 2}  
.
\end{split}
\ee
It is immediately seen by its definition that the zero-level set of the Nehari
functional contains the stationary states associated to
the action $S_\om$. \p
The  action $S_\om$ restricted to the Nehari manifold gives the last auxiliary functional
\be \label{stilde} 
\widetilde S (\psi) : = \ 
\f {\lambda \mu} {2(\mu + 1)} 
\| \psi \|^{2 \mu + 2}_{2 \mu + 2}\  = \ S_\om(\psi) - \f 1 2 I_\om(\psi)\ .
\ee
In many physical contexts, the search for the ground states can be
formulated as follows:
\begin{prob} \label{natural}
Given $m > 0$,
find the minimum and  the minimizers
of the functional $E$ in the energy space $Q$ under
the constraint $M = m$.
\end{prob}
\p
Nevertheless, in the investigation of orbital stability of stationary
states  it proves useful to study another variational problem, namely
\begin{prob} \label{prob1}
Find the minimum and  the non vanishing minimizers
of the functional $S_\om$
in the energy space $Q$ under the constraint $I_\om \ = \ 0$.
\end{prob}
\p
This is the problem studied in the present paper. It can be rephrased
to the issue of characterizing, as explicitly as possible, the function
\be \label{dom}
d (\om) \ : = \  \inf \{S_\om (\psi),  
\psi \in Q \backslash \{ 0 \}, \ 
I_\om
(\psi) = 0 \}.
\ee
\p
Finally, studying problem \ref{prob1} is equivalent (see Section
\ref{sec:exmin})  to the 
\begin{prob} \label{prob2}
Find the minima and  the non vanishing minimizers
of the functional $\widetilde S$ in the energy
space $Q$ under the constraint $I_\om \leq 0$.
\end{prob}



Problems \ref{natural} and \ref{prob1} are related, but not in an
obvious way. Of course, when \ref{natural} has a solution $\psi$, via
Lagrange multiplier theory it turns out that there exists a real
multiplier $\omega$ such that $E^\prime(\psi)+\omega
M^\prime(\psi)=0$, which coincides with $S_{\omega}^\prime (\psi)=0$,
meaning that $\psi$ is a stationary point of $S_{\omega}$ and by
definition it belongs to $I_{\omega}$, so it is a solution of
\ref{prob1}. Nevertheless the two problems are not equivalent and a
complete analysis will be given elsewhere. See however the remarks at
the end of Section $4$ and Section $6$.

\section{\label{sec:wp}Well-posedness and conservation laws}

Here we treat the problem of the existence and uniqueness of the
solution to equation \eqref{nlsdelta'}. In the present paper we are
mainly interested in solution lying in the energy space $Q$ and in the
operator domain $D(H_\gamma)$, i.e. weak or strong solutions
respectively.
Let us stress that it is possible to obtain 
local well-posedness in
$L^2$ by proving suitable Strichartz estimates and then following the
traditional line (see e.g. \cite{[C]}); this route is followed for a general point
interaction and a cubic nonlinearity in \cite{[AN]}, to which we refer.\par\noindent
We begin with weak solutions and to this end, instead of equation \eqref{nlsdelta'} we study its integral form, i.e.
\be \label{intform}
\psi (t) \ = \ e^{-i H_\gamma t} \psi_0 + i \lambda \int_0^t  e^{-i
  H_\gamma (t-s)} | \psi (s) |^{2 \mu} \psi (s) \, ds ,
\ee
where the initial data $\psi_0$ belong to $Q$.


\n
For an exhaustive treatment of the problem given by
a general point interaction at the
origin and a cubic nonlinearity, see \cite{[AN]}. In particular, we
recall that 
the dual $Q^\star$ of the energy space $Q$, i.e. the space of the
bounded  linear functionals
on $Q$,
can be represented as
\be  
Q^\star \ = \ H^{-1} (\erre) \oplus {\rm{Span}} ( \delta(0+), \delta (0-)),
\ee
where 
the action of the functionals $\delta (0\pm)$ on a function
$\varphi \in Q$ reads
$$
\langle \delta (0\pm) , \varphi \rangle \ = \ \varphi (0\pm) .
$$

\n
As usual, exploiting formula \eqref{m-rep}, one can extend the  action
of $H_\gamma$ to the space
$Q$, with values in $Q^\star$, by
\be \label{staraction}
\langle H_\gamma \psi_1 , \psi_2 \rangle \ : = \ ( (H_\gamma + m)^{\f 1 2} \psi_1,
 (H_\gamma + m)^{\f 1 2} \psi_2 ) - m (\psi_1, \psi_2),
\ee
where $m > - \om_0$. The continuity of $H_\gamma \psi_1$ 
as a functional of the space $Q$ is immediately proved by
Cauchy-Schwarz inequality,
\eqref{staraction} and \eqref{m-rep}, that together give
\be \label{contq}
| \langle H_\gamma \psi_1 , \psi_2 \rangle | \ \leq \ C \| \psi_1 \|_Q
\| \psi_2 \|_Q.
\ee 


Now we are ready to state the following lemma.
\begin{lemma} \label{lemma31}
For any $\psi \in Q$
the identity
\begin{equation}
\label{extder}
\f d {dt} e^{-i H_\gamma t} \psi \ = \ - i H_\gamma
e^{-i H_\gamma t} \psi
\end{equation}
 holds in $Q^\star$. 
\end{lemma}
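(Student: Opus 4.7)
The plan is to prove the identity first on the dense subspace $D(H_\gamma)\subset Q$, where the standard theory of self-adjoint operators applies directly, and then extend by density, using the continuous embedding $L^2 \hookrightarrow Q^\star$ together with the boundedness of $H_\gamma: Q\to Q^\star$ expressed in \eqref{contq}. The key functional-analytic fact I will use is that, by the spectral theorem, $e^{-i H_\gamma t}$ commutes with $(H_\gamma+m)^{1/2}$ for any $m>-\om_0$; combined with \eqref{m-rep}, this implies that $e^{-i H_\gamma t}$ preserves $Q$ and is a strongly continuous one-parameter group of bounded operators on $Q$ (bounded uniformly on compact time intervals).

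First I would recall that for $\psi\in D(H_\gamma)$, Stone's theorem yields $\frac{d}{dt} e^{-iH_\gamma t}\psi = -i H_\gamma e^{-iH_\gamma t}\psi$ in the $L^2$-sense, hence in particular in $Q^\star$, since convergence in $L^2$ implies convergence in $Q^\star$. In integrated form,
\be \nnn
e^{-iH_\gamma(t+h)}\psi - e^{-iH_\gamma t}\psi \ = \ -i\int_t^{t+h} H_\gamma e^{-iH_\gamma s}\psi\,ds,
\ee
as an identity in $L^2$, and therefore in $Q^\star$.

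Second, given a general $\psi\in Q$, I would choose a sequence $\psi_n\in D(H_\gamma)$ converging to $\psi$ in $Q$ (for example via the Yosida-type regularization $\psi_n := (1+n^{-1}(H_\gamma+m))^{-1}\psi$, which converges to $\psi$ in the form norm by standard resolvent arguments). Writing the integrated identity above for each $\psi_n$ and passing to the limit as $n\to\infty$: the left-hand side converges in $Q\hookrightarrow Q^\star$ by strong continuity of $e^{-iH_\gamma s}$ on $Q$, while the integrand converges pointwise in $Q^\star$, uniformly on the compact interval $[t,t+h]$, because $e^{-iH_\gamma s}\psi_n\to e^{-iH_\gamma s}\psi$ in $Q$ and $H_\gamma:Q\to Q^\star$ is bounded by \eqref{contq}. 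Hence the integrated identity extends to all $\psi\in Q$, with values in $Q^\star$.

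Third, differentiating the integrated form at $h=0$ in the $Q^\star$-topology yields the claim, provided the integrand $s\mapsto H_\gamma e^{-iH_\gamma s}\psi$ is continuous from $\erre$ into $Q^\star$. This continuity follows from the composition of the strong continuity of $s\mapsto e^{-iH_\gamma s}\psi$ into $Q$ with the boundedness of $H_\gamma:Q\to Q^\star$. The main technical point, and the only one that requires care, is the density of $D(H_\gamma)$ in $Q$ together with the uniform $Q$-boundedness of the group on compact time intervals; once these are secured via the spectral representation \eqref{m-rep}, everything reduces to a clean limiting argument in $Q^\star$.
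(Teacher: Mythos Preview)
Your proof is correct, but it takes a genuinely different route from the paper's. The paper uses a duality argument: fixing $\psi\in Q$, it tests the difference quotient against $\xi\in D(H_\gamma)$ and moves the limit onto the test side via unitarity,
\[
\left\langle \tfrac{d}{dt}\, e^{-iH_\gamma t}\psi,\ \xi\right\rangle
= \lim_{h\to 0}\left(\psi,\ \tfrac{e^{iH_\gamma(t+h)}\xi - e^{iH_\gamma t}\xi}{h}\right)
= (\psi,\ iH_\gamma e^{iH_\gamma t}\xi)
= \langle -iH_\gamma e^{-iH_\gamma t}\psi,\ \xi\rangle,
\]
the last identification by \eqref{staraction}; then density of $D(H_\gamma)$ in $Q$ and \eqref{contq} finish the job. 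In effect the paper approximates the \emph{test function} $\xi$, whereas you approximate the \emph{state} $\psi$ by elements of $D(H_\gamma)$ and pass to the limit in the integrated (Duhamel-type) identity. The paper's argument is shorter and never needs the integrated form or the explicit strong continuity of the group on $Q$. Your approach, while more elaborate, has two pay-offs: it yields the derivative in the norm topology of $Q^\star$ (the paper's computation is pointwise in $\xi$, i.e.\ a weak-$*$ derivative), and it makes explicit that $e^{-iH_\gamma t}$ is a $C^0$-group on $Q$, a structural fact that is in any case needed in the subsequent well-posedness arguments of Section~\ref{sec:wp}.
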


\begin{proof}
The time derivative of the functional $e^{-i H_\gamma t} \psi$ is
defined in the weak sense, namely
$$
\left\langle \f d {dt} e^{-i
  H_\gamma t} \psi, \cdot \right\rangle: = \lim_{h \to 0} \f 1 h \left[ (e^{-i H_\gamma (t
  + h)} \psi, \cdot) - (e^{-i H_\gamma t} \psi, \cdot) \right].
$$
Now, fix
  $\xi$ in the operator domain $D (H_\gamma)$
  defined in \eqref{domgamma}. Then,
\begin{equation*} \nonumber
\left\langle \f d {dt} e^{-i
  H_\gamma t} \psi, \xi \right\rangle \ = \ \lim_{h \to 0} \left( \psi, \f {e^{i
  H_\gamma (t+h)} \xi - e^{i H t} \xi} h \right)\ = \ ( \psi, i H_\gamma e^{i H_\gamma
t} \xi) \ = \
\langle -i H_\gamma e^{-i H_\gamma t} \psi, \xi \rangle,
\end{equation*}
where we used \eqref{staraction}.

Then, the result can be extended to $\xi \in Q$
  by a density argument, and by \eqref{contq} we get the continuity of
  the functional $ \f d {dt} e^{-i
  H_\gamma t} \psi$ on $Q$, so the result is proven.
\end{proof}

\begin{cor} \label{diffform}
By \eqref{extder}, one immediately has that
the formulation \eqref{nlsdelta'} of the
Schr\"odinger equation holds in $Q^\star$.
\end{cor}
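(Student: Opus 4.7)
The plan is to differentiate the integral formulation \eqref{intform} in time, interpreted as an identity in $Q^\star$, and verify that the result matches \eqref{nlsdelta'}. Since Lemma \ref{lemma31} has already given us the time derivative of the free propagator $e^{-iH_\gamma t}$ applied to any element of $Q$, the corollary is essentially a combination of that lemma with the fundamental theorem of calculus applied to the Duhamel term.

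First, I would observe that the nonlinear term $|\psi|^{2\mu}\psi$ makes sense as an element of $Q^\star$: since $\psi(s) \in Q$ implies $\psi(s) \in L^\infty(\erre)$ by the Sobolev embedding used in Proposition 2.1, we have $|\psi(s)|^{2\mu}\psi(s) \in L^2(\erre) \hookrightarrow H^{-1}(\erre) \hookrightarrow Q^\star$, with continuous dependence on $s$. Then, differentiating \eqref{intform} in $Q^\star$ and testing against an arbitrary $\xi \in Q$, the first term gives
\[
\frac{d}{dt}\langle e^{-iH_\gamma t}\psi_0, \xi\rangle \ = \ \langle -iH_\gamma e^{-iH_\gamma t}\psi_0, \xi\rangle
\]
by \eqref{extder}. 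For the Duhamel term, applying the Leibniz rule yields a boundary contribution equal to $i\lambda |\psi(t)|^{2\mu}\psi(t)$ (paired with $\xi$) together with an integral of the time derivative of $e^{-iH_\gamma(t-s)}|\psi(s)|^{2\mu}\psi(s)$, which by \eqref{extder} equals $-iH_\gamma e^{-iH_\gamma(t-s)}|\psi(s)|^{2\mu}\psi(s)$.

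Summing these contributions and using linearity of $H_\gamma$ regarded as a bounded operator $Q\to Q^\star$ (a fact guaranteed by the bound \eqref{contq}), one recognizes the integrand and the free-evolution term as precisely $-iH_\gamma \psi(t)$, with $\psi(t)$ given by \eqref{intform}. Therefore
\[
\left\langle \frac{d}{dt}\psi(t), \xi\right\rangle \ = \ \langle -iH_\gamma \psi(t) + i\lambda|\psi(t)|^{2\mu}\psi(t), \xi\rangle \qquad \forall\, \xi \in Q,
\]
and multiplying by $i$ recovers \eqref{nlsdelta'} as an identity in $Q^\star$.

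The only delicate point I expect is the justification of the Leibniz rule in the $Q^\star$-valued setting: one needs continuity in $s$ of the map $s \mapsto e^{-iH_\gamma(t-s)}|\psi(s)|^{2\mu}\psi(s) \in Q^\star$, which follows from strong continuity of $e^{-iH_\gamma t}$ on $L^2$ (and hence weakly on $Q^\star$) combined with the continuity in $t$ of $\psi \in C([0,T];Q)$ provided by the well-posedness theory; the commutation of the $t$-derivative with the $s$-integral can then be checked componentwise by pairing with a fixed $\xi \in Q$, reducing the question to the usual scalar-valued Leibniz formula, which is the standard route.
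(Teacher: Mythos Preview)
Your argument is correct and matches the paper's intended route (the paper gives no detailed proof, simply declaring the corollary ``immediate'' from \eqref{extder}); you are filling in the standard Duhamel differentiation. One small imprecision: you check only that $|\psi(s)|^{2\mu}\psi(s)\in L^2\hookrightarrow Q^\star$, but to invoke \eqref{extder} on the Duhamel integrand you need $|\psi(s)|^{2\mu}\psi(s)\in Q$, since Lemma~\ref{lemma31} is stated for $\psi\in Q$. This stronger fact does hold (the nonlinearity preserves $Q$, as the paper observes in the proof of Proposition~\ref{loch2}), so your argument goes through once you upgrade that sentence.
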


In order to prove a well-posedness result we
recall from Section 4 in \cite{[AN]} that the  one-dimensional Gagliardo-Nirenberg
estimates holds for any $\psi$ in $Q$, i.e.
\begin{equation}
\label{gajardo}
\| \psi \|_{p} \ \leqslant \ C \| \psi^\prime \|^{\f 1 2 - \f 1 p}_{L^2}
\| \psi \|^{\f 1 2 + \f 1 p}_{L^2},\ \quad\quad \psi \in Q
\end{equation}
where the $C > 0$ is a positive constant which depends on the index $p$ only.

\n
Notice that from inequality \eqref{gajardo} one immediately obtains
the Sobolev-type estimate
\be \label{zobbolef}
\| \psi \|_{2 \mu + 2} \ \leq \ C \| \psi \|_Q.
\ee


\begin{prop}[Local well-posedness in $Q$]
\label{loch2}
Fixed $\psi_0 \in Q$, there exists $T > 0$ such that 
equation \eqref{intform} has a unique solution $\psi \in C^0 ([0,T),
Q)
\cap C^1 ([0,T), Q^\star)$.

\n
Moreover, eq. \eqref{intform} has a maximal solution $\psi^{\rm{max}}$
defined on an interval of the form $[0, T^\star)$, and the following ``blow-up
alternative''
holds: either $T^\star = \infty$ or
$$
\lim_{t \to T^\star} \| \psi^{\rm{max}} (t)\|_{Q}
\ = \ + \infty.
$$
\end{prop}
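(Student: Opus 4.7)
The plan is to set up a contraction on Duhamel's formula \eqref{intform} in the space $X_T := C^0([0,T], Q)$ equipped with norm $\|\psi\|_{X_T} := \sup_{t \in [0,T]} \|\psi(t)\|_Q$, and then extract the blow-up alternative by the usual continuation argument. The first ingredient is that the propagator $e^{-iH_\gamma t}$ is a uniformly bounded group on the energy space $Q$. Picking any $m > -\om_0$, identity \eqref{m-rep} shows that $F_\gamma(\cdot) + m\|\cdot\|^2$ is equivalent to $\|\cdot\|_Q^2$; since $(H_\gamma + m)^{1/2}$ commutes with $e^{-iH_\gamma t}$ by functional calculus and the propagator is unitary on $L^2$, $e^{-iH_\gamma t}$ preserves $Q$ with uniform operator norm in $t$, and the map $t \mapsto e^{-iH_\gamma t}\psi$ is continuous in $Q$.

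Next I would check that the nonlinear map $N(\psi) := |\psi|^{2\mu}\psi$ acts continuously from $Q$ into $Q$ with a locally Lipschitz bound. The pointwise inequality $\|\psi\|_\infty^2 \leq \|\psi\|\|\psi'\|$, already used in the continuity of $F_\gamma$, yields the embedding $Q \hookrightarrow L^\infty(\erre)$. Applying the identity $(|\psi|^{2\mu}\psi)' = (2\mu+1)|\psi|^{2\mu}\psi'$ separately on $\erre^+$ and $\erre^-$, one obtains $\|N(\psi)\|_Q \leq C\|\psi\|_\infty^{2\mu}\|\psi\|_Q \leq C\|\psi\|_Q^{2\mu+1}$, and from the elementary estimate $\bigl||a|^{2\mu}a - |b|^{2\mu}b\bigr| \leq C(|a|^{2\mu} + |b|^{2\mu})|a-b|$ one deduces $\|N(\psi_1) - N(\psi_2)\|_Q \leq C(\|\psi_1\|_Q + \|\psi_2\|_Q)^{2\mu}\|\psi_1 - \psi_2\|_Q$. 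With these two ingredients in hand, the map $\Phi(\psi)(t) := e^{-iH_\gamma t}\psi_0 + i\lambda\int_0^t e^{-iH_\gamma(t-s)}N(\psi(s))\,ds$ satisfies on the closed ball $\bar B_R \subset X_T$ the estimates $\|\Phi(\psi)\|_{X_T} \leq C_0\|\psi_0\|_Q + C_1 TR^{2\mu+1}$ and $\|\Phi(\psi_1) - \Phi(\psi_2)\|_{X_T} \leq C_1 TR^{2\mu}\|\psi_1 - \psi_2\|_{X_T}$. Choosing $R = 2C_0\|\psi_0\|_Q$ and $T$ sufficiently small depending only on $\|\psi_0\|_Q$, the Banach fixed point theorem produces a unique $\psi \in \bar B_R$ with $\Phi(\psi) = \psi$; differentiating Duhamel's identity and invoking Lemma \ref{lemma31} then yields $\partial_t \psi \in C^0([0,T), Q^\star)$.

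The maximal solution is then obtained by a standard extension: let $T^\star$ be the supremum of existence times. Since the length of the interval produced by the contraction depends only on the $Q$-norm of the initial datum, the assumption $T^\star < \infty$ combined with $\limsup_{t \to T^\star} \|\psi(t)\|_Q < \infty$ would allow one to restart the local construction from a time close to $T^\star$ and extend $\psi$ past $T^\star$, contradicting maximality; the blow-up alternative follows. The step that I expect to require most care is the verification that $N$ sends $Q$ into itself and is locally Lipschitz, because $Q$ is not $H^1(\erre)$ and its elements may jump at the origin. This turns out to be innocuous: the pointwise map $N$ preserves the jump and commutes with restriction to each half-line, so the standard $H^1$ computations can be carried out on $\erre^+$ and $\erre^-$ separately and then summed, giving the required bounds in the $Q$-norm.
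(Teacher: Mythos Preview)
Your argument is correct and follows essentially the same route as the paper: a Banach fixed-point argument on the Duhamel map in the energy space, using the $L^\infty$ embedding of $Q$ to bound the nonlinearity, followed by the standard continuation/blow-up alternative. The only cosmetic differences are that you work directly in $C^0([0,T],Q)$ (the paper sets up the contraction in $L^\infty([0,T),Q)$ and bootstraps to $C^0$ afterwards) and that you spell out the uniform $Q$-boundedness of $e^{-iH_\gamma t}$ via \eqref{m-rep}, which the paper uses implicitly. One small remark: the chain-rule identity $(|\psi|^{2\mu}\psi)'=(2\mu+1)|\psi|^{2\mu}\psi'$ is only exact for real-valued $\psi$; for complex $\psi$ the derivative picks up an extra term, but the pointwise bound $|(|\psi|^{2\mu}\psi)'|\le (2\mu+1)|\psi|^{2\mu}|\psi'|$ still holds, so your $Q$-estimate is unaffected.
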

\begin{proof}
We denote by
$
{\mathcal X}$ the space $L^\infty ([0,T), Q),$
endowed with the norm
$
\| \psi \|_{\mathcal X} \ : = \ \sup_{t \in [0,T)} \| \psi (t)
\|_{Q}.
$
Given $\psi_0 \in Q$, we define the map $G : {\mathcal X}
\longrightarrow {\mathcal X}$ as
$$
G \phi : = e^{- i H_\gamma \cdot} \psi_0 + i \lambda \int_0^\cdot
e^{- i H_\gamma (\cdot - s)} | \phi (s) |^{2\mu} \phi (s) \, ds.
$$
Notice first that the nonlinearity preserves the space $Q$. Indeed,
referring to the decomposition \eqref{domgamma_2},
since both $\psi_\pm$ belong to $H^1 (\erre)$, the functions
$|\psi_\pm|^2 \psi_\pm$ belong to $H^1 (\erre)$ too, and so the energy
space
is preserved.

Now, estimate \eqref{gajardo} with $p = \infty$
yields
$$
\| | \phi (s) |^{2 \mu} \phi (s) \|_{Q} \ \leq \ C \| \phi (s) \|_{Q}^{2 \mu
  + 1},
$$
so
\begin{equation}
\label{contraz1}
\begin{split}
\| G \phi \|_{\mathcal X} \ \leq \ & \| \psi_0 \|_{Q}
  + C \int_0^T
\| \phi (s) \|_{Q}^{2 \mu + 1} \, ds \
\leq \ \| \psi_0 \|_{Q} + C T \| \phi \|_{\mathcal X}^{2 \mu + 1}.
\end{split}
\end{equation}
Analogously, given $\phi, \xi \in Q$,
\begin{equation}
\label{contraz2}
\begin{split}
\| G \phi - G \xi \|_{\mathcal X} \ \leqslant \ & C T
\left( \| \phi \|_{\mathcal X}^{2 \mu} + \| \xi \|_{\mathcal X}^{2\mu} \right)
\| \phi - \xi \|_{\mathcal X}\,.
\end{split}
\end{equation}
We point out that the constant $C$ appearing in \eqref{contraz1} and
\eqref{contraz2} is independent of $\psi_0$, $\phi$, and $\xi$.
Now let us restrict the map $G$ to elements $\phi$ such that $\| \phi
\|_{\mathcal X} \leqslant 2 \| \psi_0 \|_{Q}$.
    From \eqref{contraz1} and \eqref{contraz2}, if
$T$ is chosen to be strictly less than $(8C \| \psi_0 \|_{Q}^2)^{-1}$, then
 $G$ is a contraction of the ball in ${\mathcal X}$ of centre zero and
 radius
$ 2 \| \psi_0 \|_Q$, and so,
by the contraction lemma,
there exists a unique solution to \eqref{intform} in the
time
interval $[0, T)$. By a standard one-step bootstrap argument one
has that the solution actually belongs to $C^0 ([0,T),
Q)$ and,
due to the validity of \eqref{nlsdelta'} in the space
$Q^\star$ (see Corollary \ref{diffform}), we immediately have that the solution
$\psi$ actually belongs to $C^0 ([0,T), Q) \cap
C^1 ([0,T),Q^\star)$.

The proof of the
existence of a maximal solution is standard, while
the blow-up alternative  is a consequence of the fact that,
whenever the $Q$-norm of the solution is
finite, it is possible to extend it for a further time by the same
contraction
argument.
\end{proof}

The next step consists in the proof of the conservation laws.
\begin{prop} \label{conslaws}
For any solution $\psi \in C^0 ([0,T), Q)$
to
the equation \eqref{intform}, the following conservation laws hold at
any time $t$:
\begin{equation*}
\| \psi (t) \| \ = \ \| \psi_0 \|, \qquad
{E} ( \psi (t) ) \ = \ {E} ( \psi_0 ),
\end{equation*}
where the symbol $E$ denotes the energy functional introduced in 
\eqref{energy}.
\end{prop}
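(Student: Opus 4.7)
The plan is to verify both conservation identities first at the level of strong solutions, where classical time differentiation is legitimate, and then to propagate them to arbitrary $Q$-valued solutions by an approximation argument.

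For $\psi_0 \in D(H_\gamma)$, I would first run the contraction argument of Proposition \ref{loch2} in the graph norm of $H_\gamma$ (in parallel with the $Q$-argument) to produce a strong solution $\psi \in C^0([0,T), D(H_\gamma)) \cap C^1([0,T), L^2(\erre))$. The delicate point here is that the nonlinearity must send $D(H_\gamma)$ to $L^2$ in a way compatible with the boundary conditions in \eqref{domgamma}; using the decomposition $\psi = \chi_+\psi_+ + \chi_-\psi_-$ with $\psi_\pm$ even in $H^2(\erre \setminus \{0\})\cap H^1(\erre)$, one checks that $|\psi|^{2\mu}\psi \in L^2$ and that $-i H_\gamma\psi + i\lambda|\psi|^{2\mu}\psi \in L^2$, which is all that is needed for $\partial_t\psi \in C^0([0,T), L^2)$.

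For such a regular solution the two identities reduce to classical $L^2$ computations. For the mass,
\[
\tfrac{d}{dt}\|\psi(t)\|^2 = 2\,\mathrm{Re}\bigl(\partial_t\psi, \psi\bigr) = 2\,\mathrm{Re}\bigl(-iH_\gamma\psi + i\lambda|\psi|^{2\mu}\psi, \psi\bigr) = 0,
\]
since $(H_\gamma\psi,\psi) = F_\gamma(\psi)$ and $(|\psi|^{2\mu}\psi,\psi) = \|\psi\|_{2\mu+2}^{2\mu+2}$ are real. For the energy, noting that $E'(\psi) = H_\gamma\psi - \lambda|\psi|^{2\mu}\psi$ lies in $L^2$ whenever $\psi \in D(H_\gamma)$ and that $\partial_t\psi = -iE'(\psi)$,
\[
\tfrac{d}{dt}E(\psi(t)) = \mathrm{Re}\bigl(E'(\psi), \partial_t\psi\bigr) = \mathrm{Re}\bigl(E'(\psi), -iE'(\psi)\bigr) = 0.
\]

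To reach a general datum $\psi_0 \in Q$, I would approximate it by $\psi_0^n \in D(H_\gamma)$ with $\psi_0^n \to \psi_0$ in $Q$. The Lipschitz estimates \eqref{contraz1}--\eqref{contraz2} depend only on the $Q$-norm, so the corresponding solutions $\psi^n$ exist on a common interval $[0,T)$ and $\psi^n \to \psi$ in $C^0([0,T), Q)$ by the contraction argument. Both functionals are continuous on $Q$: for $M$ this is immediate, while for $E$ it follows from \eqref{zobbolef} for the power term and from the trace estimate used in the proof of continuity of $F_\gamma$ for the pointwise term. Passing to the limit in $M(\psi^n(t)) = M(\psi_0^n)$ and $E(\psi^n(t)) = E(\psi_0^n)$ gives the identities on the time interval common to all $\psi^n$ and to $\psi$; the blow-up alternative of Proposition \ref{loch2}, together with a standard open-closed argument in $t$, then extends them to the entire existence interval. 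I expect the main obstacle to be the first step, namely verifying enough regularity of the strong solution at $x = 0$ so that the formal $L^2$ pairings above are genuinely justified; should the straightforward invariance argument for $D(H_\gamma)$ fall short, the fallback is to regularize in time by $\psi \mapsto \rho_\e * \psi$, carry out the computation for the smoothed solution using the continuity of the duality pairing $Q^\star \times Q \to \erre$, and then send $\e \to 0$.
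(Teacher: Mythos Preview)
Your argument is correct in outline and follows the classical route (as in Cazenave's book): prove conservation for strong solutions in $D(H_\gamma)$, then pass to the limit by density and continuous dependence. The only point that deserves more care than you indicate is the first step: the nonlinearity $\psi\mapsto|\psi|^{2\mu}\psi$ does \emph{not} preserve $D(H_\gamma)$ in general (the condition $\psi'(0+)=\psi'(0-)$ becomes $(2\mu+1)|\psi(0+)|^{2\mu}\psi'(0+)=(2\mu+1)|\psi(0-)|^{2\mu}\psi'(0-)$, which fails unless $|\psi(0+)|=|\psi(0-)|$), so a contraction argument run directly in the graph norm of $H_\gamma$ does not close. The standard workaround---differentiating the Duhamel formula in time and showing $\partial_t\psi\in C^0([0,T),L^2)$, whence $H_\gamma\psi\in C^0([0,T),L^2)$---does the job, and this is what Theorem~\ref{int-strong} (stated after this proposition) asserts with a reference to \cite{[AN]}.

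The paper takes a genuinely different and more economical path: it never leaves the weak setting. Using Lemma~\ref{lemma31} and Corollary~\ref{diffform}, the equation holds in $Q^\star$, and since $\psi(t)\in Q$ the duality pairing $\langle\cdot,\cdot\rangle$ between $Q^\star$ and $Q$ is available. For the mass, one differentiates $\|\psi(t)\|^2$ directly as a duality pairing and uses that $\langle\psi,H_\gamma\psi\rangle=F_\gamma(\psi)$ is real. For the energy, one shows by hand that $t\mapsto\langle\psi(t),H_\gamma\psi(t)\rangle$ and $t\mapsto\|\psi(t)\|_{2\mu+2}^{2\mu+2}$ are differentiable (via difference quotients and the weak differentiability furnished by Corollary~\ref{diffform}), and the two derivatives cancel. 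This avoids both the strong-solution theory and the approximation/limit step altogether; your approach, by contrast, is logically independent of the $Q^\star$ machinery and would work equally well in contexts where the duality computation is less transparent.
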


\begin{proof}
The conservation of the $L^2$-norm can be immediately obtained 
using Lemma \ref{lemma31} and Corollary \ref{diffform}. So,
$$
\f d {dt} \| \psi(t) \|^2 \ = \ 2 \, {\rm{Re}} \, \left\langle \psi (t) ,
\f d {dt} \psi(t) \right\rangle \ = \ 2 \, {\rm{Im}} \, \langle \psi(t) ,
H_\gamma \psi(t) \rangle \ = \ 0
$$
by the self-adjointness of $H_\gamma$.
In order to prove the conservation of the energy, notice that
$\langle \psi(t), H_\gamma \psi(t) \rangle$ is differentiable as a function of time. Indeed,
\begin{equation*}
\begin{split} &
\f 1 h \left[ \langle \psi(t+h), H_\gamma \psi(t+h) \rangle -
\langle \psi(t), H_\gamma \psi(t) \rangle \right] 
\ = \ \left\langle \f{\psi(t+h) - \psi(t)} h, H_\gamma \psi(t+h)
\right\rangle + \left\langle H_\gamma \psi(t) ,\f{\psi(t+h) - \psi(t)} h
\right\rangle
\end{split}
\end{equation*}
and then, passing to the limit $h\to0$,
\begin{equation}
\label{previous}
\f d {dt} (\psi(t) ,
H_\gamma \psi(t))\ = \ 2 \, {\rm{Re}} \, \left\langle \f d {dt} \psi(t) ,
H_\gamma \psi(t) \right\rangle \ = \ 2 \, {\rm{Im}} \, \langle | \psi(t) |^2
\psi(t) , H_\gamma \psi(t) \rangle,
\end{equation}
where we used the self-adjointness
of $H_\gamma$ and corollary \ref{diffform}.
Furthermore,
\begin{equation}
\label{naechst}
\f d {dt} (\psi(t) ,
| \psi(t) |^{2\mu} \psi(t)) \ = \ \f d {dt} (\overline{\psi^\mu(t)}
\psi (t) , \overline{\psi^\mu(t)} \psi(t))
 \ = \ ( 2 \mu + 2) \, {\rm{Im}} \,
\langle | \psi(t) |^{2\mu}
\psi(t) , H_\gamma \psi(t) \rangle.
\end{equation}
From \eqref{previous} and \eqref{naechst} one then obtains
$$
\f d {dt} {E} (\psi(t)) \ = \ \f 1 2 \f d {dt} \langle \psi(t) ,
H_\gamma \psi(t) \rangle - \f 1 {2 \mu + 2} \f d {dt} (\psi(t) ,
| \psi(t) |^{2\mu} \psi(t)) \ = \ 0
$$
and the proposition is proved.
\end{proof}
\begin{cor}
For $\mu < 2$, all
solutions to \eqref{intform} are globally defined in time.
\end{cor}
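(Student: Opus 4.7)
The plan is to use the blow-up alternative in Proposition \ref{loch2}: it suffices to show that $\|\psi(t)\|_Q$ cannot blow up in finite time, which, since $\|\psi(t)\|_Q^2 = \|\psi(t)\|^2 + \|\psi'(t)\|^2$ and $\|\psi(t)\|$ is conserved by Proposition \ref{conslaws}, reduces to obtaining an a priori bound on $\|\psi'(t)\|^2$ on every bounded interval.

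Energy conservation $E(\psi(t)) = E(\psi_0)$ rearranges to
\be \nnn
\f{1}{2}\| \psi'(t)\|^2 \ = \ E(\psi_0) \ + \ \f{1}{2\gamma}|\psi(t,0^+) - \psi(t,0^-)|^2 \ + \ \f{\lambda}{2\mu+2}\|\psi(t)\|_{2\mu+2}^{2\mu+2},
\ee
so I need to control the two positive terms on the right by a small fraction of $\|\psi'(t)\|^2$ plus a constant depending only on the initial data. For the pointwise term, the estimate $|\psi(0^\pm)|^2 \leq \|\psi_\pm\|\,\|\psi_\pm'\|$ used in the proof of continuity of $F_\gamma$, combined with Young's inequality and mass conservation, gives for every $\eta>0$ a bound of the form $|\psi(t,0^+) - \psi(t,0^-)|^2 \leq \eta \|\psi'(t)\|^2 + C_\eta \|\psi_0\|^2$.

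For the nonlinear term I invoke the Gagliardo--Nirenberg estimate \eqref{gajardo} with $p = 2\mu+2$, which yields
\be \nnn
\|\psi(t)\|_{2\mu+2}^{2\mu+2} \ \leq \ C\,\|\psi'(t)\|^{\mu}\,\|\psi(t)\|^{\mu+2} \ = \ C\,\|\psi'(t)\|^{\mu}\,\|\psi_0\|^{\mu+2}.
\ee
Here is where the hypothesis $\mu < 2$ enters decisively: the exponent $\mu$ on $\|\psi'(t)\|$ is strictly less than $2$, so a second application of Young's inequality gives $\|\psi'(t)\|^{\mu} \leq \eta \|\psi'(t)\|^2 + C'_\eta$ with $C'_\eta$ depending only on $\eta$ and $\mu$. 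This is the step that would fail in the critical/supercritical range $\mu \geq 2$.

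Inserting both estimates into the energy identity and choosing $\eta$ small enough to absorb the $\|\psi'(t)\|^2$ terms into the left-hand side, I obtain an inequality of the form $\|\psi'(t)\|^2 \leq K$ where $K$ depends only on $E(\psi_0)$, $\|\psi_0\|$, $\gamma$, $\lambda$, $\mu$. Combined with mass conservation this gives $\|\psi(t)\|_Q \leq K'$ uniformly in $t$ on any interval of existence, so by the blow-up alternative the maximal solution is global. The main subtlety is simply the bookkeeping with the subcritical exponent; no new analytical input is needed beyond Gagliardo--Nirenberg, the Sobolev trace-type bound at the origin, and the two conservation laws.
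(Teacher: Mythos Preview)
Your argument is correct and follows essentially the same route as the paper: energy conservation, Gagliardo--Nirenberg to make the nonlinear term subcritical in $\|\psi'(t)\|$ when $\mu<2$, a trace-type bound for the jump term, and then the blow-up alternative. The only cosmetic difference is that the paper applies \eqref{gajardo} with $p=\infty$ (writing the resulting lower bound on $E$ as a quadratic inequality in $\|\psi'(t)\|$), whereas you apply it with $p=2\mu+2$ and make the use of Young's inequality explicit; both lead to the same exponent $\|\psi'(t)\|^{\mu}$ on the nonlinear term and hence to the same conclusion.
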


\begin{proof}
By estimate \eqref{gajardo} with $p = \infty$ and conservation of
the $L^2$-norm, there exists
a constant $M$, that depends on $\psi_0$ only, such that
$$
{E} (\psi_0) \ = \ {E} (\psi(t)) \ \geq \
\f 1 2 \| \psi^\prime (t)\|^2 - M \| \psi^\prime (t)\|
$$
Therefore a uniform (in $t$) bound on
$ \| \psi^\prime (t)\|^2$ is obtained. As a consequence,
one has that no blow-up in finite
time can occur, and therefore, by the blow-up alternative proved in
theorem \eqref{loch2}, the
solution
is global in time.
\end{proof}
\vskip10pt
Using a well established analysis which we will not repeat here (see again \cite{[AN]} for a detailed study in the case $\mu=1)$), one can get well posedness for strong solutions of \eqref{intform}, i.e. in the operator domain.
\begin{theorem}[Local and global well posedness in the operator domain] 
\label{int-strong}
\par\noindent
For any $\mu>0$ and initial datum $\psi_0 \in D(H_\gamma)$ there exists $T \in (0, + \infty)$ such that the
equation \eqref{intform}
has a unique solution $\psi \in C ( [0,T), D(H)) \cap C^1 ( [0,T), 
L^2 (\erre))$. Moreover, equation \eqref{intform} has a maximal solution $\psi (t)$
defined on the interval $[0, T^\star)$. For such a solution, the 
following alternative holds: either $T^\star = + \infty$ or 
$$ \lim_{t \to T^\star} \| \psi (t) \|_{H} \ = \ + \infty.$$ 
Moreover, if $0<\mu<2$ and $\psi_0 \in D(H_\gamma)$ the
equation \eqref{intform} has a unique global solution $\psi \in C ( \erre , D(H_\gamma)) \cap C^1 ( \erre , 
L^2 (\erre))$. \par\noindent
Finally, mass and energy are conserved quantities for both weak and strong solutions.

\end{theorem}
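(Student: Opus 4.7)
The plan is to treat the local existence in the operator domain by a contraction argument parallel to Proposition \ref{loch2}, but carried out at the level of the graph norm, and then upgrade to the claimed regularity and to global existence by combining conservation of energy with the subcritical Gagliardo--Nirenberg inequality. The semigroup $e^{-iH_\gamma t}$ preserves $D(H_\gamma)$ and commutes with $H_\gamma$, so the Duhamel map of Proposition \ref{loch2} is formally well-defined on $C([0,T),D(H_\gamma))$ once $\|H_\gamma(|\phi|^{2\mu}\phi)\|$ is controlled. The subtle point is that $|\phi|^{2\mu}\phi$ is \emph{not} a priori in $D(H_\gamma)$ because the map $z\mapsto|z|^{2\mu}z$ need not preserve the jump condition $\psi(0^+)-\psi(0^-)=-\gamma\psi'(0^\pm)$. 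I would therefore work with the time-differentiated equation instead.

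First I would invoke Proposition \ref{loch2} to obtain a unique local $Q$-solution $\psi\in C^0([0,T),Q)\cap C^1([0,T),Q^\star)$ for the initial datum $\psi_0\in D(H_\gamma)\subset Q$. Setting $v:=\partial_t\psi$ and differentiating \eqref{intform} formally, $v$ satisfies the linear Schr\"odinger equation with bounded, time-dependent potential
\begin{equation*}
i\partial_t v \,=\, H_\gamma v - \lambda(\mu+1)|\psi|^{2\mu}v - \lambda\mu|\psi|^{2\mu-2}\psi^2\bar v, \qquad v(0) \,=\, -iH_\gamma\psi_0 - i\lambda|\psi_0|^{2\mu}\psi_0 \,\in\, L^2(\erre),
\end{equation*}
where the $L^2$ membership of $v(0)$ uses $\psi_0\in D(H_\gamma)\hookrightarrow L^\infty$ together with $H_\gamma\psi_0\in L^2$. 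Because $\psi(t)\in Q\hookrightarrow L^\infty$ via \eqref{gajardo}, the coefficients are uniformly bounded on short intervals, and a standard Picard iteration produces a unique $v\in C^0([0,T'),L^2)$ on a possibly shorter interval $[0,T')\subseteq[0,T)$. Going back to the equation, the identity $H_\gamma\psi=-iv-\lambda|\psi|^{2\mu}\psi$ holds in $Q^\star$; both right-hand terms are in $L^2$, so $H_\gamma\psi(t)\in L^2$, which forces $\psi(t)\in D(H_\gamma)$ with $t$-continuity in the graph norm.

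Uniqueness and the blow-up alternative then follow by the usual bootstrap: as long as $\|\psi(t)\|_H$ remains finite one can restart the contraction with initial time $t$, so the maximal strong solution exists on $[0,T^\star)$ with the stated dichotomy. For the global existence claim when $\mu<2$, Proposition \ref{conslaws} together with \eqref{gajardo} yields a uniform a priori bound on $\|\psi(t)\|_Q$ for the weak solution; persistence of regularity in the strong sense then follows because the potential $|\psi|^{2\mu}$ in the linear equation for $v$ stays uniformly in $L^\infty$ for all time, which precludes $L^2$-blow-up of $v$ and hence $\|H_\gamma\psi(t)\|$-blow-up. Finally, conservation of mass and energy for strong solutions is inherited from the $Q$-case, since $D(H_\gamma)\subset Q$ and the solutions of the two formulations coincide by uniqueness.

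The main obstacle is exactly the incompatibility of the pointwise nonlinearity with the singular boundary conditions defining $D(H_\gamma)$: a direct Picard iteration in the graph norm fails because the map $\phi\mapsto|\phi|^{2\mu}\phi$ does not preserve the domain. The time-derivative reformulation sidesteps this, replacing ``$H_\gamma$ applied to the nonlinearity'' by the weak temporal derivative of $\psi$, which only requires $L^2$-estimates for the linearized problem and can be closed using the already-established $Q$-solution.
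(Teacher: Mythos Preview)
The paper does not actually prove this theorem: it simply writes ``Using a well established analysis which we will not repeat here (see again \cite{[AN]} for a detailed study in the case $\mu=1$), one can get well posedness for strong solutions,'' and states the result. Your approach---pass to the time-differentiated equation for $v=\partial_t\psi$, which is linear with $L^\infty$ coefficients once the $Q$-solution from Proposition~\ref{loch2} is in hand, and then recover $\psi(t)\in D(H_\gamma)$ from the identity $H_\gamma\psi=i\partial_t\psi+\lambda|\psi|^{2\mu}\psi\in L^2$ together with the characterization of the operator domain via the form---is precisely the standard one (it is the method of \cite{[AN]} for the cubic case and of \cite{[C]} for regular Schr\"odinger operators). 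You have also correctly identified the obstruction to a direct graph-norm contraction: the map $\phi\mapsto|\phi|^{2\mu}\phi$ does not preserve the $\delta'$ boundary conditions at the origin.

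Two cosmetic remarks. First, your displayed formulas for $v(0)$ and for $H_\gamma\psi$ carry sign slips; from $i\partial_t\psi=H_\gamma\psi-\lambda|\psi|^{2\mu}\psi$ one gets $v(0)=-iH_\gamma\psi_0+i\lambda|\psi_0|^{2\mu}\psi_0$ and $H_\gamma\psi=iv+\lambda|\psi|^{2\mu}\psi$. Second, for $0<\mu<1$ the coefficient $|\psi|^{2\mu-2}\psi^2$ should be read as $|\psi|^{2\mu}e^{2i\arg\psi}$, which is bounded by $\|\psi\|_\infty^{2\mu}$; the Gronwall-type $L^2$ estimate on $v$ then closes exactly as you indicate. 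Neither point affects the validity of the argument.
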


\section{Existence of a ground state} \label{sec:exmin}
In this section we show the existence of a solution to
problem \ref{prob1} for any  $\om > \om_0$.
More precisely, we prove the following theorem:

\begin{theorem} \label{teo:exmin}
Let $\omega > \frac4{\gamma^2}\ .$ There exists $\psi \in Q 
\backslash \{ 0 \}$, with $I_\om (\psi) = 0$, 
that solves problem \ref{prob2}, namely
\be \nnn 
S_\om (\psi) \ = \ d (\om).
\ee
\end{theorem}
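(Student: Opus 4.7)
The plan is to minimize over the ``relaxed'' Nehari set $\{I_\omega \leq 0\}$ by the direct method, using a Brezis--Lieb decomposition to pass to the limit, and then to recover a minimizer on the actual Nehari manifold.

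First I would set up the variational framework. Since $\omega>\omega_0=4/\gamma^2$ and the linear operator satisfies $H_\gamma\geq-\omega_0$, the quadratic form $A(\psi):=F_\gamma(\psi)+\omega\|\psi\|^2$ is coercive: $A(\psi)\geq c_\omega\|\psi\|_Q^2$ for some $c_\omega>0$, because $\|(H_\gamma+\omega)^{1/2}\psi\|$ defines an equivalent norm on the form domain $Q$. For any $\psi\neq 0$ the map $t\mapsto I_\omega(t\psi)=t^2A(\psi)-t^{2\mu+2}\lambda\|\psi\|_{2\mu+2}^{2\mu+2}$ has exactly one positive zero, so the Nehari manifold is nonempty; moreover, on this manifold $A(\psi)=\lambda\|\psi\|_{2\mu+2}^{2\mu+2}$ together with \eqref{zobbolef} yields $\|\psi\|_Q\geq\delta>0$, so $d(\omega)>0$. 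I would also record the equivalence of Problems \ref{prob1} and \ref{prob2}: if $I_\omega(\psi)<0$, the coercivity of $A$ forces the unique Nehari-scaling $t\in(0,1)$ to satisfy $\tilde S(t\psi)=t^{2\mu+2}\tilde S(\psi)<\tilde S(\psi)$, so $\inf\{\tilde S:\,I_\omega\leq 0\}=\inf\{\tilde S:\,I_\omega=0\}=d(\omega)$.

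Next I would take a minimizing sequence $\{\psi_n\}$ with $I_\omega(\psi_n)=0$ and $S_\omega(\psi_n)\to d(\omega)$. Boundedness in $Q$ is immediate from $A(\psi_n)=(2(\mu+1)/\mu)S_\omega(\psi_n)$ and coercivity of $A$, so, up to a subsequence, $\psi_n\rightharpoonup\psi^*$ in $Q$. The point evaluations $\psi\mapsto\psi(0^\pm)$ are bounded linear functionals on $Q$, hence $\psi_n(0^\pm)\to\psi^*(0^\pm)$. Writing $r_n=\psi_n-\psi^*$ and using the Hilbert orthogonality in $L^2$ and in the $\|\cdot\|_Q$ inner product together with the Brezis--Lieb lemma for $\|\cdot\|_{2\mu+2}^{2\mu+2}$, I obtain
\[
I_\omega(\psi_n)\,=\,I_\omega(\psi^*)+\widetilde I_\omega(r_n)+o(1),\qquad \tilde S(\psi_n)\,=\,\tilde S(\psi^*)+\tilde S(r_n)+o(1),
\]
where $\widetilde I_\omega$ denotes the free (no defect) Nehari functional. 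The crucial step is to rule out $\psi^*=0$: in that case $\widetilde I_\omega(r_n)\to 0$ and $\tilde S(r_n)\to d(\omega)$, so $\{r_n\}$ is a minimizing sequence for the free NLS problem, which would give $d(\omega)\geq d_\infty(\omega)$ (the free-soliton action level). To defeat this, I would use the strict inequality $d(\omega)<d_\infty(\omega)$: the free soliton $\phi_\omega^0$ centered at the origin is continuous, so it lies on the Nehari manifold of the present problem with $S_\omega(\phi_\omega^0)=d_\infty(\omega)$; a small odd perturbation (for instance a multiple of the linear eigenfunction $\psi_\gamma$) activates the attractive pointwise term $-\gamma^{-1}|\psi(0^+)-\psi(0^-)|^2$ while the $L^{2\mu+2}$ norm and the even parts of $A$ change only at second order, so the Nehari-rescaled action strictly decreases. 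This yields $d(\omega)<d_\infty(\omega)$ and forces $\psi^*\neq 0$.

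With $\psi^*\neq 0$ in hand, weak lower semicontinuity gives $I_\omega(\psi^*)\leq 0$, so there is $t^*\in(0,1]$ with $I_\omega(t^*\psi^*)=0$; the chain of inequalities
\[
d(\omega)\leq S_\omega(t^*\psi^*)=\tilde S(t^*\psi^*)=(t^*)^{2\mu+2}\tilde S(\psi^*)\leq\tilde S(\psi^*)\leq\liminf_n\tilde S(\psi_n)=d(\omega)
\]
collapses to equalities, showing that $\psi:=t^*\psi^*$ realizes the minimum (and incidentally $t^*=1$). Finally, the Lagrange multiplier theorem yields $S_\omega'(\psi)=\Lambda\,I_\omega'(\psi)$ in $Q^*$; pairing with $\psi$ and using $\langle I_\omega'(\psi),\psi\rangle=-2\mu\lambda\|\psi\|_{2\mu+2}^{2\mu+2}\neq 0$ (obtained by differentiating $I_\omega(t\psi)$ at $t=1$) forces $\Lambda=0$, so $\psi$ solves \eqref{stat} weakly; since $\omega\psi-\lambda|\psi|^{2\mu}\psi\in L^2$, standard elliptic regularity places $\psi\in D(H_\gamma)$. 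The main obstacle in this plan is the nonvanishing step, which is precisely where the attractive nature of the $\delta'$ interaction enters quantitatively through the strict inequality $d(\omega)<d_\infty(\omega)$.
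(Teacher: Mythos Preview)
Your overall scheme---direct method on the relaxed Nehari set, Brezis--Lieb splitting, and a strict energy inequality to rule out vanishing---is exactly the paper's route. However, the nonvanishing step contains a genuine gap.

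The ``free NLS problem'' that the remainder $r_n$ approaches is posed on $Q=H^1(\erre^+)\oplus H^1(\erre^-)$, not on $H^1(\erre)$. In $Q$ the half--soliton $\chi_+\phi_\om^0$ lies on the free Nehari manifold (each term in $I_\om^0$ is halved) and has
\[
\widetilde S(\chi_+\phi_\om^0)=\tfrac12\,\widetilde S(\phi_\om^0),
\]
so the free infimum in $Q$ is the \emph{half}--soliton level $d^0(\om)=\widetilde S(\chi_+\phi_\om^0)$, not the full--soliton level (this is the content of Lemma~\ref{min0}). Consequently, when $\psi^*=0$ you only obtain $d(\om)\ge d^0(\om)$. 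Your competitor $\phi_\om^0+\eps\psi_\gamma$ starts at action $S_\om(\phi_\om^0)=2d^0(\om)$, and the odd perturbation merely shows $d(\om)<2d^0(\om)$, which does not contradict $d(\om)\ge d^0(\om)$. The fix the paper uses (Lemma~\ref{ebbravo}) is to take the half--soliton itself as a test function: since $(\chi_+\phi_\om^0)(0^+)\neq 0$ and $(\chi_+\phi_\om^0)(0^-)=0$, the attractive defect term is already activated, giving $I_\om(\chi_+\phi_\om^0)<I_\om^0(\chi_+\phi_\om^0)=0$; Nehari rescaling by $\alpha<1$ then yields $d(\om)\le \alpha^{2\mu+2}\widetilde S(\chi_+\phi_\om^0)<d^0(\om)$, and the contradiction goes through.

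A secondary point: ``weak lower semicontinuity gives $I_\om(\psi^*)\le 0$'' is not correct as stated, since the term $-\lambda\|\psi\|_{2\mu+2}^{2\mu+2}$ is weakly \emph{upper} semicontinuous. You should instead argue from your own splitting: if $I_\om(\psi^*)>0$ then $I_\om(r_n)\to -I_\om(\psi^*)<0$ (the defect term also obeys Brezis--Lieb because $r_n(0^\pm)\to 0$), whence eventually $\widetilde S(r_n)\ge d(\om)$; but $\widetilde S(r_n)\to d(\om)-\widetilde S(\psi^*)<d(\om)$ since $\psi^*\neq 0$, a contradiction.
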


\n
In order to prove theorem \ref{teo:exmin} we need four preliminary lemmas.
In the first lemma we show that problems \ref{prob1} and \ref{prob2}
share the same solutions.

\begin{lemma} \label{lemmauno}
For the functionals $S_\om$ and $\widetilde S$, defined respectively
in \eqref{som} and \eqref{stilde}, the following equalities hold:
\be \label{equivalenza}
d(\om) \ : = \ \inf \, \{ S_\om (\psi), \, \psi \in Q \backslash
\{0\}, \,  I_\om (\psi) = 0 \} \ = \ 
\inf \{ \widetilde S (\psi), \, \psi \in Q\backslash
\{0\} , \, I_\om (\psi) \leq 0 \}.
\ee
Furthermore, a function $\phi \in Q\backslash
\{0\}$ satisfies $\widetilde S (\phi) = d
(\om)$ and $I_\om (\phi) \leq 0$ if and only if $S_\om (\phi) = d
(\om)$ and $I_\om (\phi) = 0$.
\end{lemma}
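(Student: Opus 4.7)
The plan is to reduce both equalities to a single scaling argument that, given any $\psi\in Q\setminus\{0\}$ with $I_\om(\psi)\le 0$, produces a multiple $t_0\psi$ with $I_\om(t_0\psi)=0$ and $\widetilde S(t_0\psi)\le\widetilde S(\psi)$. Given the identity $\widetilde S = S_\om - \tfrac12 I_\om$ from \eqref{stilde}, this automatically takes care of comparing infima.

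The one nontrivial point is that the scaling has to respect positivity of the quadratic part. Set
\[
A(\psi):=\|\psi'\|^2+\om\|\psi\|^2-\tfrac1\gamma|\psi(0^+)-\psi(0^-)|^2=F_\gamma(\psi)+\om\|\psi\|^2,\qquad B(\psi):=\lambda\|\psi\|_{2\mu+2}^{2\mu+2}.
\]
Since $H_\gamma\ge-\om_0$ implies $F_\gamma(\psi)\ge-\om_0\|\psi\|^2$, the assumption $\om>\om_0$ gives $A(\psi)\ge(\om-\om_0)\|\psi\|^2>0$ for every $\psi\in Q\setminus\{0\}$. Also $B(\psi)>0$. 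Now $I_\om(t\psi)=t^2A(\psi)-t^{2\mu+2}B(\psi)$, so the unique positive root is $t_0=(A(\psi)/B(\psi))^{1/(2\mu)}$, and $I_\om(\psi)\le 0$ is equivalent to $A(\psi)\le B(\psi)$, i.e.\ $t_0\le 1$. Then
\[
\widetilde S(t_0\psi)=t_0^{2\mu+2}\widetilde S(\psi)\le \widetilde S(\psi),
\]
with strict inequality when $I_\om(\psi)<0$.

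The first equality in \eqref{equivalenza} then follows from two one-line estimates. The direction $\inf_{I_\om=0}S_\om\ge\inf_{I_\om\le 0}\widetilde S$ uses $\{I_\om=0\}\subset\{I_\om\le 0\}$ together with the fact that $S_\om=\widetilde S$ on the Nehari set. The reverse inequality uses the scaling: for any admissible $\psi$ in the larger class, $t_0\psi$ is admissible in the smaller class with $S_\om(t_0\psi)=\widetilde S(t_0\psi)\le\widetilde S(\psi)$, and taking infima completes the argument.

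For the second assertion, the $(\Leftarrow)$ direction is immediate from $S_\om=\widetilde S$ when $I_\om=0$. For $(\Rightarrow)$, suppose $\widetilde S(\phi)=d(\om)$ and $I_\om(\phi)\le 0$, and argue by contradiction: if $I_\om(\phi)<0$ then $t_0<1$ strictly and $\widetilde S(t_0\phi)<\widetilde S(\phi)=d(\om)$, while $t_0\phi\in\{I_\om=0\}$ forces $S_\om(t_0\phi)=\widetilde S(t_0\phi)\ge d(\om)$, a contradiction. Hence $I_\om(\phi)=0$ and $S_\om(\phi)=\widetilde S(\phi)=d(\om)$. The only place any care is required is in verifying $A(\psi)>0$, which is exactly where the hypothesis $\om>4/\gamma^2$ is used; everything else is a routine scaling computation.
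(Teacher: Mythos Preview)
Your proof is correct and follows essentially the same approach as the paper's: your scaling factor $t_0=(A(\psi)/B(\psi))^{1/(2\mu)}$ is exactly the paper's $\alpha(\phi)$ in \eqref{alfafi}, and the contradiction argument for the second assertion is identical. If anything, you are slightly more careful than the paper in making explicit that $A(\psi)=F_\gamma(\psi)+\om\|\psi\|^2>0$ requires $\om>\om_0$, a point the paper's proof of this lemma leaves implicit (it is established separately in \eqref{normeq}).
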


\begin{proof}
Let $\phi$ be a nonzero element of $Q$ such that $I_\om (\phi) = 0$. Then, by
\eqref{iom}, $S_\om (\phi) = \widetilde S (\phi)$, therefore
\be \label{un}
 \inf \{ S_\om (\psi), \, \psi \in Q\backslash
\{0\} , \, I_\om (\psi) = 0 \} \ \geq \ 
\inf \{ \widetilde S (\psi), \, \psi \in Q\backslash
\{0\}, \, I_\om (\psi) \leq 0 \}.
\ee
On the other hand, 
let $\phi$ be an element of Q such that
$I_\om (\phi) < 0$.
Defined
\be \label{alfafi}
\alpha (\phi): =  \f{ [F_\gamma (
  \phi) + \om 
\| \phi \|^2]^{\f 1 {2 \mu}}} {\lambda^{\f 1 {2 \mu}} \| \phi \|_{2
  \mu + 2}^{1 + \f 1 \mu}}, 
\ee one can directly verify that $\alpha (\phi) < 1$, $I_\om (\alpha (\phi) \phi) =
0$, and then,
by \eqref{iom},
$$
S_\om (\alpha (\phi) \phi) \ = \ \widetilde S (\alpha (\phi) 
\phi) \ = \ \alpha(\phi)^{2 \mu + 2} \widetilde S
(\phi) \ < \  \widetilde S
(\phi), 
$$
so
\be \label{deux}
\inf \{ S_\om (\psi) , \, \psi \in Q \backslash
\{0\}, \, I_\om
(\psi) = 0 \} \ \leq \ 
\inf \{ \widetilde S (\psi) ), \, \psi \in Q \backslash
\{0\}, \,
I_\om (\psi) \leq 0 \}.
\ee
From \eqref{un} and \eqref{deux}, identity \eqref{equivalenza} is proven.

From \eqref{equivalenza}, it is obvious that, if $\phi$ minimizes
$S_\om$ on the set $I_\om = 0$, then it minimizes $\widetilde S$ on
the set $I_\om \leq 0$ too. 

\noindent
Suppose now that $\widetilde S (\phi) = d (\omega)$ (then 
$\phi$ minimizes $\widetilde S$ on
the set $I_\om \leq 0$) and $I_\om (\phi) < 0$. Defining $\alpha
(\phi)$
like in \eqref{alfafi} one finds $\alpha (\phi) < 1$, $I_\om (\alpha (\phi) \phi) =
0$, and 
$
S_\om (\alpha (\phi) \phi)  
=  \alpha (\phi)^{2 \mu + 2} \widetilde S
(\phi)  < d (\om)
$ again,
that contradicts the definition of $d(\om)$. So the lemma is proven.
\end{proof}

\begin{lemma} \label{lemmadue}
The quantity $d (\om)$ defined in \eqref{dom} is strictly positive.
\end{lemma}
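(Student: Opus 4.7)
The plan is to use Lemma \ref{lemmauno} to recast the infimum as
$$d(\omega) \ = \ \inf \Bigl\{ \tfrac{\lambda \mu}{2(\mu+1)} \| \psi \|_{2\mu+2}^{2\mu+2} \, : \, \psi \in Q \setminus \{0\}, \ I_\omega (\psi) \leq 0 \Bigr\},$$
and then to show that the admissible set is bounded away from zero in $Q$. Positivity of $d(\omega)$ will follow at once.

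The core step is a coercivity estimate for the quadratic part of the action: I claim that, for every $\omega > \omega_0 = 4/\gamma^2$, there exists $c_\omega > 0$ such that
$$F_\gamma (\psi) + \omega \| \psi \|^2 \ \geq \ c_\omega \| \psi \|_Q^2, \qquad \psi \in Q.$$
To prove this I use the one-sided trace identity $|\psi(0^{\pm})|^2 \leq 2 \|\psi\|_{L^2(\erre^\pm)} \|\psi'\|_{L^2(\erre^\pm)}$, then Young's inequality with a free parameter $\eta > 0$, which gives
$$\tfrac{1}{\gamma} | \psi(0^+) - \psi(0^-)|^2 \ \leq \ \tfrac{2 \eta}{\gamma} \| \psi' \|^2 + \tfrac{2}{\gamma \eta} \| \psi \|^2.$$
Inserting this into the definition \eqref{formdelta'} of $F_\gamma$ one obtains
$$F_\gamma (\psi) + \omega \| \psi \|^2 \ \geq \ \Bigl( 1 - \tfrac{2 \eta}{\gamma} \Bigr) \| \psi' \|^2 + \Bigl( \omega - \tfrac{2}{\gamma \eta} \Bigr) \| \psi \|^2,$$
and both coefficients are simultaneously positive iff $\tfrac{2}{\gamma \omega} < \eta < \tfrac{\gamma}{2}$, which is a nonempty condition precisely when $\omega > 4/\gamma^2 = \omega_0$. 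This is consistent with the fact that $-\omega_0$ is the bottom of $\sigma(H_\gamma)$.

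Given the coercivity, the constraint $I_\omega(\psi) \leq 0$ together with \eqref{iom} yields
$$\lambda \| \psi \|_{2\mu+2}^{2\mu+2} \ \geq \ F_\gamma(\psi) + \omega \| \psi \|^2 \ \geq \ c_\omega \| \psi \|_Q^2,$$
and combining this with the Sobolev-type bound \eqref{zobbolef}, $\| \psi \|_{2\mu+2} \leq C \| \psi \|_Q$, gives
$$\| \psi \|_Q^{2\mu} \ \geq \ \frac{c_\omega}{\lambda \, C^{2\mu+2}} \ =: \ \delta_\omega \ > \ 0$$
for every admissible $\psi \neq 0$. Therefore $\| \psi \|_{2\mu+2}^{2\mu+2} \geq (c_\omega/\lambda) \, \delta_\omega^{1/\mu}$, so $\widetilde S (\psi)$ is bounded below by a strictly positive constant independent of $\psi$, proving $d(\omega) > 0$.

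The only genuine obstacle is the coercivity estimate; all other steps are immediate consequences of Lemma \ref{lemmauno}, \eqref{iom}, and the Gagliardo--Nirenberg embedding \eqref{zobbolef}. The trace-and-Young argument is the natural way to perform the estimate directly in the form domain $Q$, avoiding a spectral decomposition and making the threshold $\omega_0$ appear transparently.
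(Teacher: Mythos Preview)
Your proof is correct and follows essentially the same approach as the paper's. Both arguments establish the coercivity $F_\gamma(\psi)+\omega\|\psi\|^2 \ge c_\omega\|\psi\|_Q^2$ via the trace estimate and Young's inequality with a free parameter (your $\eta$ is the reciprocal of the paper's $a$, yielding the identical admissibility window), and both then combine this with the constraint $I_\omega\le 0$ and the Sobolev embedding \eqref{zobbolef} to bound $\|\psi\|_{2\mu+2}$ away from zero; the only cosmetic difference is that you pass through a lower bound on $\|\psi\|_Q$ before returning to $\|\psi\|_{2\mu+2}$, whereas the paper goes directly to the latter.
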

\begin{proof}
First notice that, for any $ a > 0$,
\be \begin{split}
| \psi (0+) - \psi (0-) |^2 \ \leq & \ 2 | \psi (0+) |^2 + 2 | \psi (0-) 
|^2 \ \leq \  2 \| \psi_+ \| \| \psi^\prime_+
\| +  2 \| \psi_- \| \| \psi^\prime_- \| \\
 \ \leq & \ a ( \| \psi_+ \|^2 +
\| \psi_- \|^2 ) +  a^{-1} ( \| \psi_+^\prime \|^2 +
\| \psi_-^\prime \|^2 ) 
\ = \ \ 2 a \| \psi \|^2 + 2 a^{-1} \| \psi^\prime \|^2 
\end{split}
\ee
where we used decomposition \eqref{domgamma_2}, estimate
\eqref{gajardo} with $p = \infty$,  and Cauchy-Schwarz
inequality. Therefore,
\be \begin{split}
F_\gamma (\psi) + \om \| \psi \|^2  \ & \geq \ \left( 1 - \f 2 {\gamma
    a} \right) \| \psi^\prime \|^2 + \left( \om - \f {2 a} \gamma
\right) \| \psi \|^2.
\end{split}
\ee
Since $\om > \f 4 {\gamma^2}$, we can fix the parameter $a$ in such a
way that
$$ \f 2 \gamma \ < \ a \ < \ \f {\gamma \om} 2, $$
so it is proven that
\be
 F_\gamma (\psi) + \om \| \psi \|^2
\ \geq \ C \| \psi \|^2_Q. 
\label{normeq}
\ee
Then, by the estimate \eqref{zobbolef}
\be \begin{split} \nnn 
I_\om (\psi ) \ & \geq \ C \| \psi \|^2_Q - \lambda \| \psi \|_{2 \mu +
  2}^{2 \mu + 2} \
 \geq \ C_1 \| \psi \|^2_{2 \mu + 2} - \lambda  \| \psi \|^{2 \mu + 2}_{2 \mu + 2}.
\end{split}
\ee
It appears that, if $I_\om (\psi) \leq 0$, then either $\psi = 0$ or 
$$
\| \psi \|_{2 \mu + 2} \ \geq \ 
\left( \f {C_1} \lambda \right)^{\f 1 {2 \mu}} \ > \ 0.
$$
Therefore, since in problem \ref{prob2} 
we are looking for a non vanishing minimizer, and owing to the fact
that on the Nehari manifold $S_\om = \widetilde S$, it must be
$d (\om) > 0$.

\end{proof}

In the third lemma we consider a pair of functionals 
$S_\om^0$, $I_\om^0$, that correspond to the functionals
$S_\om$, $I_\om$ in the absence of the point
interaction:
\be \begin{split} \nnn 
S_\om^0 ( \psi ) \ & = \ \f 1 2 \| \psi^\prime \|^2 + \f \om 2   \| \psi
\|^2 - \f {\lambda} {2 \mu +
  2} \| \psi \|_{2 \mu + 2}^{2 \mu + 2}\\
I_\om^0 ( \psi ) \ & = \ \| \psi^\prime \|^2 + \om    \| \psi
\|^2 - {\lambda}
  \| \psi \|_{2 \mu + 2}^{2 \mu + 2}.
\end{split} 
\ee

\begin{lemma}\label{min0} 
The set of the minimizers of the functional 
$S_\om^0$ among the functions in $Q \backslash \{ 0 \}$,
satisfying
the constraint $I_\om^0  \ = \ 0$,
is given by
\be \nnn 
\{ e^{i \theta} \chi_+ \phi^0_\om , \ e^{i
  \theta} \chi_- \phi^0_\om, \, \theta \in [0, 2 \pi) \},
\ee
where the function $\phi^0_\om$ has been defined in \eqref{solitons}.
\end{lemma}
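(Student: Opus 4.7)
The plan is to exploit the product structure of $Q = H^1(\erre^+) \oplus H^1(\erre^-)$, in which no condition couples the two sides of the origin. Every $\psi \in Q$ splits as $\psi = \psi_+ + \psi_-$ with $\psi_\pm := \chi_\pm \psi \in Q$, and because the free functionals carry no pointwise coupling term at $0$, they are additive on this split: $S_\om^0(\psi) = S_\om^0(\psi_+) + S_\om^0(\psi_-)$, and likewise for $I_\om^0$ and $\widetilde S$. The upper bound is then immediate: both half-solitons $\chi_\pm \phi^0_\om$ lie in $Q$, satisfy $I_\om^0 = 0$, and attain exactly half of the classical NLS ground-state action on the line.

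For the matching lower bound I lift each half-supported piece to all of $\erre$ by even reflection. Since $H^1(\erre^+) \hookrightarrow C([0,\infty))$, the even extension $\widetilde\psi_+(x) := \psi_+(|x|)$ of $\psi_+$ lies in $H^1(\erre)$, and it exactly doubles each of $\|\psi'\|^2$, $\|\psi\|^2$ and $\|\psi\|^{2\mu+2}_{2\mu+2}$. In particular, $S_\om^0$, $I_\om^0$ and $\widetilde S$ all double under reflection. The proof of Lemma \ref{lemmauno} uses the form $F_\gamma$ only through the quadratic homogeneity of the quadratic part of $S_\om$, so the same argument yields the free analogue
\be \nnn
d^0(\om) \ = \ \inf \{\widetilde S(\psi),\ \psi\in Q\setminus\{0\},\ I_\om^0(\psi)\leq 0\},
\ee
and I may work on this enlarged constraint. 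Given such $\psi$, additivity forces $I_\om^0(\psi_+)\leq 0$ or $I_\om^0(\psi_-)\leq 0$; assume $\psi_+\neq 0$ with $I_\om^0(\psi_+)\leq 0$, the mirror case being identical.

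At this stage the classical one-dimensional NLS theory (Weinstein--Kwong uniqueness of the positive $H^1(\erre)$ solution of $-\phi''+\om\phi = \lambda\phi^{2\mu+1}$) takes over: the minimizers of $\widetilde S$ on $\{I_\om^0\leq 0\}\cap H^1(\erre)\setminus\{0\}$ are exactly the orbit $\{e^{i\theta}\phi^{x_0}_\om\}$, with common value $\widetilde S(\phi^0_\om)$. Therefore $\widetilde S(\widetilde\psi_+)\geq \widetilde S(\phi^0_\om)$, i.e. $\widetilde S(\psi_+)\geq \tfrac{1}{2}\widetilde S(\phi^0_\om)$, and nonnegativity of $\widetilde S$ on the complementary half gives $\widetilde S(\psi) \geq \tfrac{1}{2}\widetilde S(\phi^0_\om)$, matching the upper bound. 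Equality forces $\psi_-\equiv 0$ and $\widetilde\psi_+ = e^{i\theta}\phi^{x_0}_\om$; evenness of $\widetilde\psi_+$ then pins $x_0 = 0$, yielding $\psi = e^{i\theta}\chi_+\phi^0_\om$ (and symmetrically $e^{i\theta}\chi_-\phi^0_\om$). The only nonelementary input is the classical uniqueness-up-to-symmetries of the NLS ground state on the line; the one mildly delicate bookkeeping step is to confirm that Lemma \ref{lemmauno} transposes verbatim to the free problem, which goes through by inspection.
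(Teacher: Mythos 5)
Your proof is correct and follows essentially the same route as the paper's: the paper also reduces via the analogue of Lemma \ref{lemmauno} to minimizing $\widetilde S$ on $\{I_\om^0\leq 0\}$, uses the decomposition \eqref{domgamma_2} (which is exactly your even reflection, since there $\psi_\pm$ are the even $H^1(\erre)$ extensions and $\widetilde S(\chi_\pm\psi_\pm)=\tfrac12\widetilde S(\psi_\pm)$), applies the same either/or alternative on the sign of $I_\om^0$ restricted to each half, and invokes the classical uniqueness of the line minimizer to settle the equality case. No meaningful difference in strategy.
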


\begin{proof}
First notice that, 
reasoning like in the proof of lemma \ref{lemmauno}, one can prove
that the search for the minimizers of $S_\om^0$ among the nonzero functions in
$Q$ that satisfy $I_\om^0 = 0$, is equivalent to the search for 
the minimizers of $\widetilde S$ among the nonzero functions in $Q$
that satisfy
$I_\om^0 \ \leq \ 0$. 

Let us define the real function of a real variable
$$
d^0 (\omega) \ := \ \inf
\{
\widetilde S (\psi), \ \psi  \in Q \backslash \{ 0 \}, \ 
I_\om^0 (\psi) \leq 0 \}.
$$
Proceeding like in lemma \ref{lemmadue} one can show that $d^0
(\omega) > 0$.

Besides, we recall that $\phi_\om^0$ minimizes the functional
$\widetilde S$ among all functions in $H^1 (\erre) \backslash \{ 0 \}$ such that
$I_\om^0 = 0$.
Now, making resort to the representation \eqref{domgamma_2}, let us
consider a  generic function of $Q  \backslash \{ 0 \}$ 
supported on $\erre^+$, call it $\chi_+ \psi_+$, 
with $\psi_+ \in H^1 (\erre)$ and even,  and suppose that $I_\om^0
(\chi_+ \psi_+) \leq 0$. 
One immediately has
$$\widetilde S (\chi_+ \psi_+) \ = \ \f
1 2  \widetilde S (\psi_+) \ \geq \ \f 1 2  \widetilde S (\phi^0_\om) \ = \
  \widetilde S (\chi_+ \phi^0_\om), 
$$
so $\chi_+ \phi^0_\om$ is a minimizer of $\wt S$ among the 
functions of $Q \backslash \{ 0 \}$, supported on
$\erre^+$ and satisfying $I_\om^0 \leq 0$. 

\n
Notice that the equal sign holds if and only if $\psi_+ =
\phi^0_\om$. Otherwise, $\psi_+$ would not belong to the family
\eqref{solitons}, nevertheless, as $S_\om^0 (\psi_+) = S_\om^0
(\phi_\om^0)$,  it would be a minimizer of $S_\om^0$
among the nonzero functions in $H^1 (\erre)$ that satisfy $I_\om^0 =
0$, which is impossible.

\n
Thus, for any function $\psi \in Q  \backslash \{ 0 \}$, with 
$\psi \ = \ \chi_+ \psi_+ + \chi_- \psi_-$, and $I_\om^0 (\psi) \leq 0$, the following alternative
holds: either $I_\om^0 (\chi_+ \psi_+) \leq 0$ and so
\be
\label{egym}
 \wt S (\psi) \ \geq \  \wt S (\chi_+ \psi_+) \ \geq \ \wt S (\chi_+
 \phi^0_\om),
\ee
or $I_\om^0 (\chi_- \psi_-) \leq 0$ and so
\be \label{ketto}
 \wt S (\psi) \ \geq \ \wt S (\chi_- \psi_-) \ \geq \ \wt S (\chi_-
 \phi^0_\om),
\ee
and the equality in the last step of \eqref{egym} and \eqref{ketto}
holds if and only if $|\psi_+| =  \phi^0_\om$, or $|\psi_-| =
\phi^0_\om$, respectively.

Taking into account the  $U(1)-$symmetry of the problem, the proof is
complete. 
\end{proof}

\begin{lemma} \label{ebbravo}
For the infimum of problem \ref{prob2} the following inequality holds
\be \label{minoracao}
d(\om) \ < \ \widetilde S (e^{i \theta} \chi_+ \phi^0_\om ) \ = \ \f 1 2
\left( \f {\mu + 1} \lambda \right)^{ \f 1 \mu}  {\om^{\f 1 2 + \f 1 \mu}}
\int_0^{1} (1 - u^2)^{\f 1 \mu} \, du.
\ee
\end{lemma}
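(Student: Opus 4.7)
The statement consists of two independent assertions: the strict inequality $d(\om) < \widetilde S(e^{i\theta}\chi_+\phi^0_\om)$ and the explicit closed form of $\widetilde S(e^{i\theta}\chi_+\phi^0_\om)$. I would treat them separately.

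The plan for the strict inequality is to exploit the attractiveness of the defect ($\gamma>0$) by using $\chi_+\phi^0_\om$ as a trial function and then rescaling to land on the Nehari manifold. First I would observe that since $\phi^0_\om$ is a critical point of the free action, it satisfies $I^0_\om(\phi^0_\om)=0$, and by even symmetry of $\phi^0_\om$ and the fact that $I^0_\om$ splits additively into integrals over $\erre^+$ and $\erre^-$, one gets $I^0_\om(\chi_+\phi^0_\om)=\tfrac12 I^0_\om(\phi^0_\om)=0$. Next I would compute
\[
I_\om(\chi_+\phi^0_\om) \ = \ I^0_\om(\chi_+\phi^0_\om) - \f 1 \gamma |\phi^0_\om(0)|^2 \ = \ -\f 1 \gamma |\phi^0_\om(0)|^2 \ < \ 0,
\]
since $\psi(0^-)=0$, $\psi(0^+)=\phi^0_\om(0)>0$ and $\gamma>0$. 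With this at hand I would invoke the rescaling argument already used in Lemma \ref{lemmauno}: set $\alpha:=\alpha(\chi_+\phi^0_\om)\in(0,1)$ as in \eqref{alfafi}, so that $I_\om(\alpha\chi_+\phi^0_\om)=0$ and $\alpha\chi_+\phi^0_\om\in Q\setminus\{0\}$. Then, using $\widetilde S(\beta\psi)=\beta^{2\mu+2}\widetilde S(\psi)$, I obtain
\[
d(\om)\ \leq\ S_\om(\alpha\chi_+\phi^0_\om)\ =\ \widetilde S(\alpha\chi_+\phi^0_\om) \ = \ \alpha^{2\mu+2}\widetilde S(\chi_+\phi^0_\om)\ <\ \widetilde S(\chi_+\phi^0_\om),
\]
which is the desired strict inequality (the $U(1)$ phase $e^{i\theta}$ is irrelevant since $\widetilde S$ depends on $|\psi|$ only).

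The plan for the closed-form evaluation is a direct computation. Since on the Nehari manifold $\widetilde S(\psi)=\f{\lambda\mu}{2(\mu+1)}\|\psi\|^{2\mu+2}_{2\mu+2}$, one has
\[
\widetilde S(\chi_+\phi^0_\om) \ = \ \f{\lambda\mu}{2(\mu+1)}\int_0^{+\infty}|\phi^0_\om(x)|^{2\mu+2}\,dx.
\]
Plugging in the explicit form \eqref{solitons} of $\phi^0_\om$ and substituting $u=\tanh(\mu\sqrt\om\,x)$, so that $du=\mu\sqrt\om\,\sech^2(\mu\sqrt\om\,x)\,dx$ and $\sech(\mu\sqrt\om\,x)=\sqrt{1-u^2}$, the integral reduces to
\[
\int_0^{+\infty}\sech^{2+2/\mu}(\mu\sqrt\om\,x)\,dx \ = \ \f 1{\mu\sqrt\om}\int_0^1(1-u^2)^{1/\mu}\,du.
\]
Gathering the prefactors yields exactly the right-hand side of \eqref{minoracao}. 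This step is completely routine, while the only genuine point of the proof is the sign of $I_\om(\chi_+\phi^0_\om)$, which is a direct consequence of the attractive nature of the $\delta'$ interaction.
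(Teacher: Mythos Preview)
Your proof is correct and follows essentially the same route as the paper's own argument: compute $I_\om(\chi_+\phi^0_\om)<0$ using $I^0_\om(\chi_+\phi^0_\om)=0$ and the attractive defect term, then rescale via $\alpha$ from Lemma~\ref{lemmauno} to land on the Nehari manifold and conclude $d(\om)\le \alpha^{2\mu+2}\widetilde S(\chi_+\phi^0_\om)<\widetilde S(\chi_+\phi^0_\om)$. Your explicit treatment of the closed-form integral via the substitution $u=\tanh(\mu\sqrt\om\,x)$ is exactly the ``direct computation'' the paper alludes to; one small remark is that $\widetilde S(\psi)=\tfrac{\lambda\mu}{2(\mu+1)}\|\psi\|_{2\mu+2}^{2\mu+2}$ holds by definition for all $\psi\in Q$, not just on the Nehari manifold.
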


\begin{proof}
We notice that the last identity in \eqref{minoracao} can be obtained by
direct computation.
Furthermore,
$$
I_\om (\chi_+ \phi_0^\om) \ = \ I_\om^0 (\chi_+ \phi_0^\om) 
- \f 1 \gamma \left( \f {(\mu + 1) \om } \lambda \right)^{\f 1 \mu}
 \ = \ - \f 1 \gamma \left( \f {(\mu + 1) \om } \lambda \right)^{\f 1 \mu}
\ < \ 0.
$$
Following the proof of lemma \ref{lemmauno} we define
\be \nonumber \begin{split}
\alpha : = & \
 \f{ [F_\gamma ( \chi_+
  \phi_\om^0) + \om 
\| \chi_+ \phi_\om^0 \|^2]^{\f 1 {2 \mu}}} {\lambda^{\f 1 {2 \mu}} \|
\chi_+ \phi_\om^0 \|_{2 \mu + 2}^{1 + \f 1 \mu}}
\ = \ \left( 1 - \f 1 {\gamma \lambda} \f{2 | \phi^0_\om (0+) |^2}
{ \| \phi_\om^0 \|_{2 \mu + 2}^{2 \mu + 2}}
\right)^{\f 1 {2 \mu}} \ = \  \left( 1 - \f \mu {\gamma (\mu + 1) \om^{\f 1 2}
\int_0^{1}  (1 - u^2)^{\f 1 \mu} \, du }\right)^{\f 1 {2 \mu}}
\\ < \ 1.
\end{split}
\ee
Therefore, 
$I_\om^0 (\alpha \chi_+ \phi_0^\om) = 0$ and 
$$
\widetilde S (\alpha \chi_+ \phi_0^\om) \ = \ \f{\alpha^{2 \mu + 2}} 2
\left( \f {\mu + 1}  \lambda \right)^{ \f 1 \mu}  {\om^{\f 1 2 + \f 1 \mu}}
\int_0^{+\infty} (1 - u^2)^{\f 1 \mu} \, du,
$$
and, since $\alpha < 1$, we get
$$
d (\om) \leq \ \f{\alpha^{2 \mu + 2}} 2 \left( \f {\mu + 1} \lambda \right)^{ \f 1 \mu}  
{\om^{\f 1 2 + \f 1 \mu}}
\int_0^{+\infty} (1 - u^2)^{\f 1 \mu} \, du \ < \ \f 1 2 
\left( \f {\mu + 1} \lambda \right)^{ \f 1 \mu}  {\om^{\f 1 2 + \f 1 \mu}}
\int_0^{+\infty} (1 - u^2)^{\f 1 \mu} \, du,
$$
and the proof is complete.
\end{proof}

Now we can prove theorem  \ref{teo:exmin}.

\begin{proof}

\vspace{.3cm}
Let $\{ \psi_n \}$ be a minimizing sequence for the functional $\wt S$
on the set $I_\om \leq 0$. We show that there exists a
subsequence of $\{ \psi_n \}$ that converges weakly in $Q$.
First, notice that $\| \psi_n \|_Q$
is bounded. Indeed, the sequence
$\| \psi_n \|_{2\mu + 2}$ is bounded as it converges. 
Furthermore, by the lower boundedness of the form $F_\gamma$, and
recalling that $I_\om (\psi_n) \leq 0$, we have
$$ 0 \ \leq \ \left( \om - \f 4 {\gamma^2} \right) \| \psi_n \|^2 \ \leq \
F_\gamma (\psi_n) + \om \| \psi_n \|^2 \ \leq \ \lambda \ 
\| \psi_n \|_{2\mu + 2}^{2\mu + 2} \ \leq \ C,
$$
so $\|  \psi_n \|  \leq C$. 

Then, using $I_\om (\psi_n) \leq 0$ again, by the decomposition
\eqref{domgamma_2} and estimate \eqref{gajardo},
\be \begin{split}
\| \psi_n^\prime \|^2 \ & \leq \ \lambda \ 
\| \psi_n \|_{2\mu + 2}^{2\mu + 2} - \om \| \psi_n \|^2 + \f 1 \gamma
| \psi_n (0+) - \psi_n (0-) |^2 \ \leq \ C + \f 2 \gamma (|\psi_{n,+}
(0)|^2 + |\psi_{n,-}(0)|^2 ) \\ & \ \leq \
C + C \| \psi_{n,+} \| \| \psi_{n,+}^\prime \|
+ C \| \psi_{n,-} \| \| \psi_{n,-}^\prime \| \
 \leq \  C + C \left( \f 1 \ve \| \psi_n \|^2 + \ve \| \psi_n^\prime \|^2
\right).
\end{split}
\ee
Choosing $\ve$ sufficiently small,
we obtain that $\| \psi_n^\prime \|^2$ is bounded, so the sequence $\{ \psi_n \}$
is bounded in $Q$, and then,
by Banach-Alaoglu
theorem, there exists a converging subsequence, that we call $\{ \psi_n
\}$ again, in the weak topology of $Q$. We call $\psi_\infty$ the weak
limit of the sequence $\{ \psi_n \}$. 

\vspace{.3cm}
We prove that $\psi_\infty \neq 0$. To this aim, we 
show, first, that  
the sequences $\{ \psi_n (0\pm) \}$ converge to  $\psi_\infty (0\pm)
$, and, second, that
$\lim_{n \to \infty} I_\om (\psi_n) = 0$.

Let us define the functions $\varphi_\pm (x) : = \chi_\pm (x) e^{\mp x}$.
Then, integrating by parts,
$$
 (\varphi_+, \psi_n)_Q \ = \ \int_0^{+ \infty} e^{-x} \psi_n (x) \, dx -
\int_0^{+ \infty} e^{-x} \psi_n^\prime (x) \, dx \ = \ \psi_n (0+).
$$
Analogously, $(\varphi_-, \psi_n)_Q \ = \ \psi_n (0-)$.
Therefore, by weak convergence,
\be \label{halflimit}
\psi_n (0 \pm) \ = \ (\varphi_\pm, \psi_n)_Q \ \rightarrow \ (\varphi_\pm,
\psi_\infty)_Q \ = \ \psi_\infty (0 \pm),
\ee
and the first preliminary claim is proven. We prove the second claim by
contradiction, i.e., supposing that
$I_\om (\psi_n) \rightarrow 0$ is false. Then, there
must be a subsequence of $\{ \psi_n \}$, denoted by $\{ \psi_n \}$ too, such
that
$$ \lim_{n \to \infty} I_\om (\psi_n) \ = \ - \beta \ < \ 0. $$
We define the sequence $\zeta_n : = \nu_n \psi_n$, with
$$\nu_n : =  \f{ [F_\gamma (
  \psi_n) + \om 
\| \psi_n \|^2]^{\f 1 {2 \mu}}} {\lambda^{\f 1 {2 \mu}} \| \psi_n \|_{2 \mu + 2}^{1 + \f 1 \mu}}
\ < \ 1.
$$
Since
$$
\lim_{n \to \infty} \nu_n \ = \ \lim_{n \to \infty} \left[ 1 + \f{I_\om (\psi_n)} 
{\lambda \| \psi_n \|_{2 \mu + 2}^{2 \mu + 2}}  \right]^{\f 1 {2 \mu}}
\ = \  \left[ 1 - \f {\beta \mu}{2 (\mu + 1) d (\om)} \right]^{\f 1 {2 \mu}}
\ < \ 1,
$$
we obtain
$$\lim_{n \to \infty} \widetilde S (\zeta_n ) \ = \
\nu^{2 \mu + 2}  \widetilde S (\psi_n ) \ < \ 
 \widetilde S (\psi_n )
$$
and, since $
I_\om (\zeta_n)  =  0,$
it follows that the assumption that $\{ \psi_n \}$ is a minimizing sequence
is false. Therefore, it must be 
\be \label{secondpre}
\lim_{n \to \infty} I_\om (\psi_n) = 0.
\ee

To prove that $\psi_\infty \neq 0$ we proceed by contradiction again. 
Assume that $\psi_\infty = 0$. Define the sequence
$\eta_n : = \rho_n \psi_n$ with
\be \label{rhon}
\rho_n : = \f {\left[ \| \psi_n^\prime \|^2 + \om \| \psi_n \|^2
  \right]^{\f 1 {2 \mu}} }{\lambda^{\f 1 {2 \mu}}  
\| \psi_n \|_{2\mu + 2}^{1 + \f 1
    \mu}}
\ee
Using \eqref{halflimit} and \eqref{secondpre} we obtain
$$
\lim_{n \to \infty} \rho_n \ = \ \lim_{n \to \infty} \left[ 
1 + \f{I_\om (\psi_n) + \gamma^{-1} | \psi_n (0+) - \psi_n (0-) |^2}
{\lambda
\| \psi_n \|_{2\mu + 2}^{2 \mu + 2}}
    \right]^{\f 1 {2 \mu}} \ = \ 1, 
$$
and therefore
\be \nnn 
\lim_{n \to \infty} \widetilde S (\eta_n) \ = \ \lim_{n \to \infty}
\rho_n^{2 \mu + 2} \widetilde S (\psi_n) \ = \ d(\omega).
\ee
Moreover, owing to definition \eqref{rhon},
\be \nnn 
I_\om^0 (\eta_n) \ = \ I_\om^0 (\rho_n \psi_n) \ = \
\rho_n^2 \left( \| \psi_n^\prime \|^2 + \om \| \psi_n \|^2 - \lambda \rho_n^{2\mu}
\| \psi_n \|^{2 \mu + 2}_{2 \mu + 2} \right) \ = \ 0,
\ee
so, due to Lemma \ref{min0},
\be \label{emanuele}
 d (\om) \ \geq \ S_\om^0 (\chi_+ \phi_0).
\ee
On the other hand, by Lemma \ref{ebbravo} we conclude
\be \begin{split}
\nnn 
d (\om) \ & \ < \
 \widetilde S (\chi_+ \phi_0) \ \leq \ \widetilde S (\eta_n)
.
\end{split}
\ee
that contradicts \eqref{emanuele}. So the hypothesis $\psi_\infty \ = \ 0$
cannot hold.

Now we prove that $I_\om (\psi_\infty) \leq 0$. To this purpose, we 
follow 
the last lines in the proof of proposition 2 in  \cite{reika}. First,
we recall an inequality due to Brezis and Lieb (\cite{breli}): if $u_n$ 
converges to $u_\infty$ weakly in $L^p$, then
\begin{eqnarray}
\| u_n \|_{p}^{p} - \| u_n - u_\infty \|_{p}^{p} 
- \| u_\infty \|_{p}^{p} &
\longrightarrow & 0, \quad \forall \, 1 < p < \infty. \label{breli1}
\end{eqnarray}
First, we notice that if $u_n = \psi_n$ and $p = 2 \mu + 2$,
then \eqref{breli1}
yields
\be
\widetilde S (\psi_n) -\widetilde S( \psi_n - \psi_\infty)
  - \widetilde S
( \psi_\infty) \
\longrightarrow \ 0 \label{breli3}.
\ee
Further applying \eqref{breli1} to the sequence $\{\psi_n\}$  and 
to the sequence $\{\psi_n^\prime \}$ with 
$p=2$, and using \eqref{halflimit}, yields
\begin{eqnarray}
I_\om (\psi_n) - I_\om( \psi_n - \psi_\infty)  - I_\om ( \psi_\infty) &
\longrightarrow & 0 \label{breli2}.
\end{eqnarray}
Suppose $I_\om (\psi_\infty) > 0$. Then, by \eqref{breli2} and \eqref{secondpre}, 
$$ \lim_{n \to \infty} I_\om (\psi_n - \psi_\infty) \ 
= \ \lim_{n \to \infty} I_\om (\psi_n) - I_\om (\psi_\infty) 
\ = \  - I_\om (\psi_\infty) \ < \ 0.$$ 
Choose $\bar n$ such that $I_\om (\psi_{n} - \psi_\infty) <
0$ for any $n > \bar n$. Then, by definition of $d (\omega)$ we have
\be \label{partial1}
d (\omega) \ \leq \ \widetilde S (\psi_{n} - \psi_\infty), 
\quad \forall n > \bar n,
\ee
but, on the other hand, $\psi_\infty \neq 0$ implies $\widetilde S (\psi_\infty) > 0$, and,
together with
\eqref{breli3}, this yields
$$ \lim_{n \to \infty} \widetilde S (\psi_n - \psi_\infty ) \ = \  \lim_{n \to \infty} \widetilde S (\psi_n )
- \widetilde 
S (\psi_\infty) \ = \ d (\omega) -  S (\psi_\infty) \ < \ d (\omega),$$ 
that contradicts \eqref{partial1}, and so it must be $I_\om (\psi_\infty) \ \leq \ 0$. 
As a consequence, by definition of $d (\om)$,
\be
\nnn 
\widetilde  S (\psi_\infty) \ \geq \ d (\omega). 
\ee
Now, since $\psi_\infty$ is the weak limit of $\{ \psi_n \}$ in $L^{2 \mu +
  2}$, we must have
$$ \widetilde S (\psi_\infty) \ = \ \f {\lambda \mu} {2 (\mu + 1)} \| \psi_\infty
\|^{2 \mu + 2}_{2 \mu + 2} \ \leq \ \lim_{n \to \infty}  \f {\lambda \mu} {2 (\mu + 1)} 
\| \psi_n
\|^{2 \mu + 2}_{2 \mu + 2} \ = \ d (\omega)
$$
which implies
\be \label{minimizza}
\widetilde S (\psi_\infty) = d (\omega),
\ee
 and so $\psi_\infty$ is a 
solution to the minimization problem \ref{prob2}, and therefore, to the
minimization problem \ref{prob1} too. The proof is complete.


\end{proof}

\begin{cor}[Strong convergence]
If a minimizing sequence $\{ \psi_n \}$ for the problem \ref{prob2}
converges weakly in $Q$, then it converges strongly in $Q$.
\end{cor}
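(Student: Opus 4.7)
The plan is to upgrade weak convergence to strong convergence in $Q$ through a Brezis--Lieb-type decomposition, leveraging the machinery already assembled in the proof of theorem \ref{teo:exmin}. Let $\psi_\infty$ denote the weak limit of $\{\psi_n\}$ in $Q$. From that proof we already know $\widetilde S(\psi_\infty)=d(\om)$ together with $I_\om(\psi_\infty)\leq 0$, so lemma \ref{lemmauno} forces $I_\om(\psi_\infty)=0$, and equation \eqref{secondpre} tells us that $I_\om(\psi_n)\to 0$.

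First, I would establish strong convergence in $L^{2\mu+2}$. Since $\widetilde S$ is, up to the positive factor $\lambda\mu/(2(\mu+1))$, the $(2\mu+2)$-power of the $L^{2\mu+2}$-norm, the identity $\widetilde S(\psi_n)\to \widetilde S(\psi_\infty)$ gives $\|\psi_n\|_{2\mu+2}^{2\mu+2}\to \|\psi_\infty\|_{2\mu+2}^{2\mu+2}$. Coupled with the Brezis--Lieb identity \eqref{breli1} applied with $p=2\mu+2$, this yields $\|\psi_n-\psi_\infty\|_{2\mu+2}\to 0$.

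Next, I would combine $I_\om(\psi_n)\to 0$, $I_\om(\psi_\infty)=0$ and the decomposition \eqref{breli2} to deduce $I_\om(\psi_n-\psi_\infty)\to 0$. Writing this quantity out via definition \eqref{iom}, the pointwise contribution $|(\psi_n-\psi_\infty)(0+)-(\psi_n-\psi_\infty)(0-)|^2$ vanishes in view of the boundary convergence \eqref{halflimit} already established in the theorem, while the $L^{2\mu+2}$-contribution vanishes by the previous step. We are thus left with
\[
\|(\psi_n-\psi_\infty)'\|^2 + \om\,\|\psi_n-\psi_\infty\|^2 \longrightarrow 0.
\]

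Since both summands are non-negative and $\om>\om_0>0$, each converges to zero separately, which is precisely the strong convergence $\psi_n\to\psi_\infty$ in $Q$. There is no real obstacle: the only minor subtlety is the careful book-keeping of the Brezis--Lieb decomposition of the indefinite functional $I_\om$, but this was already handled in the proof of the theorem through \eqref{breli2} (where both the $H^1$-part and the pointwise part were shown to decouple in the limit). Everything else reduces to the non-negativity of the remaining two quadratic terms.
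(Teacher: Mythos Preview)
Your argument is correct. Both your proof and the paper's share the same opening move: deduce strong $L^{2\mu+2}$-convergence from Brezis--Lieb \eqref{breli1} together with $\widetilde S(\psi_n)\to d(\om)=\widetilde S(\psi_\infty)$. The difference lies in how the quadratic part is handled. The paper writes
\[
F_\gamma(\psi_n)+\om\|\psi_n\|^2 \;=\; I_\om(\psi_n)+\lambda\|\psi_n\|_{2\mu+2}^{2\mu+2}
\;\longrightarrow\; F_\gamma(\psi_\infty)+\om\|\psi_\infty\|^2,
\]
and then invokes \eqref{normeq} (equivalently, the Hilbert structure coming from \eqref{m-rep}) to conclude that weak convergence plus convergence of this equivalent norm gives strong convergence in $Q$. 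You instead apply the Brezis--Lieb decomposition \eqref{breli2} directly to $I_\om$, strip off the boundary and $L^{2\mu+2}$ contributions, and are left with $\|(\psi_n-\psi_\infty)'\|^2+\om\|\psi_n-\psi_\infty\|^2\to 0$, which gives the conclusion by non-negativity. Your route is slightly more hands-on and avoids the implicit ``weak $+$ norm $=$ strong'' Hilbert-space step; the paper's is a line shorter but relies on identifying $F_\gamma+\om\|\cdot\|^2$ as an equivalent norm. Either way the content is the same.
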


\begin{proof}
Formulas \eqref{breli1} and
\eqref{minimizza} prove that
$\{ \psi_n \}$ converges strongly in $L^{2 \mu + 2}$. 
As a consequence, 
$$ F_\gamma (\psi_n) + \om \| \psi_n \|^2 
\ = \ 2 \f \mu {\mu + 1}  \widetilde S(\psi_n) + I_\om (\psi_n)
\ \longrightarrow  \  \frac{\mu}{\mu + 1}  \widetilde S(\psi_\infty)
 \ \longrightarrow  \
F_\gamma (\psi_\infty) + \om \| \psi_\infty \|^2,
$$
and by
\eqref{normeq} 
this is equivalent to the strong convergence in $Q$. 
\end{proof}
 
\par\noindent We end this section by adding some remarks on the
variational problem \ref{natural}, i.e. to minimize the energy at
fixed norm. \par\noindent To be precise, let us define the manifold
\begin{equation}
\Gamma_m =\left\{\psi\in Q: \| \psi \|^2=m \right\}
\end{equation}
and 
\begin{equation}
-{\mathcal E}_m ={\rm inf} \left\{E(\psi)\ |\ \ \psi\in \Gamma_m\right\}\ .
\end{equation}
In the following results it is shown that in the supercritical regime
the constrained energy is unbounded from below and in the subcritical
regime its infimum is finite and negative. Moreover the energy is
controlled from below by $\|\cdot\|_Q$ norm.

\begin{lemma}{\rm (Behaviour of the constrained energy).}\par\noindent
1) Let $\mu>2$; then ${\mathcal E}_m=+\infty$ and the energy $E$ is
unbounded from below in $\Gamma_m$; \par\noindent 2) Let $\mu<2$; then
$0<{\mathcal E}_m<+\infty$; moreover there exist positive and finite
constants $C_1$, $C_2$ (depending on $\mu,\gamma, m$) such that
\begin{equation}\label{energybound}
E(\psi)>C_1 \|\psi\|^2_Q - C_2 \qquad \quad \forall\ \psi\in \Gamma_m;
\end{equation}
 \par\noindent 3) Let $\mu = 2$; then, there exists $m^* > 0$ such that 
for $m < m^*$ inequality \eqref{energybound} holds true.
\end{lemma}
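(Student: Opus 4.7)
The plan is to address the three cases separately, all hinging on the $L^2$-scaling properties that distinguish subcritical, critical, and supercritical powers.

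For the supercritical case $\mu>2$, I would exhibit a one-parameter family in $\Gamma_m$ along which $E\to-\infty$. Fix any $\phi\in C_c^\infty(\RE\setminus\{0\})$ with $\|\phi\|^2=m$ and set $\phi_\lambda(x):=\sqrt{\lambda}\,\phi(\lambda x)$ for $\lambda>0$. Each $\phi_\lambda$ belongs to $\Gamma_m$ and vanishes near the origin, so the jump term in $E$ drops out, and a direct scaling computation yields $E(\phi_\lambda)=\tfrac{\lambda^2}{2}\|\phi'\|^2-\tfrac{\lambda^\mu}{2\mu+2}\|\phi\|_{2\mu+2}^{2\mu+2}$. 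Because $\mu>2$, the nonlinear term dominates as $\lambda\to+\infty$ and $E(\phi_\lambda)\to-\infty$, giving $\mathcal E_m=+\infty$.

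For the subcritical case $\mu<2$, the plan is to combine two ingredients already used in the paper. The jump is controlled as in the proof of Lemma \ref{lemmadue}: for every $a>0$ one has $|\psi(0+)-\psi(0-)|^2\leq 2am+2a^{-1}\|\psi'\|^2$ on $\Gamma_m$. Gagliardo-Nirenberg \eqref{gajardo} gives $\|\psi\|_{2\mu+2}^{2\mu+2}\leq C m^{(\mu+2)/2}\|\psi'\|^\mu$, and since $\mu<2$, Young's inequality lets us absorb $\|\psi'\|^\mu$ into $\varepsilon\|\psi'\|^2+C_\varepsilon$. Choosing first $a>2/\gamma$ and then $\varepsilon$ sufficiently small keeps the coefficient of $\|\psi'\|^2$ strictly positive, producing $E(\psi)\geq C_1\|\psi'\|^2-C_2'$ on $\Gamma_m$; writing $\|\psi'\|^2=\|\psi\|_Q^2-m$ then delivers \eqref{energybound}, and in particular $\mathcal E_m<+\infty$. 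To show $\mathcal E_m>0$ I would plug in the rescaled linear ground state $\sqrt{m}\,\psi_\gamma\in\Gamma_m$: since $F_\gamma(\psi_\gamma)=-\omega_0$, a direct evaluation makes the energy explicitly negative.

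The critical case $\mu=2$ runs in parallel to Part 2 except that now Gagliardo-Nirenberg reads $\|\psi\|_6^6\leq C m^2\|\psi'\|^2$, so the nonlinear and kinetic contributions scale identically in $\|\psi'\|^2$. The coefficient of $\|\psi'\|^2$ in the lower bound becomes $\tfrac12-\tfrac1{\gamma a}-\tfrac{\lambda C m^2}{6}$, which stays positive once $m$ lies below a threshold $m^*$ obtained (after optimizing in $a$) from the condition that this coefficient vanishes. The only genuinely delicate point in the whole argument is that, in the critical regime, one has no slack from Young's inequality with which to absorb the nonlinearity, so the $m$-dependence of the Gagliardo-Nirenberg constant must be tracked carefully to pin down a usable $m^*$; everything else is a direct application of tools already introduced in Sections 2 and 4.
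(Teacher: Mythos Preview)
Your proof is correct and follows essentially the same strategy as the paper: a rescaled test function to drive the energy to $-\infty$ when $\mu>2$, and Gagliardo--Nirenberg plus Young (together with the elementary bound on the jump term) to obtain the coercivity \eqref{energybound} when $\mu<2$, with the critical case handled by the same estimates without the Young step. The only differences are cosmetic: the paper uses a Gaussian trial function (which serves double duty, also giving $E<0$ for large variance in the subcritical regime) whereas you use a compactly supported bump for unboundedness and the linear ground state $\sqrt{m}\,\psi_\gamma$ for negativity; both choices work. One small caution: you recycle the symbol $\lambda$ for the scaling parameter, which clashes with the coupling constant in the definition of $E$ and causes the factor $\lambda$ to silently drop from your formula for $E(\phi_\lambda)$---harmless for the argument, but worth renaming.
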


\begin{proof}
To show 1) let us consider the trial function 
$$
\Phi(\sigma,x)=\frac{\sqrt{m}}{(2\pi{\sigma}^2)^{\frac{1}{4}}}\ e^{- \frac{|x|^2}{4{\sigma}^2} }\ .
$$
A direct calculation shows that $\Phi(\sigma,x)\in \Gamma_m\ , $ and 
$$
E(\Phi(\sigma, x))=\frac{1}{2\sigma^2}\|\Phi^\prime(1,\cdot)\|^2-
\frac{{\sigma}^{-\mu}} {2\mu+2}\|\Phi(1,\cdot)\|_{2\mu+2}^{2\mu+2}\ .
$$ This proves that for $\mu>2$ the energy is unbounded from
below. Moreover, for $\mu<2$ and $\sigma $ big enough, $E(\Phi(\sigma,
\cdot))<0$. \par\noindent Now, let us show the bound
\eqref{energybound}. The nonlinear term and the point interaction term
in the energy are dominated by the kinetic energy.  Let us consider
first the nonlinear term.  \par\noindent Gagliardo-Nirenberg estimate \eqref{gajardo}
jointly with the condition $\psi\in \Gamma_m$ give
$$ \| \psi \|_{2\mu+2}^{2\mu+2}\leq C \| \psi^\prime \|^{\mu}
\| \psi \|^{\mu + 2} = C(\|\psi^\prime \|^2)^{\frac{\mu}{2}}
(\| \psi \|^2)^{1-\frac{\mu}{2}+\mu}=C
m^{\mu}\ (\|\psi^\prime \|^2)^{\frac{\mu}{2}}\ (\|\psi \|^2)^{1-\frac{\mu}{2}}
\equiv *\ .
 $$
 
With the use of the classical elementary inequality 
$$ xy\leq \frac{x^p}{p} + \frac{y^q}{q} \ , \qquad \left(\ \frac{1}{p}
+\frac{1}{q} = 1\ \right)
$$ one obtains, for any $\varepsilon > 0$, 
\be\begin{split} * &= C
m^{\mu}\ \left(\| \psi^\prime \|^2 \ \varepsilon \right)^{\frac{\mu}{2}}\ 
 \ \left( \|\psi \|^2 \varepsilon^{-\frac{\mu}{2-\mu}} \right)^{1-\frac{\mu}{2}}
\leq C
m^{\mu}\ \left[\frac{\| \psi^\prime \|^2 \varepsilon}{\frac{2}{\mu}} +
  \frac{\| \psi \|^2\ \varepsilon^{-\frac{\mu}{2-\mu}}}{\frac{2}{2-\mu}}\right]
\\ &= C m^{\mu} \frac{\mu}{2} \varepsilon \|\psi^\prime\|^2 + C
m^{\mu+1}\frac{2-\mu}{2}\varepsilon^{-\frac{\mu}{2-\mu}}\ .
\end{split}
\ee from which it follows 
\be 
\|\psi\|_{2\mu+2}^{2\mu+2}\leq \tilde
C_1 \varepsilon \|\psi^\prime\|^2_Q +\tilde C_2 
\ee 
In an analogous way
one can treat the point interaction part of the energy. Again taking
in account that $\psi \in \Gamma_m$ and by use of Sobolev embedding in
one dimension and elementary inequalities, one has 
\be
\frac{1}{\gamma} |\psi(0^+)-\psi(0^-)|^2 \leq \frac{1}{\gamma}\left[
  \varepsilon \|\psi^\prime \|^2 +\frac{ \|\psi\|^2
  }{\varepsilon}\right]\leq \frac{1}{\gamma}\varepsilon \|\psi\|^2_Q +\delta
\ee 
Collecting the estimates for the nonlinear part and for the point
interaction part of the energy and choosing $\varepsilon$ small enough
one gets points 2).

To prove $3)$, it suffices to notice that from \eqref{gajardo} one
immediately has
$$ E (\psi) \geq \f 1 2 \| \psi' \|^2 - \f{C \lambda m} 6 \| \psi' \|^2 
- \f{\sqrt m} \gamma \| \psi' \|, $$
for any $\psi \in \Gamma_m$, and the proof is complete.
\end{proof}
Note that the constrained action attains its minimum for every
positive value of $\mu$, at variance with the constrained energy,
which is unbounded from below for $\mu>2$. Even for $\mu<2$ it is not
guaranteed that the energy constrained on $\Gamma_m$ has a minimum,
i.e. that there exist a solution to the variational problem
\be
\nnn 
-{\mathcal E}_m ={\rm min}
\left\{E(\psi)\ |\ \ \psi\in \Gamma_m\right\}\ .  \ee An analysis of
this problem for NLS with point interactions will be given
in \cite{anv}. However, let us note that if a minimum exists at
$\psi_m$, and the constraint $\Gamma_m$ is regular at $\psi_m$, there
exists a Lagrange multiplier $\Lambda_m$ such that $E^\prime(\psi_m) +
\Lambda_m M^\prime(\psi_m)=0$. This means that $\psi_m$ is a
stationary point for $S_\omega$ with $\omega=\Lambda_m$ and so
$\psi_m\in I_\om$.

\section{Identification of the ground state: bifurcation}
\label{sec:bif}

\begin{prop} \label{propsystem}
Any solution $\psi$ to Problem \ref{prob1} 
has the form
\be \label{formsol}
\psi^{x_1,x_2,\theta}_\om(x) = \left\{             
\begin{array}{ll}                   
 - e^{i \theta} \phi_\om^{x_1} (x), & \qquad x<0\\                   
   e^{i \theta} \phi_\om^{x_2} (x), & \qquad x>0              
\end{array}       \right. , 
\ee
where the functions $\phi_\om^{x_i}$ have been defined in
\eqref{solitons}, $\theta$ can be arbitarily chosen, and the couple
$(x_1, x_2)$ is determined in the following way:
denoted $t_i = \tah (\mu \sqrt \om |x_i|)$,
then $(t_1, t_2)$ solves the system
\be \label{tsystem}
 \left\{
\begin{array}{ccc}
t_1^{2 \mu} - t_1^{2 \mu + 2} & = &  t_2^{2 \mu} - t_2^{2 \mu + 2} \\ 
t_1^{-1} + t_2^{-1} & = & \gamma \sqrt \om
\end{array}
\right. .
\ee
\end{prop}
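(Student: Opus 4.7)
The plan is to derive the Euler--Lagrange equation first, then exploit the ODE structure on each half-line together with the $D(H_\gamma)$ boundary data to force the announced form \eqref{formsol} and algebraic system \eqref{tsystem}.

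\emph{Step 1: Euler--Lagrange equation.} Since $\psi$ is a constrained minimizer of $S_\om$ on $\{I_\om=0\}$, there exists a Lagrange multiplier $\Lambda\in\RE$ with $S_\om'(\psi)=\Lambda\, I_\om'(\psi)$ in $Q^\star$. Pairing both sides with $\psi$, using $\langle S_\om'(\psi),\psi\rangle=I_\om(\psi)=0$ and the elementary identity
\[
\langle I_\om'(\psi),\psi\rangle \;=\; 2\,I_\om(\psi)-2\mu\lambda\|\psi\|^{2\mu+2}_{2\mu+2} \;=\; -2\mu\lambda\|\psi\|^{2\mu+2}_{2\mu+2} \;<\; 0,
\]
forces $\Lambda=0$. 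Hence $H_\gamma\psi+\om\psi=\lambda|\psi|^{2\mu}\psi$ in $Q^\star$; since the nonlinearity lies in $L^2(\RE)$ by \eqref{zobbolef}, this upgrades to $\psi\in D(H_\gamma)$, so $\psi\in C^2(\RE^\pm)$ and the boundary conditions in \eqref{domgamma} hold pointwise.

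\emph{Step 2: reduction to a real profile up to a global phase.} Writing $\psi=u+iv$ with real $u,v$, on each half-line both components solve the same linear second-order ODE $-w''+(\om-\lambda|\psi|^{2\mu})w=0$, whose coefficient tends to $\om>0$ at infinity. The space of $L^2$-solutions of this ODE on a half-line is one-dimensional, so $u$ and $v$ are proportional; consequently $\psi|_{\RE^\pm}=e^{i\theta_\pm}f_\pm$ with $f_\pm$ real. The matching $\psi'(0^+)=\psi'(0^-)$ then forces $e^{i(\theta_+-\theta_-)}\in\RE$, i.e.\ $\theta_+-\theta_-\in\{0,\pi\}$, and absorbing a residual sign into $f_-$ leaves a single global phase $e^{i\theta}$.

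\emph{Step 3: identification of solitons and derivation of \eqref{tsystem}.} By \eqref{solitons}, every real $L^2$-solution of $-\phi''+\om\phi=\lambda|\phi|^{2\mu}\phi$ on a half-line coincides with $\pm\phi_\om^{x_i}$ for some $x_i\in\RE$, hence $\psi(x)=e^{i\theta}\epsilon_i\phi_\om^{x_i}(x)$ on $\RE^\pm$ with $\epsilon_i\in\{\pm1\}$. A direct computation from \eqref{solitons} yields
\[
\phi_\om^{x_i}(0)=A\,(1-t_i^2)^{1/(2\mu)},\qquad (\phi_\om^{x_i})'(0)=A\sqrt\om\,(1-t_i^2)^{1/(2\mu)}\,\tah(\mu\sqrt\om\,x_i),
\]
with $A=\lambda^{-1/(2\mu)}(\mu+1)^{1/(2\mu)}\om^{1/(2\mu)}$ and $t_i=\tah(\mu\sqrt\om|x_i|)$. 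Equality of the one-sided derivatives forces $\epsilon_1\epsilon_2=-1$, which after absorbing one sign into $\theta$ produces the form \eqref{formsol}, and comparing magnitudes gives $(1-t_1^2)^{1/(2\mu)}t_1=(1-t_2^2)^{1/(2\mu)}t_2$; raising to the $2\mu$-th power yields the first line of \eqref{tsystem}. The jump condition $\psi(0^+)-\psi(0^-)=-\gamma\psi'(0^+)$ then reads $(1-t_1^2)^{1/(2\mu)}+(1-t_2^2)^{1/(2\mu)}=\gamma\sqrt\om\,(1-t_2^2)^{1/(2\mu)}t_2$; using the first relation to replace the ratio $(1-t_1^2)^{1/(2\mu)}/(1-t_2^2)^{1/(2\mu)}$ by $t_2/t_1$ and dividing by $t_2$ produces $t_1^{-1}+t_2^{-1}=\gamma\sqrt\om$, the second line of \eqref{tsystem}.

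The main obstacle I expect is Step 2: the linear ODE governing $u$ and $v$ has a coefficient $|\psi|^{2\mu}$ that is only known a posteriori, so the uniqueness (up to scalar) of its $L^2$-solution on a half-line must be combined with a careful use of the $D(H_\gamma)$ boundary data to exclude a nontrivial phase jump across the origin. Once the ground state is known to be real (up to a global phase), the identification via \eqref{solitons} and the extraction of \eqref{tsystem} from the two boundary conditions are essentially elementary algebra.
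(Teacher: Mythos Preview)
Your Steps~1 and~2 are essentially correct and parallel the paper's argument (the paper reduces to a real profile by writing $\psi=e^{i\theta(x)}\rho(x)$ rather than by your linear-ODE argument for $u,v$, but both routes work).

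The genuine gap is in Step~3, at the sentence ``Equality of the one-sided derivatives forces $\epsilon_1\epsilon_2=-1$.'' This does not follow. With your own formulas, the derivative matching $\psi'(0^+)=\psi'(0^-)$ reads
\[
\epsilon_2\,\sgn(x_2)\,(1-t_2^2)^{1/(2\mu)}t_2 \;=\; \epsilon_1\,\sgn(x_1)\,(1-t_1^2)^{1/(2\mu)}t_1,
\]
which (when nonzero) only yields $\epsilon_1\sgn(x_1)=\epsilon_2\sgn(x_2)$ together with the magnitude identity; it says nothing about $\epsilon_1\epsilon_2$ until the signs of $x_1,x_2$ are known. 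In fact the Euler--Lagrange equation alone admits the \emph{same-sign} critical point $\psi=e^{i\theta}\phi_\om^{0}$ (take $\epsilon_1=\epsilon_2=1$, $x_1=x_2=0$): it lies in $D(H_\gamma)$ since $\psi'(0)=0$ makes both boundary conditions trivially satisfied, and it sits on the Nehari manifold because $I_\om(\phi_\om^{0})=I_\om^{0}(\phi_\om^{0})=0$. So critical points with $\epsilon_1\epsilon_2=+1$ do exist, and your argument as written does not exclude them.

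What singles out the opposite-sign configuration \eqref{formsol} is precisely that $\psi$ is a \emph{minimizer}, not merely a critical point. The paper uses this explicitly: among all candidates $e^{i\theta_1}\phi_\om^{x_1}\chi_-+e^{i\theta_2}\phi_\om^{x_2}\chi_+$ with fixed $x_1,x_2$, the term $-\frac{1}{2\gamma}|\psi(0^+)-\psi(0^-)|^2$ in $S_\om$ is minimized when $\psi(0^+)$ and $\psi(0^-)$ are antiparallel, i.e.\ $e^{i\theta_1}=-e^{i\theta_2}$, which is the source of the minus sign in \eqref{formsol}. Only after fixing this sign do the two boundary conditions force $x_2<0<x_1$ and reduce to \eqref{tsystem}. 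You need to insert a minimality argument of this kind before your algebra in Step~3; once the opposite signs (and hence $\sgn(x_1)\neq\sgn(x_2)$) are established, your derivation of the two equations in \eqref{tsystem} goes through.
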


\begin{proof}
Consider the functional $J_{\om, \nu} \ = \ S_\om + \nu I_\om$, with $\nu$ a
Lagrange multiplier. Any solution of problem \ref{prob1} must then be a
stationary point for $J_{\om, \nu}$. Let $\psi$ be one of such
solutions. Then,
$$ S_\om^\prime (\psi) \psi \ = \ I_\om (\psi) \ = \ 0. $$
Furthermore,
$$ I_\om^\prime (\psi) \psi \ = \ - 2 \mu \lambda \| \psi \|_{2 \mu +
  2}^{2 \mu + 2}, $$
and therefore $J^\prime_{\om, \nu} (\psi) \psi \ = \ - 2 \nu \mu \lambda   
\| \psi \|_{2 \mu +
  2}^{2 \mu + 2}$. Thus, 
for nontrivial solutions 
it must be $\nu = 0$. We 
conclude that any non vanishing minimizer $\psi$ of the functional
$J_{\mu, \nu}$ must fulfil $S^\prime (\psi) = 0$, i.e.
\be \label{eins}
S_\om^\prime (\psi) \eta \ = \ 0, \quad \forall \eta \in Q.
\ee
Applying \eqref{eins} first to $\eta$, then to $\xi = - i \eta$, and
summing the two expressions, we find
\be \label{zwei}
B_\gamma (\psi, \eta) - \lambda ( |\psi |^{2 \mu} \psi, \eta) + \om (\psi, \eta) \ = \ 0  
,
\ee
where we used the shorthand notation
\be \begin{split} \nnn 
B_\gamma (\psi, \eta) \ : = \ ( \psi^\prime, \eta^\prime) - \f 1
\gamma ( \ov{\psi (0+)} -  \ov{\psi (0-)}) (\eta (0+) - \eta (0-)).
\end{split}
\ee
So, from \eqref{zwei} the following estimate holds.
\be \begin{split} \label{klmn}
| B_\gamma (\psi, \eta) | & \ \leq \ \lambda \| \psi \|^{2 \mu}_\infty
\| \psi \| \| \eta \| \ \leq \ C_\psi \| \eta \|, \qquad \forall \,
\eta \in Q.
\end{split} \ee
Notice that, letting $\eta$ vary among the functions vanishing in  a
neighbourhood of zero, we conclude from \eqref{klmn} that $\psi \in H^2 (\erre \backslash
\{0 \})$. Thus, for a generic $\eta \in Q$ a straightforward
computation gives
\be \label{bigamma2}
B_\gamma (\psi, \eta) \  = \ ( - \psi^{\prime \prime}, \eta ) - ( \eta
(0+) - \eta (0-) ) \left( \f {\ov{\psi (0+)} - \ov{\psi (0-)}} \gamma
  + \ov{\psi^\prime (0-)} \right), 
\ee
where we used the notation $\psi'' : = \chi_+ \psi_+'' + \chi_- \psi_-''$.
So, from \eqref{klmn} and \eqref{bigamma2}, we conclude that $\psi$ belongs to the domain
$D(H_\gamma)$ (see definition \eqref{domgamma}). As a consequence, the function
$ H_\gamma \psi - \lambda | \psi |^{2 \mu} \psi + \om \psi$ belongs to
$L^2 (\erre)$.




\noindent
Furthermore, from \eqref{zwei}
$$ (H_\gamma \psi - \lambda | \psi |^{2 \mu} \psi + \om \psi, \eta)  \
= \ 0, \qquad \forall \eta \in Q, $$
and, since $Q$ is dense in $L^2(\erre)$, 
\be \label{zero}
H_\gamma \psi - \lambda | \psi |^{2\mu} \psi + \om \psi \ = \ 0 \quad
{\rm{in}} \ L^2(\erre).  
\ee
Since $\psi$ lies in the domain of $H_\gamma$, equation \eqref{zero}
can be rewritten as 
\be \label{eqdiff} \left\{ \begin{array}{ccc}
- \psi^{\prime \prime} - \lambda | \psi |^{2 \mu} \psi + \om \psi & =
& 0, \quad x \neq 0, \quad \psi \in H^2 (\erre \backslash \{ 0 \}) \\
\psi^\prime (0+) & = & \psi^\prime (0-) \\
\psi (0+) - \psi (0-) & = & - \gamma \psi^\prime (0+)
\end{array} \right. .
\ee
Consider first the case of a real $\psi$. Then, the first equation can
be interpreted as the law of motion of a point particle with unitary
mass, moving on the
line under the action of the double-well potential $V (x) = \f \lambda {2 \mu + 2}
x^{2 \mu + 2} - \f \om 2 x^2$. 
By
standard methods of classical mechanics (see e.g. \cite{goldstein})
one immediately sees that the only solutions that vanish at infinity
correspond to the zero-energy orbits, whose shape is given by
\eqref{solitons}, where $x_0$ is a free parameter that, in the
mechanical problem, embodies the invariance under time translation. 

\n
Consider now the possibility of complex solutions. Writing $\psi (x) =
e^{i \theta (x)} \rho (x)$, the first equation in \eqref{eqdiff}
yields
$$ - \rho'' - 2 i \theta' \rho' - \lambda \rho^{2 \mu + 1} + (\om +
\theta'') \rho \ = \ 0,$$
thus, in order to make the imaginary part vanish, either $\rho'$ or
$\theta'$ must be identically equal to zero. If $\rho' = 0$, then $\psi$
either vanishes or is not an element of $L^2 (\erre)$. So it
must be $\theta' = 0$, and since $\erre \backslash \{ 0 \}$ is not
connected, one can choose a value for the phase in the positive
halfline and another value in the negative halfline. 
One then obtains that all possible solutions to \eqref{eqdiff} must be
given by
\be \label{solu}
\psi^{x_1,x_2,\theta_1, \theta_2}_\om(x) = \left\{             
\begin{array}{ll}                  
 e^{i \theta_1} \phi^{x_1}_\om (x), & \qquad x<0\\                   
e^{i \theta_2}  \phi^{x_2}_\om (x), & \qquad x>0              
\end{array}       \right. ,
\ee
where $x_1$ and $x_2$ are to be chosen in order to satisfy the
matching conditions at zero.

\par\noindent
We remark that, among the functions in \eqref{solu}, once fixed $x_1$
and
$x_2$ 
the minimum of $S_\om$ is accomplished if the
condition $  e^{i \theta_1} = -  e^{i \theta_2} $ is fulfilled. Indeed, it is clear that
such a condition minimizes the quantity $-{2\gamma}^{-1} | \psi (0+) - \psi (0-) |^2$,
while the other terms in the definition \eqref{som} of the functional
$S_\om$ are the same. This explains the phase factor in \eqref{formsol}.

Owing to the phase invariance of the problem, without losing generality we can
choose $\theta_1 = \pi$, $\theta_2 = 0$,
so
the matching 
conditions in \eqref{domgamma}
yield the following system for the unknowns $x_1$, $x_2$, and $\omega$:
\be \label{sistem1}
\left\{ \begin{array}{ccc}
\frac{\tah(\mu \sqrt \om x_1)}{\cosh^{\f 1 \mu}(\mu \sqrt \om
  x_1)} +  \frac{\tah(\mu \sqrt \om x_2)}{\cosh^{\f 1 \mu}
(\mu \sqrt \om x_2)}&=
&0  \\ & & \\
\frac{1}{\cosh^{\f 1 \mu}(\mu \sqrt \om x_2)} + 
\frac{1}{\cosh^{\f 1 \mu}(\mu \sqrt \om x_1)}& =&
\gamma \sqrt\om \frac{\tah(\mu \sqrt \om x_1)}{\cosh^{\f 1 \mu}
(\mu \sqrt \om x_1)}. 
\end{array}
\right.
\ee\par\noindent
\vskip 5pt
\par\noindent
By the first equation of system \eqref{sistem1},
$x_1$ and $x_2$ must have opposite sign. 
Furthermore, the second equation gives $x_1 > 0$.
So it is proven that $x_2 < 0 < x_1$.

Denoting $t_i = \tah(\mu \sqrt \om |x_i|)$, 
and exploiting elementary relations between hyperbolic functions,
system \eqref{sistem1} gives \eqref{tsystem}
and the proof is complete.

\end{proof}

Before explicitly showing the solutions to the problem \eqref{formsol}, 
\eqref{tsystem}, we prove a preliminary lemma.

\begin{lemma} \label{unicity}
For any $\mu > 0$, $a > 2 \sqrt{\f {\mu+1} \mu}$, there
exists a unique $\bar x \in \left( \f 2 a , 1 \right]$ such that
\be \nnn 
\f {(a^2 -1 ) \bar x^2 - 2 a  \bar x + 1}{(a \bar x - 1)^{2 \mu + 2}}
+ \bar x ^2 - 1 \ = \ 0
\ee
\end{lemma}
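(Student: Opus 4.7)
The plan is first to clear denominators and use a factorisation. Multiplying the equation by $(a\bar x-1)^{2\mu+2}$, which is strictly positive since $a\bar x-1\geq 1$ for $\bar x\geq 2/a$, and noting the identity $(a^2-1)x^2-2ax+1 = (ax-1)^2-x^2 = ((a-1)x-1)((a+1)x-1)$, the claim reduces to finding a unique zero $\bar x\in(2/a,1]$ of
\[ F(x) := ((a-1)x-1)((a+1)x-1) + (x^2-1)(ax-1)^{2\mu+2}. \]

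For existence, I evaluate $F$ at the endpoints. Direct cancellation gives $F(2/a) = \tfrac{(a-2)(a+2)}{a^2} + \tfrac{4-a^2}{a^2} = 0$, and $F(1) = a(a-2) > 0$ since $a>2\sqrt{(\mu+1)/\mu}>2$. A short computation yields
\[ F'(2/a) \ = \ -\frac{2\bigl[\mu a^2-4(\mu+1)\bigr]}{a}, \]
which is strictly negative exactly under the hypothesis $a>2\sqrt{(\mu+1)/\mu}$. Hence $F$ drops below zero in a right neighbourhood of $2/a$ and is positive at $x=1$, so the Intermediate Value Theorem produces at least one $\bar x\in(2/a,1)$ with $F(\bar x)=0$.

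For uniqueness, I pass to a logarithmic reformulation. On $(2/a,1)$ both $((a-1)x-1)((a+1)x-1)$ and $(1-x)(1+x)(ax-1)^{2\mu+2}$ are strictly positive, so $F(x)=0$ there is equivalent to $L(x)=0$, where
\[ L(x) \ := \ \log((a-1)x-1) + \log((a+1)x-1) - \log(1-x) - \log(1+x) - (2\mu+2)\log(ax-1). \]
One verifies $L(2/a)=0$, $L'(2/a)<0$, and $L(x)\to+\infty$ as $x\to 1^-$. If I can establish strict convexity $L''>0$ on $(2/a,1)$, then $L$ must hit zero exactly once in the open interval. A direct computation with pairwise grouping yields
\[ L''(x) \ = \ \frac{(a-2)(2(a-1)x-a)}{(1-x)^2((a-1)x-1)^2} - \frac{(a+2)(2(a+1)x+a)}{(1+x)^2((a+1)x-1)^2} + \frac{(2\mu+2)a^2}{(ax-1)^2}. \]
Convexity at the endpoints is easy: at $x=2/a$ one obtains $L''(2/a)=\tfrac{2a^2[\mu a^2-4(\mu+1)]}{a^2-4}>0$, and as $x\to 1^-$ the term $(1-x)^{-2}$ dominates. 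The main obstacle is to control the interior: the middle summand is always negative, the first summand is negative when $x<a/(2(a-1))$, and both must be dominated by the $(2\mu+2)a^2/(ax-1)^2$ contribution uniformly on $(2/a,1)$.

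Should the direct convexity argument prove recalcitrant, an alternative route returns to the geometric origin of the equation in Proposition \ref{propsystem}: the equation encodes $\phi(t_1)=\phi(t_2)$ with $\phi(t)=t^{2\mu}(1-t^2)$, $\bar x=t_2$ and $t_1=\bar x/(a\bar x-1)$. Uniqueness of the non-symmetric solution then follows from the strict unimodality of $\phi$ on $(0,1)$, with unique critical point at $t^\star=\sqrt{\mu/(\mu+1)}$, combined with the fact that the map $\bar x\mapsto \bar x/(a\bar x-1)$ is a strictly decreasing involution exchanging the two branches of $\phi^{-1}$ only at a single value of $\bar x$ beyond the symmetric fixed point $\bar x=2/a$.
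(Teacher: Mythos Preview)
Your existence argument is correct and essentially identical to the paper's: both verify $w(2/a)=0$, $w'(2/a)<0$, $w(1)>0$ (with $w(x)=F(x)/(ax-1)^{2\mu+2}$), and invoke the intermediate value theorem.

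The uniqueness part, however, is incomplete. In the logarithmic route you correctly reduce to showing $L''>0$ on $(2/a,1)$, verify it at the endpoints, and then stop, explicitly calling the interior bound ``the main obstacle'' and leaving it open. That is the whole difficulty; without it you have no uniqueness. Your fallback ``geometric'' paragraph does not fill the gap either: you restate the problem as $\phi(x)=\phi(\sigma(x))$ with $\sigma(x)=x/(ax-1)$ and then assert that unimodality of $\phi$ together with the involution property forces a single crossing. But for $x>t^\star$ both $\phi(x)$ and $\phi(\sigma(x))$ are decreasing, so nothing prevents multiple crossings without a quantitative comparison of their rates --- precisely the estimate you have not supplied. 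The phrase ``exchanging the two branches of $\phi^{-1}$ only at a single value'' is the conclusion, not an argument.

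The paper avoids convexity altogether and uses a Rolle-type cascade. Working directly with $w(x)=\dfrac{(a^2-1)x^2-2ax+1}{(ax-1)^{2\mu+2}}+x^2-1$, one observes that the existence of a second zero in $(2/a,1]$ (or of a local maximum at the first zero) forces, by repeated application of the mean value theorem, at least two sign changes of $w'''$ on $(2/a,+\infty)$. But the explicit formula
\[
w'''(x)=\bigl[-4\mu a^{3}(a^{2}-1)(2\mu+1)(\mu+1)x^{2}+\cdots\bigr](ax-1)^{-2\mu-5}
\]
is a quadratic in $x$ with negative leading coefficient over a positive factor, which one checks has at most one sign change on $(1/a,+\infty)$. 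This contradiction yields uniqueness. The key idea you are missing is to push the counting down to a derivative whose sign structure is algebraically transparent.
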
 

\begin{proof}
Let us denote $w (x) = 
\f {(a^2 -1 ) x^2 - 2 a  x + 1}{(a x - 1)^{2 \mu + 2}} + x^2 -1$. 
First, notice
that $w \left( \f 2 a\right) = 0$.
Furthermore,
$$
w^\prime  \left( \f 2 a\right) \ = \ \f 2 a ( 4 (\mu + 1)
- \mu a^2) 
$$
as $a > 2 \sqrt{\f \mu {\mu+1}}$.
Therefore $w (x) < 0$ in some right neighbourhood of $\f 2 a$.
On the other hand, $w (1) > 0$, so  the set $\Xi$ whose element are the 
zeroes of $w$ in $\left( \f 2 a, 1 \right]$, is not empty.
Let us denote $x_1 := \min \Xi$. Then, since $w$ is regular in $\left(
  \f 2 a, + \infty \right)$,
it must be either $w(x) < 0$ or  $w(x) > 0$ in some
right neighbourhood of $x_1$. In the first case, $x_1$ is a local maximum
for $w$. Besides, since $w(1) > 0$, there exists $x_2 > x_1$ such that
$w (x_2) = 0$.
As a consequence, there are two local minima $y_1 \in \left( \f 2 a,
  x_1 \right)$, $y_2 \in (x_1, x_2)$. Owing to the mean value lemma,
there exist three points $z_1$, $z_2$ and $z_3$, lying respectively in
a neighbourhood of $y_1$, $x_1$ and $x_2$, such that $w^{\prime
  \prime} (z_1) > 0$, $w^{\prime
  \prime} (z_2) < 0$ and $w^{\prime
  \prime} (z_3) > 0$. Owing to the mean value lemma again, there exist
$s_1 \in (z_1, z_2)$ and $s_2 \in (z_2, z_3)$ such that $w^{\prime
\prime  \prime} (s_1) < 0$ and $w^{\prime
\prime  \prime} (s_2) > 0$. From the explicit expression
\be \label{wtre}
\begin{split}
w^{\prime
\prime  \prime} (x) \ = \ & [- 4 \mu a^3 (a^2 - 1)(2 \mu + 1)(\mu +
1) x^2 + 4 a^2 (\mu + 1) (2 \mu + 1) (2 \mu a^2 + 3) x \\ & + 4 a (5 \mu^2
a^2 - 2 \mu^3 a^2 + \mu + 3 \mu a^2 - 1)] {(ax - 1)^{-2 \mu - 5}}
\end{split}
\ee
it is clear that $w^{\prime
\prime  \prime} (x) < 0$ for large $x$.
It follows that $w^{\prime
\prime  \prime}$ undergoes at least two 
changes of sign in the interval $\left( \f 2 a, + \infty \right)$, but the
expression \eqref{wtre} shows that in the interval $\left( \f 1 a, +
  \infty\right)$
there is a single change of sign
only. As a consequence, our starting assumption is false and
it must be $w(x) > 0$ in some neighbourhood of $x_1$. 

\noindent
Let us suppose
that there is a point $x_2 > x_1$ such that $w(x_2) = 0$. Following
the same reasoning as before, we conclude
that $w^{\prime
\prime  \prime}$ must change sign at least twice in $\left( \f 1 a, +
\infty \right)$,  that
contradicts \eqref{wtre}.

\noindent
As a consequence, there is only one zero (i.e. $x_1$) of $w$ in $\left(\f 2 a, +
\infty\right)$, so the lemma is proven. 
\end{proof}

\begin{theorem} \label{theo:bifurcation}
If $\om_0 < \om \leq \om^*$, 
then the 
solutions to Problem \ref{prob1} are given  by $\psi^{y, -y,
  \theta}_\om $
(see definition (\ref{formsol})),
with $\theta \in \erre$ and 
\be \label{bifurcation1}
y = \f 1 {2 \mu \sqrt \om} \log \f {\gamma \sqrt \om + 2} {\gamma \sqrt \om  - 2}. 
\ee
If $ \om > \om^*$, then the solutions
to Problem \ref{prob1} are given by $\psi^{y_1, -y_2, \theta}_\om $
and $\psi^{y_2, -y_1, \theta}_\om $, with $\theta \in \erre$ and 
\begin{equation} \label{bifurcation2}
\begin{split}
y_1 \ = \ \f 1 {2 \mu \sqrt \om} \log \left| \f {1 + t_1} {1 - t_1} \right|, &
\qquad
y_2 \ = \  \f 1 {2 \mu \sqrt \om} \log \left| \f {1 + t_2} {1 - t_2} \right|,
\end{split}
\end{equation}
where the couple $(t_1, t_2)$, with $t_1 < t_2$, solves the system \eqref{tsystem}.


\end{theorem}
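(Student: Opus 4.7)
The plan is to combine Proposition~\ref{propsystem}, which reduces the search for ground states to the algebraic system~\eqref{tsystem}, with Lemma~\ref{unicity}, which controls the asymmetric solutions of that system, and then to compare the action values at the candidate stationary states. Setting $a := \gamma\sqrt{\om}$, the second equation of~\eqref{tsystem} yields $t_1 = t_2/(a t_2 - 1)$; substituting into the first produces exactly the polynomial identity of Lemma~\ref{unicity} with $\bar x = t_2$. The symmetric ansatz $t_1 = t_2$ forces $t_1 = t_2 = 2/a$, admissible ($\leq 1$) precisely when $\om \geq \om_0$, and inverting $\tah(\mu\sqrt{\om}\, y) = 2/(\gamma\sqrt{\om})$ then yields formula~\eqref{bifurcation1}.

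For $\om_0 < \om \leq \om^*$, equivalently $a \leq 2\sqrt{(\mu+1)/\mu}$, I would rule out asymmetric solutions by a sign-reversed variant of the argument in the proof of Lemma~\ref{unicity}: here $w'(2/a) \geq 0$, so $w$ is nonnegative in a right neighbourhood of $2/a$, while the explicit expression~\eqref{wtre} still forces at most one sign change of $w'''$ on $(1/a,\infty)$. The same bookkeeping on local extrema forbids any further zero of $w$ in $(2/a, 1]$, so~\eqref{tsystem} admits only the symmetric solution; by Theorem~\ref{teo:exmin} the resulting $\psi^{y,-y,\theta}_\om$ with $y$ as in~\eqref{bifurcation1} is then the unique ground state up to phase.

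For $\om > \om^*$, Lemma~\ref{unicity} produces a unique $t_2 = \bar x \in (2/a, 1)$; the companion $t_1 = t_2/(a t_2 - 1)$ lies in $(0, 2/a)$, and swapping $t_1 \leftrightarrow t_2$ gives a second asymmetric critical point. Together with the symmetric one, \eqref{tsystem} has exactly three solutions (up to phase). On any critical point $S_\om(\psi) = \wt S(\psi) = \f{\lambda\mu}{2(\mu+1)}\|\psi\|_{2\mu+2}^{2\mu+2}$, and the substitution $s = \tah(\mu\sqrt{\om}(x - x_i))$ yields
\be \nnn
S_\om\bigl(\psi^{x_1,x_2,\theta}_\om\bigr) \ = \ K_{\lambda,\mu,\om}\,\bigl[ G(t_1) + G(t_2) \bigr], \qquad G(t) \ := \ \int_t^1 (1 - s^2)^{1/\mu}\, ds,
\ee
with $K_{\lambda,\mu,\om} > 0$. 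Identifying the ground states thus reduces to the strict inequality $G(t_-) + G(t_+) < 2 G(2/a)$ for the asymmetric pair $(t_-, t_+)$; formulas~\eqref{bifurcation2} will then follow from $y_i = \mu^{-1}\om^{-1/2}\,\mathrm{arctanh}(t_i) = \f{1}{2\mu\sqrt{\om}}\log|(1+t_i)/(1-t_i)|$.

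The main obstacle is exactly this inequality. Although $G$ is decreasing and convex on $(0,1)$, the relation $1/t_- + 1/t_+ = a = 2/(2/a)$ couples the asymmetric pair to the symmetric value $t_s = 2/a$ through the harmonic, not the arithmetic, mean, so elementary convexity is not conclusive. I would settle the sign by a local bifurcation analysis at $\om = \om^*$: the three branches coalesce there with common action $2 G(\sqrt{\mu/(\mu+1)})$, and a leading-order expansion of $G(t_-) + G(t_+)$ in the deviation $t_+ - \sqrt{\mu/(\mu+1)}$, with $t_-$ and $\om$ expressed in terms of $t_+$ via~\eqref{tsystem}, shows that the asymmetric action strictly undercuts the symmetric one for $\om$ just past $\om^*$. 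Combined with Theorem~\ref{teo:exmin}, the smoothness in $\om$ of all three branches, and the isolation of the asymmetric pair from the symmetric one granted by Lemma~\ref{unicity}, a standard continuity argument propagates the strict local inequality to every $\om > \om^*$, so the asymmetric stationary states are the ground states throughout this regime.
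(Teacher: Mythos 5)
Your reduction to the system \eqref{tsystem}, the formulas \eqref{bifurcation1}--\eqref{bifurcation2}, and the identification of the action on critical points with $K\,[G(t_1)+G(t_2)]$, $G(t)=\int_t^1(1-s^2)^{1/\mu}ds$, all match the paper. But the two steps you leave to ``variants'' and ``standard arguments'' are precisely where the paper has to work hardest, and neither of your proposed resolutions closes the gap. First, the exclusion of asymmetric solutions for $\om_0<\om\leq\om^*$ does not follow from a sign-reversed Lemma \ref{unicity}. In that lemma the contradiction comes from forcing $w'''$ to change sign twice on $(1/a,+\infty)$, whereas \eqref{wtre} allows only one change; but when $w'(2/a)\geq 0$ the ``bad'' scenario is $w$ dipping below zero once and returning, which produces a local maximum followed by a local minimum, hence only one sign change of $w''$ and one of $w'''$ --- perfectly compatible with \eqref{wtre}. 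The paper instead proves directly that $\inf_{T_2}(t_1^{-1}+t_2^{-1})=2\sqrt{(\mu+1)/\mu}$, attained only at the corner $t_1=t_2=\bar t$, via the Lagrange condition $g^2(t_1)=g^2(t_2)$ with $g(t)=t^2f'(t)$ and the integral comparison \eqref{precambio}--\eqref{525}; this is a substantive argument with no shortcut through $w$.

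Second, and more seriously, your treatment of the inequality $G(t_1)+G(t_2)<2G(2/a)$ for all $\om>\om^*$ is not a proof. A local expansion at $\om=\om^*$ (which you assert but do not carry out) would at best give the inequality for $\om$ slightly beyond $\om^*$, and ``continuity plus smoothness of the branches'' cannot propagate a strict inequality globally: nothing in your argument excludes the two action branches crossing at some larger frequency, and the isolation of the solution sets in the $(t_1,t_2)$-plane is irrelevant to whether their action values coincide. The paper closes this by a genuinely different device: it introduces $\varphi(t)=-t/(\gamma\sqrt\om\,t-1)$ and the one-variable function $q(t)=\int_{\varphi(t)}^t(1-\nu^2)^{1/\mu}d\nu$, shows that the stationary points of $q$ are exactly the solutions of \eqref{tsystem} (so there are precisely three of them), computes $q''(2/(\gamma\sqrt\om))>0$ for $\om>\om^*$, and concludes $q(t_2)>q(2/(\gamma\sqrt\om))$ because $q$ is monotone between consecutive critical points. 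You would need either this enumeration-of-critical-points argument or a monotonicity-in-$\om$ statement for the difference of the two actions; as written, the key inequality remains unproved.
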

 
\begin{proof}
The function 
\be \label{effe}
f (t) : = t^{2\mu} - t^{2\mu+2}
\ee
 vanishes
at the points $0$ and $1$, and is strictly positive in the interval $(0,1)$.  
Furthermore, in the interval $(0, 1)$ its only stationary point is 
$\bar t : = \sqrt {\f \mu
{\mu + 1}}$, where the function $f$ has a local maximum and takes the value 
$m : = \f {\mu^\mu}{(\mu + 1)^{\mu +1}}$.

\noindent
As a consequence,
given $a > 0$, the system 
\be \label{asys}
a = f (t_1) = f (t_2)
\ee
 in the unknowns $t_1$ and $t_2$,
has no solutions for $a > m$, the unique
solution $t_1 = t_2 = \bar t$ for $a = m$, and,
imposing $t_1 < t_2$,
 three solutions for $0 \leq a < m$:
 indeed, there exists a unique couple $t_1, t_2$, with $t_1 \in [0, \bar t)$,
$t_2 \in (\bar t , 1]$, such that $f(t_1) = f(t_2) = a$. Therefore, the 
three couples
$(t_1, t_1)$ $(t_2, t_2)$, $(t_1,t_2)$, solve \eqref{asys}.

So the set of the solutions
to the first equation in \eqref{tsystem} with $t_1 \leq t_2$ consists of the union of 
\be \nnn 
T_1 : = \{ 0 \leq t_1 = t_2 \leq 1 \}
\ee 
and 
\be \label{tidue}
T_2 : = \{ (t_1, t_2), \ 0 \leq t_1 <
\bar t < t_2 < 1, \ f(t_1) = f(t_2) \}.
\ee
Due to the regularity of $f$, $T_2$ is a regular curve
(see Figure 1).

\begin{figure}
\begin{center}
{}{}\scalebox{0.43}{\includegraphics{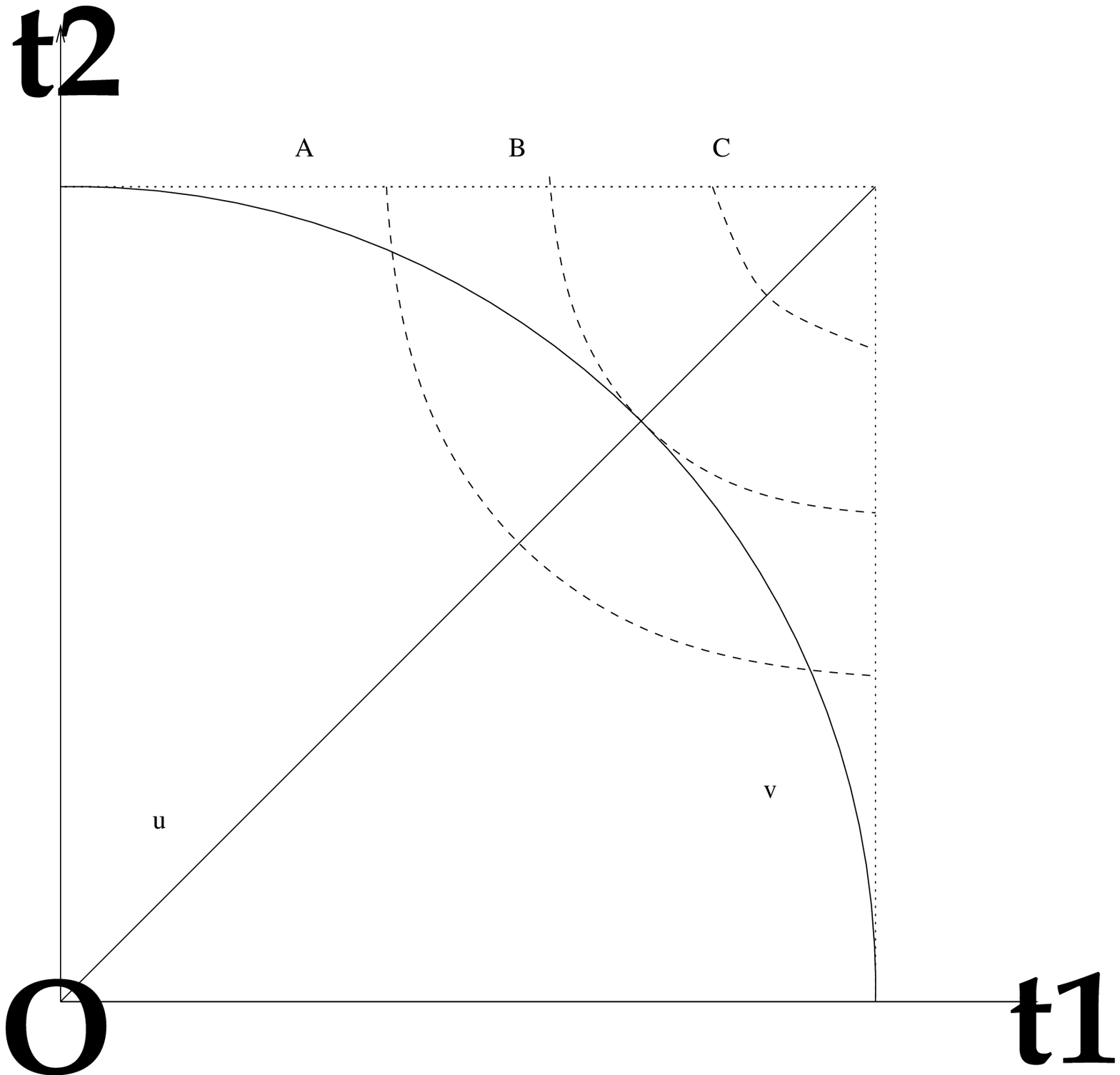}}
\caption{{\bf The system \eqref{tsystem}}. Full lines represent
the solutions to the first equation, while dashed lines represent the
family of hyperboles given by the second equation. The concavity of
the curve represented $T_2$ refers to a sufficiently large value of $\mu$
(for instance, $\mu > 1/2$). For $\mu$ near zero, $T_2$ can exhibit some
changes in convexity, but the result on bifurcation and on the number of
solutions still holds.}
\end{center}
\end{figure}

We consider the second equation in \eqref{tsystem}. Varying the
parameter $\om$, it describes a family of hyperboles in the plane $(t_1, t_2)$, whose
intersections
with $T_1$ and $T_2$ provide the required solutions to the
system \eqref{tsystem}.

First, observe that
\be \label{mint1}
\min_{t_1, t_2 \in T_1} ( t_1^{-1} +  t_2^{-1}) \ = \ 2,
\ee
and such a minimum is attained at $t_1 = t_2 = 1$.

\noindent
Second,
we claim that 
\be \label{minhyp}
\inf_{t_1, t_2 \in T_2} ( t_1^{-1} +  t_2^{-1}) \ = 2 \sqrt{\f{\mu +1}
\mu} ,
\ee
and such a value is attained at $t_1 = t_2 = \bar t$. To show this, 
we use the Lagrange multiplier method, and find that any
stationary point of the function $t_1^{-1} + t_2^{-1}$ constrained on $f(t_1) = f(t_2)$
must satisfy the condition 
$$
t_1^2 f^\prime (t_1) = - t_2^2 f^\prime (t_2), \quad t_1 {\mbox{ and
  }} t_2 \neq \bar t.
$$ 
Let us define $g(t) : = t^2 f^\prime (t)$. Notice that $g > 0$ in
$(0, \bar t)$, and $g < 0$ in $(\bar t, 1]$. Therefore, the
condition
$g (t_1) = - g (t_2)$ with $0 < t_1 < \bar t < t_2 < 1$
is equivalent to
$g^2 (t_1) = g^2 (t_2)$, $t_1 < t_2$.

Observe that
\be \label{precambio}
 g^2(t) \ = 
\int_{\bar t}^{t} \f d {ds} g^2 (s)  \, ds \ = \ 
\  \int^{t}_{\bar t}  ( 4 s^3 (f^{\prime} (s))^2 + 2 s^4  f^{\prime}
(s) f^{\prime \prime}
(s)) \, ds.
\ee
In the interval $[0, \bar t]$ the function $f$ is monotone, so it is
possible
to perform the change of variable $t \to y = f(t)$. Then
\be \label{links}
g^2 (t_1) \ = \  \int_{m}^{f(t_1)}   p ((\tau(y)) \, dy  
\ee
where we introduced the function
\be \label{py}
p (t) \ : = \ 4 t^3 f^\prime (t) + 2 t^4  f^{\prime \prime}
(t)
\ = \
(8 \mu^2 + 4 \mu) t^{2 \mu + 2}  - (8 \mu^2 + 20 \mu + 12)  t^{2 \mu + 4} 
\ee
and the function $\tau$, that 
is the inverse of $f$ in the
interval
$[0, \bar t]$.

Analogously, exploiting the monotonicity of $f$ in the interval $[\bar
t, 1]$, one can change the variable in the integral in
\eqref{precambio} and obtain
\be \label{linksbis}
g^2 (t_2) \ = \  \int_{m}^{f(t_2)}   p (\sigma(y)) \, dy
\ee
where 
the function $\sigma$ is the inverse of $f$ in the
interval
$[\bar t,1]$.

%


The function $p$ is non negative in $\left[0, \sqrt{\f {2 \mu^2 + \mu} {2
    \mu^2 + 5 \mu + 3}} \right]$, and vanishes at the endpoints of the same
interval. Its only stationary point in the interval $(0, 1)$ 
is located at $\widetilde t : = \sqrt{\f{2\mu^2
    + \mu} {2 \mu^2 + 7 \mu + 6}}$ and  is a local
maximum,
where 
\be \label{pmax}
p \left( \widetilde t \right) \ = \ \f 4 {(2 \mu + 3)^{\mu + 1}} \left( \f{2\mu^2 + \mu}
      {\mu + 2} \right)^{\mu + 2}.
\ee
In the interval $( \widetilde t, 1]$ the function $p$ is 
negative and monotonically decreasing. As a consequence, from
\eqref{linksbis} one gets
\be \label{links2}
g^2 (t_2) \ = \  \int^{m}_{f(t_2)}   | p (\sigma(y))| \, dy.
\ee
Besides, since
\be \label{pbar}
p (\bar t) \ = \ -8 \f  {\mu^{\mu + 2}}{(\mu + 1)^{\mu + 1}}.
   \ee 
one immediately has $\bar t > \sqrt{\f {2 \mu^2 + \mu}{2 \mu^2 + 5 \mu + 3}} > \wt t$.

Comparing equations \eqref{pmax} and \eqref{pbar}, 
$p \left( \widetilde t \right) < | p (\bar t) |$ if and only if
$$
( \mu + 1)^{\mu + 1} \left( \mu + \f 1 2 \right)^{\mu + 2} \ < \ 
 \left( \mu + \f 3 2 \right)^{\mu + 1} ( \mu + 2)^{\mu + 2}
$$
which holds for any value of $\mu$. Therefore, $\max_{t \in [0, \bar
  t]} | p (t) | = |p (\bar t) |$. 

\n
As a consequence, if $s_1 < \bar t < s_2$, then
$$ | p (s_1) | \ < \ | p (\bar t) | \ < \ | p (s_2) |, $$
and recalling that $t_1$ and $t_2$ are defined to fulfil $f(t_1) = f(t_2)$,
\be \label{525}
g^2 (t_1) \ = \ 2 \int_{m}^{f(t_1)} p (\tau (y)) \, dy \ < \
  2 \int^{m}_{f(t_2)} |p (\sigma (y))| \, dy \ = \ g^ 2 (t_2).
\ee
where in the last identity we used \eqref{links2}.

\noindent
It follows that the only point where $g^2 (t_1) = g^2 (t_2)$ is given by
$t_1 = t_2 = \bar t$, so there are no stationary points of
the function $t_1^{-1} + t_2^{-1}$ on the set $T_2$. By comparison
with
the endpoints $(0,1)$ and $(1,0)$ one immediately has that $t_1 = t_2
= \bar t$
corresponds to a minimum, so by direct computation our claim \eqref{minhyp} is proved.

As a consequence,
from
\eqref{mint1} and \eqref{minhyp} we have that:
\begin{itemize}
\item if $\om \leq \om_0$, then the
system \eqref{tsystem} has no solutions;
\item if $\om_0 < \om \leq \om^*$, then the only solution 
to \eqref{tsystem} lies in $T_1$ and reads $t_1 = t_2 = \f \gamma {2 \sqrt \om}$;
\item if $\om > \om^*$, then the system \eqref{tsystem}
  exhibits  three solutions: the first one lies in 
  $T_1$ and is given by $t_1 = t_2 = \f \gamma {2 \sqrt \om}$. 

\n
Furthermore,
in the set $T_2$ consider the region $t_1 <  \f \gamma {2 \sqrt \om} < t_2$;
expressing $t_1$ from the second equation of \eqref{tsystem} and plugging
it into the first one, one obtains the equation $w(t_2) = 0$, with $w$
defined as in the proof of lemma \ref{unicity} and $a = \gamma \sqrt \om$.
By virtue of the same lemma, such equation has a unique solution $t_2^*$ 
in the considered
interval. The unique value of $t_1$ such that $(t_1, t_2^*)$ is a solution
to \eqref{tsystem} is given by $(t_1^*)^{-1} = \gamma \sqrt \om - (t_2^*)^{-1}$.

\n
Due to the symmetry
  of \eqref{tsystem} under exchange of $t_1$ and $t_2$, the third and
last solution is
  given by $(t_2^*, t_1^*)$. 
\end{itemize}

\noindent
Any solution to \eqref{tsystem} singles out
a stationary point of the functional $S_\om$ on the Nehari manifold,
that
is unique up to multiplication by a phase. Obviously, the value of
$S_\om$ on such functions is independent of the phase.

Defining $y$ like in \eqref{bifurcation1}, $y_1$ and $y_2$ like in
\eqref{bifurcation2},  and owing to \eqref{propsystem}, we conclude
that:
\begin{itemize}
\item if $\om_0 < \om \leq \om^*$,
then the only stationary point (up to a phase)
for the functional $S_\om$ is given by $\psi_\om^{y, -y, 0}$.
Due to its uniqueness, it must be
the minimizer for $S_\om$ whose existence is established by theorem \ref{teo:exmin}.
The explicit expression for $y$
given in \eqref{bifurcation1} is found imposing $t_1 = t_2$ in the second
equation of \eqref{tsystem}.

\item
For $\om > \om^*$
two further solutions appear. 
Keeping into account the signs of $x_1$ and $x_2$ established in 
proposition \eqref{propsystem},
the two related families of solutions 
can be denoted by $\psi_\om^{y_1, - y_2, \theta}$, $\psi_\om^{y_2,
  -y_1, \theta}$, with $y_1$ and $y_2$ positive numbers.
Obviously, the functional $S_\om$ takes the same value on
them. In order to establish which stationary point
is the minimizer we must compare $S_\om (\psi_\om^{y_1, -y_2, \theta})$ with $S_\om (\psi_\om^{y, -y, \theta})$.
\end{itemize}
Let us proceed with such a comparison.
From \eqref{stilde} we know that the functional $S_\om$
reduces to $\widetilde S$ when evaluated on stationary states.
We have
\be \nnn 
\begin{split}
S_\om (\psi_\om^{y, - y,\theta}) \ = 
\ & \f {\om^{\f 1 2 + \f 1 \mu}
( \mu + 1 )^{\f 1 \mu }} { 2 \lambda^{\f 1 \mu }} \left[ \int_{-1}^1
( 1 - t^2)^ {\f 1 \mu} \, dt \ - \ \int_{-\f 2 {\gamma \sqrt \om}}^{\f 2 {\gamma \sqrt \om}}
( 1 - t^2)^ {\f 1 \mu} \, dt 
\right],  
\end{split} \ee
and 
\be \nnn 
\begin{split}
S_\om (\psi_\om^{y_1, -y_2, \theta}) \ = 
\ & \f {\om^{\f 1 2 + \f 1 \mu}
( \mu + 1 )^{\f 1 \mu }} { 2 \lambda^{\f 1 \mu }} \left[ \int_{-1}^1
( 1 - t^2)^{\f 1 \mu} \, dt \ - \ \int_{-t_1}^{t_2} ( 1 - t^2)^{\f 1 \mu} \, dt
\right]. 
\end{split} \ee
Introducing the function $\varphi (t) : = - \f {t} {\gamma \sqrt \om t - 1}$, we obtain
\be \nnn 
\begin{split}
S_\om (\psi_\om^{y, - y,\theta}) \ = 
\ & \f {\om^{\f 1 2 + \f 1 \mu}
( \mu + 1 )^{\f 1 \mu }} { 2 \lambda^{\f 1 \mu }} \left[ \int_{-1}^1
( 1 - t^2)^ {\f 1 \mu} \, dt \ - \ \int_{\varphi \left(\f 2 {\gamma
      \sqrt \om} \right)}^{\f 2 {\gamma \sqrt \om}}
( 1 - t^2)^ {\f 1 \mu} \, dt 
\right] \\
S_\om (\psi_\om^{y_1, -y_2, \theta}) \ = 
\ & \f {\om^{\f 1 2 + \f 1 \mu}
( \mu + 1 )^{\f 1 \mu }} { 2 \lambda^{\f 1 \mu }} \left[ \int_{-1}^1
( 1 - t^2)^ {\f 1 \mu} \, dt \ -  \int_{\varphi(t_2)}^{t_2} ( 1 - t^2)^{\f 1 \mu} \, dt
\right].
\end{split} \ee
We define the function 
\be \label{qut}
 q (t) \ : = \  \int_{\varphi (t)}^{t}
( 1 - \nu^2)^ {\f 1 \mu} \, d\nu, 
\ee
thus
\be \nnn 
\begin{split}
S_\om (\psi_\om^{y, - y,\theta}) \ = 
\ & \f {\om^{\f 1 2 + \f 1 \mu}
( \mu + 1 )^{\f 1 \mu }} { 2 \lambda^{\f 1 \mu }} \left[ \int_{-1}^1
( 1 - t^2)^ {\f 1 \mu} \, dt \ - \ q \left( \f 2 {\gamma \sqrt \om} \right)
\right] \\
S_\om (\psi_\om^{y_1, -y_2, \theta}) \ = 
\ & \f {\om^{\f 1 2 + \f 1 \mu}
( \mu + 1 )^{\f 1 \mu }} { 2 \lambda^{\f 1 \mu }} \left[ \int_{-1}^1
( 1 - t^2)^ {\f 1 \mu} \, dt \ - \ q (t_2)
\right]. 
\end{split} \ee


\noindent
Since 
$$ q^\prime (t) \ = \ 
( 1 - t^2)^{\f 1 \mu} - ( 1 - \varphi^2 (t))^{\f 1 \mu} 
\f {\varphi^2 (t)} {t^2}, $$
the stationary points of $q$ must solve the equation
\be \label{premiere}
t^2 (1 - t^2)^{\f 1 \mu} - (1 - \varphi^2 (t))^{\f 1 \mu} \varphi^2
(t) \ =
\ 0,
\ee
which is equivalent to the first equation of \eqref{tsystem} in the unknowns $(t, - \varphi (t))$. 
Furthermore, from \eqref{qut},
\be \label{deuxieme}
 t^{-1} - \varphi^{-1} (t) \ = \ \gamma \sqrt \om.
\ee
Thus, from \eqref{premiere} and \eqref{deuxieme} we conclude that
the couple $(t, - \varphi (t))$ solves the system \eqref{tsystem}.
This
implies that the functional $\widetilde S$, restricted to functions of
the kind \eqref{formsol}, has stationary points only in
correspondence with the solutions of \eqref{tsystem}. In other words,
for $\om > \om^*$ there are three stationary points for $q$, and they coincide with
$y, y_1, y_2$.

\noindent
A straightforward computation provides
$$
q^{\prime \prime} \left( \f 2 {\gamma \sqrt \om} \right) \ = \ \f 2
{\gamma \sqrt \om}
\left( 1 - \f 4 {\gamma^2 \om} \right) \left( \gamma^2 \om - 4 \, \f {\mu
    + 1} \mu \right)
$$ 
which is positive if $\om > \om^*$.
Then, $\f 2 {\gamma \sqrt \om}$ is a minimum for $q$, and since $q$ is regular
and there are no other stationary points, 
it must be
$$
S_\om (\psi_{\om}^{y_1, -y_2,\theta}) \ < \ S_\om (\psi_{\om}^{y, - y,\theta}).
$$
Thus we conclude that
$\psi_{\om}^{y_1,  -y_2,\theta}$ and $\psi_{\om}^{y_2, - y_1,\theta}$ are the
minimizers, and the proof is complete.
\end{proof}

\vskip 10pt
We end this section showing that the branch of nonlinear standing waves bifurcates from the trivial (vanishing) stationary state for $\omega>\omega_0=\frac{4}{\gamma^2}\ .$
\begin{prop}
Let us consider the branch of symmetric standing waves 
$$(\omega_0, \omega^*)\ni \omega \mapsto \psi_{\om}^{y, -y,\theta}\in L^2(\RE)\cap H^2(\erre \backslash \{ 0 \})\ . $$ The following relations hold true
$$
\lim_{\omega\rightarrow \omega_0}\ \|\psi_{\om}^{y, -y,\theta}\|\ =\
\lim_{\omega\rightarrow \omega_0}\ \|\psi_{\om}^{y, -y,\theta}\|_Q =  
\lim_{\omega\rightarrow \omega_0}\ \|\psi_{\om}^{y,
  -y,\theta}\|_{H^2(\erre \backslash \{ 0 \})} =\ 0 
$$
\end{prop}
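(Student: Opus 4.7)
The key observation is that the explicit symmetric standing wave given by Theorem \ref{theo:bifurcation} is a pair of sech-profile solitons whose centres $\pm y$ escape to infinity as $\omega\to\omega_0^+$, so that on each half-line one integrates only an exponentially small tail. From Theorem \ref{theo:bifurcation},
$$
y \ = \ \frac{1}{2\mu\sqrt\omega}\log\frac{\gamma\sqrt\omega+2}{\gamma\sqrt\omega-2}\ .
$$
Since $\omega_0=4/\gamma^2$, we have $\gamma\sqrt\omega\to 2^+$ as $\omega\to\omega_0^+$, and therefore $\mu\sqrt\omega\, y\to+\infty$. The plan is to reduce each of the three norms to a tail of an integral of the form $\int_{\mu\sqrt\omega\, y}^{\infty}\sech^p(u)\,du$ or one of its variants, all of which vanish because $\sech$ decays exponentially.

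For the $L^2$-norm I use \eqref{formsol} and \eqref{solitons} together with the obvious symmetry $|\psi_\omega^{y,-y,\theta}(-x)|=|\psi_\omega^{y,-y,\theta}(x)|$. The substitution $u=\mu\sqrt\omega(x+y)$ turns the integral over $x>0$ of $|\phi_\omega^{-y}|^2$ into a tail of a convergent integral of $\sech^{2/\mu}$:
$$
\|\psi_\omega^{y,-y,\theta}\|^2 \ = \ \frac{2\,\lambda^{-1/\mu}(\mu+1)^{1/\mu}\omega^{1/\mu}}{\mu\sqrt\omega}\int_{\mu\sqrt\omega\,y}^{\infty}\sech^{2/\mu}(u)\,du \ \longrightarrow \ 0,
$$
since the prefactor is bounded and $\mu\sqrt\omega\,y\to+\infty$. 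Applying the same substitution to $|(\phi_\omega^{-y})'|^2 = C_\omega^{2}\,\omega\,\sech^{2/\mu}\tanh^2$, with $C_\omega=\lambda^{-1/(2\mu)}(\mu+1)^{1/(2\mu)}\omega^{1/(2\mu)}$, reduces $\|(\psi_\omega^{y,-y,\theta})'\|^2$ to a bounded prefactor times $\int_{\mu\sqrt\omega\,y}^{\infty}\sech^{2/\mu}(u)\tanh^2(u)\,du$, which is the tail of an integrable function and hence vanishes. Together these give $\|\psi_\omega^{y,-y,\theta}\|_Q\to 0$.

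For the $H^2(\erre\setminus\{0\})$ convergence I exploit that $\psi_\omega^{y,-y,\theta}$ satisfies the pointwise identity $-\psi''+\omega\psi=\lambda|\psi|^{2\mu}\psi$ on each half-line (see \eqref{eqdiff}). Hence
$$
\|(\psi_\omega^{y,-y,\theta})''\|^2 \ \leq \ 2\omega^2\,\|\psi_\omega^{y,-y,\theta}\|^2 + 2\lambda^2\,\|\psi_\omega^{y,-y,\theta}\|_{4\mu+2}^{4\mu+2}\ .
$$
The first summand has just been shown to vanish; for the second, one more change of variable gives a tail of $\int\sech^{(4\mu+2)/\mu}(u)\,du$, or, more directly, the Gagliardo--Nirenberg estimate \eqref{gajardo} yields the Sobolev-type bound $\|\cdot\|_{4\mu+2}\leq C\|\cdot\|_Q$ and then $\|\psi_\omega^{y,-y,\theta}\|_Q\to 0$ suffices. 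Assembling these estimates gives the stated convergence in $H^2(\erre\setminus\{0\})$. The proof contains no genuine obstacle: the single nontrivial fact that is used is $y\to+\infty$, and everything else reduces to routine soliton-tail bookkeeping.
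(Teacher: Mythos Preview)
Your proof is correct and follows exactly the route the paper takes: the paper's argument consists only of the observation that $t_1=t_2=2/(\gamma\sqrt\omega)\to 1$ as $\omega\to\omega_0^+$, hence $y\to+\infty$, after which it declares the conclusion ``immediately follows from the explicit expression'' for $\psi_\omega^{y,-y,0}$. You have simply spelled out that last step in detail, reducing each norm to the tail of a convergent $\sech$-integral and handling the $H^2$ piece via the stationary equation; nothing in your write-up departs from the paper's intended argument.
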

\begin{proof}
The result immediately follows by observing that the solution
$\left(\f 2 {\gamma \sqrt \om},\f 2 {\gamma \sqrt \om} \right)$
 to \eqref{tsystem} tends to $(1,1)$ as $\om \to \om_0 + 0$. So, in
 the same limit $y$
 tends to $+ \infty$, and the result immediately follows from the
 explicit expression (\eqref{solu}, \eqref{bifurcation1})
for $\psi_\om^{y,-y,0}$.
\end{proof}


\section{Stability and instability of the ground states}

\subsection{The linearized evolution around a stationary solution}
We study the second variation of the functional $S_\om$. 
Indeed, for any $w \in Q$
$$
\f d {ds} S_\om (\psi + s w) \ = \ {\mbox{Re}} \ 
( H_\gamma \psi + \om \psi - \lambda | \psi |^{2 \mu} \psi , w ),
$$
which is often referred to as the fact that 
$S^\prime_\om [ \psi ] \ = \ H_\gamma \psi + \om \psi - \lambda 
| \psi |^{2 \mu} \psi$ in a weak sense. Therefore, linearizing the equation
\eqref{nlsdelta'} around the {\em real} function $\psi$ can be made by computing the
second variation of the functional $S_\om$. Given $w \in Q$, with 
$u = {\mbox {Re}} \, w$ and $v = {\mbox {Im}} \, w$
 one has
\be \begin{split}
\f {d^2} {ds^2} S_\om ( \psi + s w) \ = & \ \| w^\prime \|^2
+ \om \| w \|^2 - \f 1 \gamma | w (0+) - w (0-) |^2 - \lambda
\int_\erre | w (x)|^2 | \psi (x) |^{2 \mu}  dx \\
& \ - 2 \mu \lambda 
\int_\erre  u^2  (x) | \psi (x) |^{2 \mu} dx \\
\ = & \ \| u^\prime \|^2
+ \om \| u \|^2 - \f 1 \gamma | u (0+) - u (0-) |^2 - \lambda ( 2 \mu + 1)
\int_\erre  u (x)^2 | \psi (x) |^{2 \mu} dx  
 \\
\  & \ + \| v^\prime \|^2
+ \om \| v \|^2 - \f 1 \gamma | v (0+) - v (0-) |^2 - \lambda 
\int_\erre  v (x)^2 | \psi (x) |^{2 \mu} dx .  
\end{split} \ee 
Defining the operators
\be \label{elleunoeddue}
\begin{split}
L_1^{\gamma,\om}u\  & = H_{\gamma}u + \om u - (2\mu +1) \lambda |\psi|^{2\mu} u \\   
L_2^{\gamma,\om}v\ & = H_{\gamma}v + \om v -\lambda |\psi|^{2\mu} v 
\end{split}
\ee
on the domain 
$D(L_1^{\gamma,\om})=D(L_2^{\gamma,\om})=D(H_{\gamma})$ (see \eqref{domgamma}),
we get
\be \begin{split}
\f {d^2} {ds^2} S_\om ( \psi + s w) \ = & \
( u, L_1^{\gamma, \om} u ) + ( v, L_2^{\gamma, \om} v ).
\end{split} \ee 

Now we  derive the general spectral properties of the operators 
$L_1^{\gamma,\om}$ and $L_2^{\gamma,\om}$,
 needed to prove stability or instability of the stationary states. 

It is easy to show that $L_1^{\gamma,\om}$ and $L_2^{\gamma,\om}$ are
self-adjoint operators in $L^2(\RE)$. In fact they are abstract
Schr\"odinger operators of the form $(H_\gamma 
+ \om) + V_{i}(x) $ , where the perturbation $V_i(x)={\rm c}_i
|\psi|^{2\mu}(x) \ $  is given by a bounded and rapidly 
decaying function, and ${\rm c}_1=2\mu+1\ $, ${\rm c}_2=1\ .$  Let us
consider the couple of  
operators $L_2^{\gamma,\om}$ and $- \frac{d^2}{dx^2} + \om -\lambda
|\psi|^{2\mu} $. Both are self-adjoint extensions of the same closed
symmetric operator with defect indices $(2,2)$; so their resolvents
differ for a finite rank operator.  As a consequence, 
thanks to the Weyl's theorem (see \cite{rs4}, Theorem XIII.4),
the essential spectra
of $L_2^{\gamma,\om}$ and  $- \frac{d^2}{dx^2} + \om -\lambda
|\psi|^{2\mu} $ coincide. Moreover,  $\sigma_{e}(- \frac{d^2}{dx^2} + \om -\lambda
|\psi|^{2\mu})=\sigma_{e}(- \frac{d^2}{dx^2} + \om)=[\omega +\infty)$,
because the potential  $V_2$ is $(- \frac{d^2}{dx^2} + \om)$-compact.
The same reasoning holds for the operator  $L_1^{\gamma,\om}$, so we can
conclude
$$\sigma_e(L_1^{\gamma,\om})=\sigma_e(L_2^{\gamma,\om})=[\om,+\infty)
  .$$ \p 
Moreover, the fact that $L_1^{\gamma,\om}$ and $L_2^{\gamma,\om}$ are symmetric and relatively compact
  perturbations of the self-adjoint nonnegative operator $H_\gamma + \om$ allows to conclude
  (see for example \cite{[HT]}, Theorem 6.32) that the
  possible discrete spectrum is finite or accumulates at the border of
  the essential spectrum, which in our case is positive. So the negative spectrum
  is finite. \par\noindent
We will often use the previous remarks without repeating the
 argument.\par\noindent
We need more detailed spectral information on the operators $L_1^{\gamma,\om}$ and $L_2^{\gamma,\om}$, in particular concerning the number of negative eigenvalues.
A standard  technique to deal with this sort of problems 
in the case of operators with domain elements which are regular enough
 (typically Schr\"odinger operator with a smooth enough potential) relies 
on the Sturm oscillation theorem which relates the number of nodes of
an  eigenfunction to the ordering of the corresponding eigenvalue.
So, if $\psi$ is positive, then it  
coincides with the first eigenfunction, which is simple and corresponds
 to the ground state. This reasoning is not applicable in our case, due to the singular character of $H_{\gamma}$, with possibly discontinuous domain elements. 
By the way, the problem is not completely settled neither for the case of the 
milder $\delta$ interaction, so we give an independent proof of the
relevant spectral properties for this case too.\p
The results are based on a generalization of a ground state
transformation for the singular operator $H_{\gamma}$. 
\begin{prop} \label{diecitre}
Let $e^{i \om t} \psi (x)$ be a stationary solution to problem
\eqref{intform} with $\psi (0+) \psi (0-) < 0$.
Then, for the operator $L_2^{\gamma, \om}$ defined in
\eqref{elleunoeddue}, the following statements hold:
\p
a) ${\rm Ker}\ L_2^{\gamma, \om} \ =\ {\rm{Span}} \left\{\psi \right\}$\p
b) $L_2^{\gamma, \om} \ \geq \ 0$

\end{prop}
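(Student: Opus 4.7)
The plan is to adapt the classical Perron-type ground-state substitution $v=\psi u$ to the setting of the singular operator $H_\gamma$. Since the stationary state $\psi$ is built from translated $\sech^{1/\mu}$ profiles on each half-line, it never vanishes on $\RE\setminus\{0\}$, and hence for every $v\in D(L_2^{\gamma,\om})=D(H_\gamma)$ the quotient $u:=v/\psi$ is well defined on each half-line, with one-sided traces $u(0\pm)=v(0\pm)/\psi(0\pm)$. The target of the computation is the identity
\[
(v,L_2^{\gamma,\om}v)\;=\;\int_{\RE\setminus\{0\}}\psi^{2}\,(u')^{2}\,dx\;-\;\frac{\psi(0+)\,\psi(0-)}{\gamma}\bigl(u(0+)-u(0-)\bigr)^{2},
\]
in which the hypothesis $\psi(0+)\psi(0-)<0$ together with $\gamma>0$ makes \emph{both} summands manifestly non-negative.

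To establish this identity I would start from $(v,L_2^{\gamma,\om}v)=F_\gamma(v)+\om\|v\|^{2}-\lambda\int|\psi|^{2\mu}|v|^{2}$, expand $v'=\psi' u+\psi u'$, and develop $\|v'\|^2$ into three pieces. The only non-trivial piece is $\int(\psi')^{2}u^{2}\,dx$, which I would integrate by parts on each half-line via $(\psi')^{2}u^{2}=(\psi\psi'u^{2})'-\psi\psi''u^{2}-2\psi\psi'uu'$. The interior contribution $-\int\psi\psi''u^{2}\,dx$ is killed by the stationary equation $-\psi''=\lambda|\psi|^{2\mu}\psi-\om\psi$ (valid away from the origin), annihilating exactly the $\om\|v\|^{2}$ and the nonlinear term of the quadratic form; the cross term $-2\int\psi\psi'uu'$ cancels the middle piece $+2\int\psi\psi'uu'$ coming from $\|v'\|^{2}$; and what remains is $\int\psi^{2}(u')^{2}\,dx$ plus the collected boundary contributions.

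The crucial step is then the bookkeeping of those boundary contributions together with the form term $-\gamma^{-1}|v(0+)-v(0-)|^{2}$. Writing $p=\psi(0+)$, $q=\psi(0-)$, $s=\psi'(0)$, and $u_{\pm}=u(0\pm)$, the $D(H_\gamma)$ matching conditions for $\psi$ (namely $\psi'(0+)=\psi'(0-)=s$ and $p-q=-\gamma s$) turn the boundary total into
\[
\frac{1}{\gamma}\Bigl[(q-p)\bigl(qu_{-}^{2}-pu_{+}^{2}\bigr)-\bigl(pu_{+}-qu_{-}\bigr)^{2}\Bigr],
\]
and a direct expansion collapses the bracketed quantity to $-pq\,(u_{+}-u_{-})^{2}$, yielding the displayed identity. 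This algebraic cancellation is the point at which the sign hypothesis $pq<0$ enters decisively, and it is the step I expect to require the most care: the matching conditions for the test function $v\in D(H_\gamma)$ (which are not the same as those for $\psi$) must also be used to translate $v(0\pm)$ and $v'(0\pm)$ into the $(u_\pm,p,q,s)$-variables without spurious leftovers.

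Once the identity is in hand, both conclusions are immediate. Non-negativity of the two right-hand summands proves (b). If $(v,L_2^{\gamma,\om}v)=0$, then $u'\equiv 0$ on each half-line and $u(0+)=u(0-)$, so $u$ is the same constant on both sides and $v=c\psi$; combined with the observation that $\psi$ itself lies in $\ker L_2^{\gamma,\om}$ (this is precisely equation \eqref{stat}), this gives (a). Extension from $D(H_\gamma)$ to the form domain, if needed elsewhere, follows by density.
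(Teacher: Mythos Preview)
Your proposal is correct and follows essentially the same ground-state substitution strategy as the paper: writing $u=v/\psi$, integrating by parts on each half-line, using the stationary equation for $\psi$ to kill the interior terms, and reducing everything to $\int\psi^{2}|u'|^{2}$ plus a boundary contribution whose sign is controlled by $\psi(0+)\psi(0-)<0$. Your boundary term $-\gamma^{-1}\psi(0+)\psi(0-)\bigl(u(0+)-u(0-)\bigr)^{2}$ is exactly the paper's $-\bigl|\psi(0+)\phi(0-)-\psi(0-)\phi(0+)\bigr|^{2}\big/\bigl(\gamma\,\psi(0+)\psi(0-)\bigr)$ after the obvious substitution $u(0\pm)=\phi(0\pm)/\psi(0\pm)$; the only organisational difference is that you start from the quadratic form $F_\gamma(v)+\om\|v\|^{2}-\lambda\int|\psi|^{2\mu}|v|^{2}$ (so the $\delta'$ contribution is already packaged into $-\gamma^{-1}|v(0+)-v(0-)|^{2}$), whereas the paper starts from the differential identity $-\phi''+\om\phi-\lambda|\psi|^{2\mu}\phi=-\psi^{-1}(\psi^{2}(\phi/\psi)')'$ and collects the $\phi'\bar\phi$ boundary terms separately. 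A small side remark: in your computation the matching conditions for the test function $v$ are in fact \emph{not} needed---only those for $\psi$ (namely $\psi'(0+)=\psi'(0-)$ and $\psi(0+)-\psi(0-)=-\gamma\psi'(0)$) enter the algebra, since the $v$-jump is already accounted for by the form term.
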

\begin{proof} 
Along this proof we denote the operator $L_2^{\gamma, \om}$ by $L_2$.
Proceeding like in the proof of Proposition \ref{propsystem} up to formula \eqref{zero},
one obtains that the function $\psi$ is regular in $\erre \backslash \{0 \}$ and
fulfils the boundary conditions defined by the
$\delta^\prime$-interaction. So,
by the equation for the stationary states \eqref{zero} again, one immediately verifies
that $L_2 \psi = 0$ and point $a)$ is proven.

\n
To prove $b)$, notice that for any $\phi \in D(L_2)$ the following
identity holds at any point $x \neq 0$:
$$
-\phi^{\prime\prime} + \om \phi -|\psi|^{2\mu} \phi = -\frac{1}{\psi}
\frac{d}{dx}\left(\psi^2\frac{d}{dx}\left(\frac{\phi}{\psi}\right)\right)\ ;
$$
integrating by parts,
\be \label{gsr}
( \phi, L_2 \phi) \ = \ \int_{-\infty}^0
\psi^2\left|\frac{d}{dx}\left( \frac{\phi}{\psi} \right)  \right|^2 dx +
\int^{\infty}_0
\psi^2\left|\frac{d}{dx}\left( \frac{\phi}{\psi} \right) \right|^2 dx +
\lim_{\varepsilon\to 0} \left[\phi^\prime \overline{\phi}
  -\frac{\psi^\prime}{\psi}|\phi|^2\right]_{-\varepsilon}^{+\varepsilon}. 
\ee
The integral terms in \eqref{gsr} are non negative and equal zero if
and only if $\phi = \psi$. Let us focus on the contribution of the
boundary, that consists of two terms. Using boundary conditions, the first term gives
\be \label{arifirst} 
\lim_{\ve \to 0} \left[ \phi^\prime \bar \phi \right]_{- \ve}^\ve \ = \
- \gamma | \phi^\prime (0+) |^2
\ee
For the second term we immediately get
\be \label{arisecond} 
- \lim_{\ve \to 0} \left[ \f{\psi^\prime}{\psi} | \phi |^2
\right]_{- \ve}^{\ve} \ = \ \f { \psi^\prime (0-) \psi (0+) | \phi (0-) |^2
-  \psi^\prime (0+) \psi (0-) | \phi (0+) |^2} {\psi (0+) \psi (0-)}
\ee
Summing \eqref{arifirst} and \eqref{arisecond}, and using the matching condition for both $\psi$ and
$\phi$ we finally get
\be \nnn 
\begin{split}
- \lim_{\ve \to 0} \left[ \f{\psi^\prime}{\psi} | \phi |^2
\right]_{- \ve}^{\ve} \ = \
& -  \f{  \psi^2 (0+) | \phi (0-) |^2 +  \psi^2 (0-) | \phi (0+) |^2
- 2 \psi (0+) \psi (0-) {\mbox{Re}} (  \phi (0+) \overline{\phi (0-)} )} 
{\gamma  \psi (0+)  \psi (0-)} \\
= \ & -  \f{  \left| \psi (0+)  \phi (0-) -  \psi (0-)  \phi (0+) \right|^2 }
{\gamma  \psi (0+)  \psi (0-)} 
\end{split}\ee
Due to the hypothesis $ \psi (0+) \psi (0-) < 0$, verified by all
ground states, we conclude that the boundary term in \eqref{gsr} is non negative, and
this completes the proof.
\end{proof}
\p
\begin{rem}{\em
An analogous proposition holds in the case, treated in \cite{reika}, \cite{fukujean},
\cite{lacozza} of a $\delta$ 
interaction with strength $\alpha$ 
(no matter if attractive or repulsive). 
One has that (the meaning of the symbols is the obvious one)\p 
a) ${\rm Ker}\ L_2^{\alpha,\om}={\mbox{Span}} \left\{\psi \right\}$\p
b) $L_2^{\alpha,\om}\geq 0$\p
In this case, the boundary term in \eqref{gsr} vanishes.}
\end{rem}

Now we prove the spectral features of interest for the operator $L_1^{\gamma, \om}$
defined in \eqref{elleunoeddue}.

\n
Consider first the case of the symmetric stationary state 
$\psi = \psi_\om^{y, -y, \theta}$. We define
\be \label{l1explicit}
L_{1,\rm{sym}}^{\gamma, \om} \ = \ H_\gamma + \om - \f {\om (\mu + 1) (2 \mu + 1)}
{\cosh^2 ( \mu \sqrt \om ( | x | + y ) )}
\ee
and
\be 
L_{2,\rm{sym}}^{\gamma, \om} \ = \ H_\gamma + \om - \f {\om (\mu + 1)}
{\cosh^2 ( \mu \sqrt \om ( | x | + y ) )}, 
\ee
where $\tanh (\mu \sqrt \om y ) = \f 2
{\gamma \sqrt \om}.$
\begin{prop} \label{l1simm}
Fixed $\mu > 0$, the operator $L_{1,{\rm{sym}}}^{\gamma, \om}$ has:

\noindent
a) A trivial kernel and one simple negative eigenvalue, if $\om < \om^*$;

\noindent
b) A one-dimensional kernel, spanned by the function
$$\xi_{-1} (x) \ = \ \f {\sinh(\mu \sqrt \om (|x| + y))} {\cosh^{1 +
    \f 1 \mu}(\mu \sqrt \om (|x| + y))} , $$
where $y$ has been defined in \eqref{bifurcation1},
 and one simple negative eigenvalue, if $\om = \om^*$;

\noindent
c) A trivial kernel and two simple negative eigenvalues or a double negative
eigenvalue, if $\om > \om^*$.
\end{prop}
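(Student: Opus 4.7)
The plan is to exploit the reflection symmetry of the potential to decompose $L_{1,\text{sym}}^{\gamma,\om}$ into two half-line Schr\"odinger operators and then analyze each using the explicit P\"oschl--Teller structure. First I would verify that the parity operator $P\psi(x)=\psi(-x)$ preserves $D(H_\gamma)$ (the $\delta'$-matching is reflection-invariant) and commutes with the reflection-symmetric potential, so that $L_{1,\text{sym}}^{\gamma,\om}$ splits into its even and odd parts. Direct inspection of the $\delta'$ conditions shows that even elements of $D(H_\gamma)$ satisfy $\psi'(0^\pm)=0$, while odd elements satisfy the Robin condition $\psi'(0^+)=-(2/\gamma)\psi(0^+)$. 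Through the natural isometries $L^2_{\text{even/odd}}(\erre)\simeq L^2(\erre^+)$, the two parts become half-line operators $A_N$ and $A_R$ on $L^2(\erre^+)$ with common potential $W(x)=\om-\om(\mu+1)(2\mu+1)\sech^2(\mu\sqrt\om(x+y))$ and Neumann / Robin boundary conditions, respectively.

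For the odd sector $A_R$, I would use $v_*(x):=\phi_\om(x+y)$ as a trial function. The defining relation for $y$ gives $\phi_\om'(y)/\phi_\om(y)=-\sqrt\om\tanh(\mu\sqrt\om y)=-2/\gamma$, so $v_*$ satisfies the Robin condition, and the soliton equation yields $\langle A_R v_*,v_*\rangle=-2\mu\lambda\|v_*\|_{L^{2\mu+2}(\erre^+)}^{2\mu+2}<0$, which provides at least one negative eigenvalue. For the sharp count ``exactly one negative, trivial kernel'' throughout $\om>\om_0$, I would exploit that $\phi_\om'(\cdot+y)$ is (up to scale) the unique $L^2$ zero-energy solution of the ODE, and invoke the Weyl--Titchmarsh/Herglotz monotonicity of the boundary function $E\mapsto \psi_E'(0)/\psi_E(0)+2/\gamma$: a direct computation shows that at $E=0$ it equals $(\mu/2)(\gamma\om-4/\gamma)$, which is positive for $\om>\om_0$, while it tends to $-\infty$ as $E\to-\infty$, so by monotonicity it has exactly one zero in $(-\infty,0)$ and never vanishes at $E=0$ in our range.

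For the even sector $A_N$, the function $u_*(x):=-\phi_\om'(x+y)$ is positive on $[0,\infty)$ (since $\phi_\om'<0$ on $(0,\infty)$), solves $A_Nu_*=0$ (by differentiating the soliton equation in $x$), and decays at infinity. Its Neumann data $u_*'(0)=-\phi_\om''(y)$ vanishes exactly when $\tanh^2(\mu\sqrt\om y)=\mu/(\mu+1)$, which, combined with $\tanh(\mu\sqrt\om y)=2/(\gamma\sqrt\om)$, occurs iff $\om=\om^*$. Hence at $\om=\om^*$ we have $u_*\in D(A_N)$ and $A_Nu_*=0$; positivity then forces $u_*$ to be the ground state, so $A_N\geq 0$ with simple kernel, whose even extension to $\erre$ is precisely $\xi_{-1}$. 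For $\om\neq\om^*$ the matching fails and $\ker A_N=\{0\}$. By analytic perturbation theory, the simple eigenvalue of $A_N$ at $0$ extends to a smooth branch $\lambda(\om)$, and the Feynman--Hellmann identity, combined with implicit differentiation of $\tanh(\mu\sqrt\om y(\om))=2/(\gamma\sqrt\om)$, yields $\lambda'(\om^*)<0$. Since an eigenvalue of $A_N$ can only enter or leave $(-\infty,0)$ by crossing $0$ (the essential spectrum starts at $\om>0$ and the potential is bounded below), and since the $L^2$ zero-energy solution satisfies the Neumann condition only at $\om=\om^*$, the number of negative eigenvalues of $A_N$ is constant on each of $(\om_0,\om^*)$ and $(\om^*,+\infty)$, jumping by one at $\om^*$. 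Reassembling the two sectors gives exactly cases (a), (b), (c), with the situation $\om>\om^*$ producing two simple negative eigenvalues (one of each parity) in general and a single eigenvalue of multiplicity two only in the exceptional case where the even and odd branches coincide.

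The main technical obstacles are the two Sturm-type counts: monotonicity of the Weyl function for the Robin half-line problem in $A_R$, uniform in $\om>\om_0$, and the sign determination $\lambda'(\om^*)<0$ for the crossing eigenvalue of $A_N$. Both reduce to elementary but lengthy manipulations using the explicit P\"oschl--Teller form of the potential and the algebraic relation defining $y(\om)$.
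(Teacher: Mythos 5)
Your argument is correct in outline but follows a genuinely different route from the paper's. The paper bounds the number of negative eigenvalues from above variationally (the ground state minimizes $S_\om$ on the Nehari manifold, a codimension-one constraint, so $n(L_1)\le 1$ for $\om\le\om^*$; for $\om>\om^*$ it adds the antisymmetry constraint $\varphi(0+)=-\varphi(0-)$ and maps the restricted problem unitarily onto a $\delta$-interaction problem to get $n\le 2$), obtains $n\ge 2$ for $\om>\om^*$ by explicitly computing $F_1(\xi_{-1})<0$ and using parity to kill the cross term, and identifies the kernel by matching the explicit zero-energy ODE solutions. You instead diagonalize the parity operator first — legitimately, since the $\delta'$ boundary conditions and the potential are reflection invariant — and reduce everything to a Neumann and a Robin half-line problem, treated by Weyl/oscillation theory and by eigenvalue perturbation at the zero crossing. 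This buys a cleaner structural picture (each negative eigenvalue carries a definite parity, and the degenerate alternative in part (c) is exactly an accidental coincidence of the two sectors), and it bypasses both the Nehari codimension count and the $U_\sharp$ trick. Three points you only flag must, however, actually be closed. First, monotonicity of the Weyl function $E\mapsto \psi_E'(0)/\psi_E(0)$ on \emph{all} of $(-\infty,0)$ holds only if the half-line Dirichlet problem has no negative eigenvalue (a Herglotz function increases only between its poles); this is true here because the decaying zero-energy solution is proportional to $\sinh(\mu\sqrt\om(x+y))\cosh^{-1-\frac1\mu}(\mu\sqrt\om(x+y))$, which does not vanish for $x>0$ since $y>0$, but the verification must appear. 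Second, the implication ``positive zero mode $\Rightarrow$ ground state of $A_N$'' needs the standard positivity argument for the Neumann half-line realization; it is not decorative, since it is precisely what anchors the even-sector count at zero negative eigenvalues at $\om=\om^*$ — without that anchor your jump-by-one argument would leave the base count undetermined and parts (a) and (c) would not follow. Third, the sign $\lambda'(\om^*)<0$ is a genuine computation (and a degenerate outcome would force you to higher order). With these filled in, your proof is complete and of comparable length to the paper's; it also adapts more readily to other parity-symmetric singular vertices, whereas the paper's route avoids half-line spectral theory altogether.
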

\begin{proof}
For shorthand, in this proof we will denote the operators 
$L_{1,\rm{sym}}^{\gamma, \om}$ and $L_{2,\rm{sym}}^{\gamma, \om}$ 
by $L_1$ and $L_2$, respectively. Furthermore, the function $\psi_\om^{y, -y, 0}$
will be denoted by $\psi$.

Consider first the case
$\om \leq \om^*$.  By stationarity of $\psi$, the 
following identity must hold up to higher order terms in $w$: 
\be 
S_\om (\psi + w ) \ = \ 
 S_\om (\psi) + \f 1 2 ( u,  L_1 u)
 + \f 1 2 (v, L_2 v),
\ee
for any $w = u + i v$ in $Q$, with $u$ and $v$ real. 

\noindent
By Proposition \ref{diecitre}, the operator $L_2$ is
non-negative, and by Weyl's theorem on the stability of the essential
spectrum (see for example \cite {[HS]} or Theorem XIII.4 in \cite{rs4}), one has  
$\sigma_{ess}(L_1)=\sigma_{ess}(H_{\gamma} + \omega)=
[\om, + \infty)$. Furthermore, Theorem \ref{theo:bifurcation}
guarantees that  $ \psi$
minimizes the functional $S_\om$ on the 
Nehari manifold.
Thus, since the Nehari
manifold has codimension one, the operator $L_1$ has at most one negative
eigenvalue.

\n
On the other hand,
\be \label{notice}
(\psi, L_1 \psi) \ = \ - 2 \mu \lambda 
\|  \psi \|_{2 \mu + 2}^{2 \mu + 2} \ < \ 0,
\ee
so we can conclude that for $\om \leq \om^*$ the operator $L_1$ 
has exactly one negative eigenvalue.

Concerning the kernel of $L_1$,
we recall that the only square-integrable solution to the linear differential equation
\be \label{eqkernel}
- \xi^{\prime \prime} + \om \xi  - \f {\om (\mu + 1) (2 \mu + 1)}
{\cosh^2 ( \mu \sqrt \om  \ \cdot )} \xi \ = \ 0
\ee
is given, up to a factor, by
\be \label{solkernel}
\xi (x) \ = \ \f {\sinh (\mu \sqrt \om x)}{\cosh^{1 + \f 1 \mu}  (\mu \sqrt \om x)}.
\ee
Furthermore, there cannot be a solution $\zeta \notin {\mbox{Span}} (\xi)$ 
to equation \eqref{eqkernel} such that $\int_a^\infty | \zeta (x)|^2
dx < \infty$ for some finite $a$,
otherwise, by invariance under reflection of \eqref{eqkernel}, the function $\zeta (-x)$ would
be a solution to \eqref{solkernel} too, satisfying
$\int^{-a}_{-\infty} | \zeta (x)|^2 dx < \infty$,
so we would obtain three linearly independent solutions to
\eqref{eqkernel}. As a consequence, the possible solutions to the equation
\be
\label{eqkergam}  L_1 \xi
 + \om \xi  - \f {\om (\mu + 1) (2 \mu + 1)}
{\cosh^2 ( \mu \sqrt \om \  ( |\cdot | + y) )} \xi \ = \ 0
\ee
are given by $\xi_a (x) \ = \ \chi_+ (x) \xi (x + y) + a \chi_- (x)  \xi (x -
y)$, with $a \in \comple$,
provided that they
fulfil the matching condition of the $\delta^\prime$ interaction. 
Such conditions prescribe the identity of the left and the right
derivative at zero, namely $\xi^\prime (y)  = a \xi^\prime
(-y)$. Since $\xi^\prime$ is even, this implies either $a = 1$ or
$\xi^\prime (y) = 0$. In the first case, imposing the  boundary condition 
$\xi_1 (0+) - \xi_1 (0-) = - \gamma \xi_1^\prime (0+)$ leads to the equation
$ 2 \sinh (\mu \sqrt \om y) \cosh (\mu \sqrt \om y) = \gamma \sqrt \om [
\sinh^2 (\mu \sqrt \om y) - \mu]$, that cannot be solved in $y$ for any $\mu > 0$.
In the second case, one has 
$\xi_a^\prime (0) = 0$, which is
fulfilled, 
as $\xi_a^\prime (0+) = \sqrt \om \cosh^{-2-\f 1 \mu} (\mu \sqrt \om y) [ \mu -
\sinh^2 (\mu \sqrt \om y) ]$,
 if and only if $\sinh^2 (\mu \sqrt \om y) = \mu$. 
This is equivalent to $\tanh (\mu \sqrt \om y) = \sqrt{\f \mu {\mu +
    1}}$,
that, owing to
definition of $y$ in \eqref{l1explicit},
is verified only for $\om = \om^*$. Furthermore, since zero is a
stationary point, $\xi_a$ must be continuous at zero, so $a = -1$.

Thus we proved points $a)$ and  $b)$, and the case 
$\om \leq \om^*$ is exhausted.  

In order to prove point $c)$, let us write the spectrum of $L_1$ as
$$
\sigma (L_1) \  : = \ \{ \nu_1, \dots, \nu_n \} \cup
 \{ \tau_1, \dots, \tau_m \} \cup [\om, + \infty), 
$$
where $\nu_i < \nu_j < 0$ and $\tau_l > \tau_m \geq 0$ for any $i < j$
and any $l > m$. The boundedness and fast decay in $x$ of the last term in the
l.h.s. of \eqref{eqkergam} ensures that both $m$ and $n$ are finite. Moreover the essential spectrum coincides with the one of $L_1$, thanks to the Weyl's theorem again.

By \eqref{notice}, we know that $n> 0$. 
Denoted by $P_\alpha$ the orthogonal projection in $L^2$ on the
eigenspace associated to the eigenvalue $\alpha$, we define the following operators:

\n
- $P_-$ is the projection on the space $\bigoplus_j P_{\nu_j}$;

\n
- $P_+$ is the projection on the space $\bigoplus_j P_{\tau_j}$;

\n
- $P_c$ is the projection on the space associated to the 
essential spectrum $[ \om, + \infty)$ of $L_1$.

Let us suppose that $n=1$. Then, denoted by $F_1$ the
quadratic form associated to the operator $L_1$,
there exists at least a non vanishing combination
$\eta$
of $\psi$ and $\xi_{-1}$ that satisfies $F_1 (\eta) \geq 0$.
Indeed, denoted by $\psi_1$ the only (up to a phase) normalized
eigenfunction associated to the eigenvalue $- \nu_1$, we define the
function
$$ \eta \ : = \ - \f {( \psi_1, \xi_{-1})}{( \psi_1,
  \psi )}   \psi + \xi_{-1}. $$
Notice first that, since by \eqref{notice} $F_1 (\psi) < 0$, $\psi$
and $\psi_1$ cannot be orthogonal, therefore $\eta$ is well defined. Furthermore,
$\psi$ and $\xi_{-1}$ are linearly independent as they have different
parity, so it must be $\eta \neq 0$. 
Since $( \psi_1,  \eta ) = 0$, $\eta$ has no components in
the negative part of the spectrum of $L_1$, so 
$F_1 (\eta) \geq 0$.

\noindent
But this is not the case. Indeed, for a generic combination $\phi =
\alpha \psi + \beta \xi_-$, we have
\be
 F_1 (\phi) \ = \ | \alpha |^2  F_1
(\psi) + | \beta |^2  F_1
(\xi_{-1}) + 2 \mbox{Re} \bar \alpha \beta \langle L_1 \psi, \xi_{-1}
\rangle \ = \ | \alpha |^2  F_1
(\psi) + | \beta |^2  F_1^{\gamma, \om}
(\xi_{-1}), 
\label{negativo}
\ee
as the mixed term vanishes, being the scalar product of an even and an
odd function. 


\noindent
Now we compute $F_1 (\xi_{-1})$. We notice that, due to the continuity of $\xi_{-1}$,
 the term related to the point interaction
vanishes, so,
after integrating by parts, we get
\be \begin{split}
F_1 (\xi_{-1}) \ = & \ \lim_{\ve \to 0} \left[
\int_{-\infty}^{-\ve} \xi_{-1} (x) \left( - \xi_{-1}^{\prime \prime} (x) +
\om \xi_{-1} - \f {\om (2 \mu + 1) (\mu + 1)} {\cosh^2 (\mu \sqrt \om (x
- y)) } \xi_{-1} (x) \right) \, dx \right. \\
& \left. \ + \int^{\infty}_{\ve} \xi_{-1} (x) \left( - \xi_{-1}^{\prime \prime} (x) +
\om \xi_{-1} - \f {\om (2 \mu + 1) (\mu + 1)} {\cosh^2 (\mu \sqrt \om (x
+ y)) } \xi_- (x) \right) \, dx \right. \\ & \left.
+ \xi_{-1} (0) ( \xi_{-1}^\prime (0-) -  \xi_{-1}^\prime (0+)) 
\right] \\
\ = & \ \xi_{-1} (0) ( \xi_-^\prime (0-) -  \xi_{-1}^\prime (0+)),  
\end{split} \ee
where we used the fact that, by definition of $\xi_{-1}$, 
$$  
- \xi_{-1}^{\prime \prime} (x) +
\om \xi_{-1} (x) - \f {\om (2 \mu + 1) (\mu + 1)} {\cosh^2 (\mu \sqrt \om (|x|
+ y)) } \xi_{-1} (x) \ = \ 0, \qquad \forall x \neq 0.
$$
Then, one can directly compute
$$
F_1 ( \xi_{-1}) \ = \ 
- \f 4 \gamma \left( 1 - \f 4 {\gamma^2 \om} \right)^{\f 1 \mu} 
\left[ \mu - (\mu + 1) \f 4 {\gamma^2 \om} \right]
$$
which is negative if and only if $\om > \om^*$. Thus, as a consequence
of \eqref{negativo},
beyond the bifurcation frequency $\om^*$, for any linear combination 
$\phi$ of $\xi_{-1}$ and $\psi$ we have $F_1 (
\phi) < 0$,
that contradicts the hypothesis of having only one simple
negative eigenvalue in the spectrum of $L_1$.

In order to prove that actually either $n=2$ or $n=1$ and $\nu_1$ is
a double eigenvalue, we prove that $\psi$ minimizes $S_\om$
on the Nehari manifold with the additional constraint $\varphi (0+) = -
\varphi (0-)$. To this aim, we first observe that, if $\varphi \in Q$
fulfils $\varphi (0+) = - \varphi (0-)$, then
\be \begin{split}
S_\om (\varphi) \ = & \ \f 1 2 \| \varphi^\prime \|^2 + \f \om 2 
\| \varphi
\|^2 - \f \lambda {2 \mu + 2} \| \varphi \|_{2 \mu + 2}^{2 \mu + 2} - \f
2 \gamma | \varphi (0+)|^2 \\
I_\om (\varphi) \ = & \ \| \varphi^\prime \|^2 + \om  \| \varphi
\|^2 -  \lambda  \| \varphi \|_{2 \mu + 2}^{2 \mu + 2} - \f
4 \gamma | \varphi (0+)|^2 \ = \ 0
\end{split} \ee 
Consider the unitary transformation $U_\sharp$ of the space $Q$, defined by
\be \nnn 
\varphi_\sharp (x) : = (U_\sharp \varphi) (x) \ : = 
\ \epsilon(x) \varphi (x)
\ee
and notice that, if $\varphi (0+) = - \varphi (0-)$, then 
$\varphi_\sharp$ belongs
to $H^1(\erre)$, so the minimization problem is mapped into the
problem of minimizing the functional 
\be \label{mappinginto}
S_{\om,\sharp} (\varphi_\sharp) \ = \ 
\f 1 2 \| \varphi^\prime_\sharp \|^2 + \f \om 2 \| \varphi_\sharp
\|^2 - \f \lambda {2 \mu + 2} \| \varphi_\sharp \|_{2 \mu + 2}^{2 \mu + 2} - \f
2 \gamma | \varphi_\sharp (0)|^2
\ee
among the functions in $H^1 (\erre)$ that satisfy the constraint
\be \nnn
I_{\om,\sharp} ( \varphi_\sharp) \ =  \ \|  \varphi^\prime_\sharp \|^2 
+ \om  \|  \varphi_\sharp 
\|^2 -  \lambda  \|  \varphi_\sharp  \|_{2 \mu + 2}^{2 \mu + 2} - \f
4 \gamma |  \varphi_\sharp  (0)|^2 \ = \ 0.
 \ee 
Problem \eqref{mappinginto}
corresponds to the issue of finding the ground state for a nonlinear
Schr\"odinger equation in the presence of a $\delta$-type defect of
strength $- 4 / \gamma$. By
\cite{reika} and \cite{fukujean}, we know that the solution reads
\be \nnn 
\phi_{\om,\sharp} \ : = \ \left[ \f{(\mu + 1) \om} \lambda \right]^{\f 1 {2 \mu}}
 \cosh^{-\f
  1 \mu} \left( |x| + y \right),
\ee
so we
obtain, still up to a phase,
\be \nnn 
\phi_{\om,\sharp} \ = \ U_\sharp \psi
\ee
and, as a consequence,
$\psi$ minimizes the action on the
Nehari manifold with the additional condition $\psi (0+) = - \psi (0-)$.

\noindent
It remains to prove that such a constraint has codimension two.
We denote the constraint by 
\be
{\mc M} \ : = \ \{ \varphi \in Q, \ I_\om (\varphi ) = 0, \, \ 
\varphi (0+) = - \varphi (0-)\}.
\ee
By the operator $U_\sharp$ such a constraint is mapped to
\be
{\mc M_\sharp} \ : = \ \{ \varphi \in H^1 (\erre), \ I_{\om,\sharp} (\varphi) = 0 \}.
\ee
It is well-known (\cite{reika}, \cite{fukujean}) that ${\mc M_\sharp}$ has codimension
one as a subspace of $H^1 (\erre)$. On the other hand, since 
any function $\zeta$ in $Q$ can be decomposed as
$$ \zeta \ = \ \f 1 2 (\zeta (0+) - \zeta (0-)) \epsilon (\cdot) e^{-
  | \cdot |} + \widetilde \zeta, $$
with $\widetilde \zeta \in H^1 (\erre)$, we have
$$ Q \ = \ H^1 (\erre) \oplus {\mbox{Span}} \{ \epsilon (\cdot) e^{-| \cdot |} \}, $$ 
so $H^1(\erre)$ has codimension one as a subspace of $Q$.  
Therefore, ${\mc M_\sharp}$ has codimension
two as a subspace of $Q$. Thus, by unitarity of $U_\sharp$, ${\mc M}$ has codimension
two as a subspace of $Q$ too.

We then proved that the negative space of the operator $L_1$ has dimension
at most two and that for $\om > \om^*$ it equals exactly two. The
proof is concluded.
\end{proof}

Now we consider the case of an asymmetric ground state $\psi^{y_1, -y_2, \theta}_\om$.
We define
\be \nnn 
L_{1,\rm{asym}}^{\gamma, \om} \ = \ 
H_\gamma + \om - \f {\om (\mu + 1) (2 \mu + 1)}
{\cosh^2 \left( \mu \sqrt \om \left(  x  + \chi_+ (x) y_2 - \chi_- (x)
      y_1
\right) \right)}
\ee
and
\be \nnn 
L_{2,\rm{asym}}^{\gamma, \om} \ = \ H_\gamma + \om - \f {\om (\mu + 1)}
{\cosh^2 \left( \mu \sqrt \om \left(  x  + \chi_+ (x) y_2 - \chi_- (x)
      y_1
\right) \right)}
\ee
where $\tanh (\mu \sqrt \om y_j ) = t_j$, and $t_1, t_2$ are
the unique  positive solutions to \eqref{tsystem}
with $t_1 < t_2$.

\begin{prop} \label{l1asimm}
Fixed $\mu > 0$,  for any $\om > \om^*$ the operator $L_{1,\rm{asym}}^{\gamma, \om}$ has trivial
kernel and one simple negative eigenvalue.
\end{prop}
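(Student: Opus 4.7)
The plan is to mirror the three-step strategy of Proposition \ref{l1simm}, with the crucial improvement that for $\om > \om^*$ the asymmetric state $\psi := \psi^{y_1,-y_2,\theta}_\om$ is a \emph{global} minimizer of $S_\om$ on the Nehari manifold by Theorem \ref{theo:bifurcation}, not merely a constrained minimizer on a smaller submanifold as was the case for the symmetric branch above bifurcation.

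First, $L_{1,\rm{asym}}^{\gamma,\om}$ has at least one negative eigenvalue because stationarity of $\psi$ yields
\[
(\psi, L_{1,\rm{asym}}^{\gamma,\om}\psi) \ = \ -2\mu\lambda\|\psi\|_{2\mu+2}^{2\mu+2} \ < \ 0.
\]
For the matching upper bound, note that the Nehari manifold $\{I_\om = 0\}$ is smooth of codimension one at $\psi$ because $\langle I'_\om(\psi), \psi\rangle = -2\mu\lambda \|\psi\|_{2\mu+2}^{2\mu+2} \neq 0$. By Theorem \ref{theo:bifurcation}, $\psi$ minimizes $S_\om$ on this manifold; since the real Hessian of $S_\om$ at $\psi$ coincides with $L_{1,\rm{asym}}^{\gamma,\om}$, the associated quadratic form is non-negative on the tangent space of the Nehari manifold at $\psi$, a subspace of codimension one in $Q$. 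This forces $L_{1,\rm{asym}}^{\gamma,\om}$ to admit at most one negative eigenvalue, hence exactly one, and it is simple.

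For the kernel, any $\eta \in \ker L_{1,\rm{asym}}^{\gamma,\om}$ lies in $D(H_\gamma)$ and, by the same $L^2$-uniqueness argument used in the proof of Proposition \ref{l1simm} applied on each half-line, must take the form
\[
\eta(x) \ = \ a\,\chi_+(x)\,\xi(x+y_2) + b\,\chi_-(x)\,\xi(x-y_1), \qquad \xi(x) \ := \ \f{\sinh(\mu\sqrt{\om}\, x)}{\cosh^{1+1/\mu}(\mu\sqrt{\om}\, x)}.
\]
Imposing the two $\delta'$-matching conditions at the origin yields a $2\times 2$ homogeneous system in $(a,b)$ whose determinant equals $\xi(y_1)\xi'(y_2) + \xi(y_2)\xi'(y_1) + \gamma\, \xi'(y_1)\xi'(y_2)$. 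Writing $s_i = \sinh(\mu\sqrt{\om}\, y_i)$, $c_i = \cosh(\mu\sqrt{\om}\, y_i)$ and using the second equation of \eqref{tsystem} to eliminate $\gamma\sqrt{\om}$, after cancellation the determinant vanishes precisely when
\[
s_1 c_2(\mu-s_1^2) + s_2 c_1(\mu-s_2^2) \ = \ 0.
\]
In the asymmetric regime one has $t_1 < \sqrt{\mu/(\mu+1)} < t_2$, so $s_1^2 < \mu < s_2^2$ and the two summands have opposite signs. Combining this sign information with the first equation of \eqref{tsystem}, which rewrites as $s_1/c_1^{1+1/\mu} = s_2/c_2^{1+1/\mu}$, reduces the vanishing condition to a single transcendental equation in the ratio $r := c_1/c_2 \in (0,1]$ whose only admissible solution is $r = 1$, i.e. $t_1 = t_2$. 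This contradicts $t_1 < t_2$, so the determinant is non-zero, $(a,b) = (0,0)$, and the kernel is trivial. The main obstacle is this last algebraic step: although the sign structure $s_1^2 < \mu < s_2^2$ makes the overall strategy transparent, showing that the reduced transcendental equation in $r$ admits $r = 1$ as its only zero in $(0,1]$ requires a careful monotonicity argument exploiting \emph{both} equations of \eqref{tsystem} simultaneously.
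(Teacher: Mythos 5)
Your count of negative eigenvalues is correct and follows the paper's own route: $(\psi,L_{1,\rm{asym}}^{\gamma,\om}\psi)=-2\mu\lambda\|\psi\|_{2\mu+2}^{2\mu+2}<0$ gives at least one, and minimality of $\psi$ on the codimension-one Nehari manifold caps the Morse index at one. Your setup for the kernel is also correct: the ansatz $\eta=a\,\chi_+\xi(\cdot+y_2)+b\,\chi_-\xi(\cdot-y_1)$, the $2\times2$ matching system, and the reduction of its determinant (after eliminating $\gamma\sqrt\om$ via the second equation of \eqref{tsystem}) to $s_1c_2(\mu-s_1^2)+s_2c_1(\mu-s_2^2)=0$ all check out.

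The problem is that you stop exactly where the real work begins. Having observed that the two summands have opposite signs (so the equation is not excluded by sign considerations alone), you assert that the condition reduces to a transcendental equation in $r=c_1/c_2$ whose only zero in $(0,1]$ is $r=1$, and then concede that establishing this ``requires a careful monotonicity argument exploiting both equations of \eqref{tsystem} simultaneously'' --- without supplying it. That unproved assertion \emph{is} the proposition; as written, the kernel triviality is not established. It is also not clear that the equation genuinely closes up into a function of the single variable $r$: the constraint $f(t_1)=f(t_2)$ ties $t_1$ and $t_2$ together in a way that does not factor through their ratio, so the proposed reduction would itself need justification.

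The paper closes this gap differently and, crucially, without new work. Dividing the vanishing condition by the first equation of \eqref{tsystem} (in the form $t_1^{2\mu}(1-t_1^2)=t_2^{2\mu}(1-t_2^2)$) turns it into
$$\f{1}{t_1^{2\mu+1}\bigl(\mu-(\mu+1)t_1^2\bigr)}+\f{1}{t_2^{2\mu+1}\bigl(\mu-(\mu+1)t_2^2\bigr)}=0,$$
i.e.\ $g(t_1)=-g(t_2)$ with $g(t):=t^2f'(t)$, which by the sign of $g$ on either side of $\bar t$ is equivalent to $g^2(t_1)=g^2(t_2)$ with $t_1<\bar t<t_2$ and $f(t_1)=f(t_2)$. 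This is precisely the system already shown to have no solutions in the proof of Theorem \ref{theo:bifurcation} (formulas \eqref{precambio}--\eqref{525}), where the change of variable $y=f(t)$ and the comparison of $|p|$ on the two monotonicity branches of $f$ yield the strict inequality $g^2(t_1)<g^2(t_2)$. That strict inequality is the quantitative input your argument is missing; you would either need to reproduce it or, as the paper does, recognize your reduced equation as the one already disposed of there.
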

\begin{proof}
For shorthand, in this proof we will denote the operator
$L_{1,\rm{asym}}^{\gamma, \om}$ 
by $L_1$. 
Furthermore, the function $\psi_\om^{y_1, -y_2, 0}$
will be denoted by $\psi$.

By Theorem \ref{theo:bifurcation}, if $\om > \om^* $ then $ \psi$
is a local minimizer for the functional $S_\om$ on the 
Nehari manifold.  As a consequence, one can prove that $L_1$ has
one simple negative eigenvalue by following the proof of Proposition
\ref{l1simm} up to \eqref{notice}.

Concerning the kernel of $L_1$, one can follow the reasoning carried out
in the proof of \ref{l1simm} through \eqref{eqkernel},  \eqref{solkernel}
and conclude that the only solutions to the equation $L_1 \xi = 0$
can be given by $\xi_a (x) \ = \ \chi_+  \xi (x + y_2) + a \chi_-  \xi
(x - y_1)$,
where $a$ is a complex number, 
provided that $\xi_a$
fulfils the matching conditions at zero, that translate into the system
\be
\left\{ \begin{array}{ccc}
\f{ \mu - \sinh^2 (\mu \sqrt \om y_2)}{ \cosh^{2+ \f 1 \mu} (\mu \sqrt
  \om y_2)} & = & a \f{ \mu - \sinh^2 (\mu \sqrt \om y_1)}
{ \cosh^{2+ \f 1 \mu} (\mu \sqrt
  \om y_1)} \\ & & \\
\f{ \sinh (\mu \sqrt \om y_2)}{ \cosh^{1 + \f 1 \mu} (\mu \sqrt
  \om y_2)} + a \f{ \sinh (\mu \sqrt \om y_1)}{ \cosh^{1 + \f 1 \mu} (\mu \sqrt
  \om y_1)} & = & - \gamma \sqrt \om \f{ \sinh (\mu \sqrt \om y_2)} 
{ \cosh^{2 + \f 1 \mu} (\mu \sqrt
  \om y_2)}
\end{array}
\right.
\ee
Expliciting $a$ from the first equation, plugging it into the second,
and denoting as customary $t_i = \tanh( \mu \sqrt \om y_i)$, we get
the equation
$$
\f {t_1} { \mu - (\mu + 1) t_1^2 }+ \f {t_2} { \mu - (\mu + 1) t_2^2 }
\ = \ - \gamma \sqrt \om,
$$
that, using the second equation in \eqref{tsystem}, gives
$$
\f {1 - t_1^2} {t_1 (\mu - (\mu + 1) t_1^2 )}+ 
\f {1 - t_2^2} {t_2 (\mu - (\mu + 1) t_2^2 )}
\ = \ 0.
$$
Finally, by the first equation in \eqref{tsystem} one gets
$$
\f {1} {t_1^{2 \mu + 1} (\mu - (\mu + 1) t_1^2 )}+ 
\f {1} {t_2^{2 \mu + 1} (\mu - (\mu + 1) t_2^2 )}
\ = \ 0.
$$
Such a  problem translates to the problem of finding $t_1$ and $t_2$
such that $0 \leq t_1 < \bar t < t_2 \leq 1$, and $g^2 (t_1) = g^2
(t_2)$, where $g(t) : = t^2 f^\prime (t)$ and $f(t) = t^{2 \mu} - t^{2
  \mu + 2}$.

\n
The problem was treated in the last form in the proof of Theorem
\ref{theo:bifurcation}, from formula \eqref{precambio} up to formula
\eqref{525}. The conclusion is that it
has no solutions, so none among the functions $\xi_a$ lies in the
kernel of $L_1$, which is therefore trivial. This concludes the proof.
\end{proof}

\subsection{The sign of $d^{\prime
\prime} (\om)$}

\begin{prop} \label{propdisecondo}
Given $\mu > 0$, 
the sign of the second derivative of the function $d$,
defined in \eqref{dom},
is determined as follows:
\begin{enumerate}
\item if $0 < \mu \leq 2$, then 
$d^{\prime
\prime} (\om) > 0$ for any $\om \in (\om_0, \om^*) \cup 
(\om^*, + \infty)$.
Furthermore, $0 < d^{\prime \prime} (\om^* + 0)  < 
d^{\prime \prime} (\om^* - 0)$; 
\item if $\mu > 2$, then $d^{\prime \prime} (\om) > 0$ for $\om \in (\om_0, \om^*)$. 
Besides, $d^{\prime \prime} (\om^*-0) > 0$;
\item there exists $\mu^\star \in (2,2.5)$ such that 
\begin{enumerate}
\item If $\mu < \mu^\star$, then   $d^{\prime \prime} (\om^* + 0)> 0$, so there exists
$\om_1 (\mu) \geq \om^*$ such that $d^{\prime \prime} (\om) > 0$ for any $\om \in
(\om^*, \om_1(\mu))$;
\item if $\mu = \mu^\star$, then   $d^{\prime \prime} (\om^* + 0)= 0$;
\item if $\mu > \mu^\star$, then   $d^{\prime \prime} (\om^* + 0)< 0$;
\end{enumerate}
\item if
$\mu > 2$, then
there exists $\om_2 (\mu) \geq \om^*$ such that, if $\om > \om_2(\mu)$, then
$d^{\prime \prime} (\om) < 0$.
\end{enumerate} 
\end{prop}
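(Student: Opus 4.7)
The plan is to exploit the Hellmann--Feynman identity $d'(\om)=\tfrac12\|\psi_\om\|^2$, which follows by differentiating $d(\om)=S_\om(\psi_\om)$ and invoking $S_\om'(\psi_\om)=0$. Writing $N(\om):=\|\psi_\om\|^2$, the proposition reduces to a sign analysis of $N'$ along the two branches. Using the closed forms of Theorem~\ref{theo:bifurcation} and the substitution $u=\tah[\mu\sqrt\om(|x|+y_j)]$ on each half-line, direct computation gives
\[
N(\om)=\frac{(\mu+1)^{1/\mu}}{\mu\lambda^{1/\mu}}\om^{1/\mu-1/2}\bigl[G(t_1(\om))+G(t_2(\om))\bigr],\qquad G(s):=\int_s^1(1-u^2)^{1/\mu-1}\,du,
\]
with $t_1(\om)=t_2(\om)=2/(\gamma\sqrt\om)$ on the symmetric branch and $(t_1,t_2)$ the solution of \eqref{tsystem} with $t_1<t_2$ on the asymmetric one.

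On the symmetric branch, differentiation yields $d''(\om)=K_\mu\,\om^{1/\mu-3/2}F(a)$ with $a:=2/(\gamma\sqrt\om)\in(\bar t,1)$, $K_\mu>0$, and
\[
F(a):=(2-\mu)G(a)+\mu a(1-a^2)^{1/\mu-1}.
\]
For $\mu\le2$ both summands are non-negative and $F>0$ on $(\bar t,1)$ is immediate. For $\mu>2$, $F'(a)=2(\mu-1)(1-a^2)^{1/\mu-2}>0$ shows that $F$ is strictly increasing with $\lim_{a\to1^-}F(a)=+\infty$, and positivity reduces to checking $F(\bar t)>0$. One integration by parts on $G(\bar t)$ rewrites
\[
F(\bar t)=\tfrac{\mu+2}{2}\sqrt\mu\,(\mu+1)^{1/2-1/\mu}+\tfrac{\mu(\mu-2)}{2}\int_{\bar t}^1\frac{(1-u^2)^{1/\mu}}{u^2}\,du,
\]
both terms manifestly positive when $\mu>2$. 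This proves the symmetric-branch content of~(1), item~(2), and the inequality $d''(\om^*-0)>0$ for every $\mu$.

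On the asymmetric branch I parametrise by $s=t_2\in(\bar t,1)$: $t_1=t_1(s)\in(0,\bar t)$ is fixed by $f(t_1)=f(s)$ with $f(t)=t^{2\mu}-t^{2\mu+2}$, and $\gamma\sqrt{\om(s)}=t_1(s)^{-1}+s^{-1}$. The inequality $|g(t_1)|<|g(s)|$ with $g(t)=t^2f'(t)$ established in \eqref{525} gives $d\om/ds>0$, so $\sgn d''(\om)=\sgn(dN/ds)$. Implicit differentiation of \eqref{tsystem}, combined with the identity $s^{2\mu}(1-s^2)=t_1^{2\mu}(1-t_1^2)$, recasts $dN/ds$ as the sum of a $(2-\mu)$-multiple of positive quantities and a term whose sign is that of $s^2-st_1+t_1^2-\bar t^2$; for $\mu\le2$ both contributions are non-negative on $(\bar t,1)$, yielding $d''(\om)>0$ and completing~(1). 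For $\om\to\om^{*+}$, the pitchfork nature of the bifurcation (a consequence of $f'(\bar t)=0$, $f''(\bar t)\ne0$) forces $s-\bar t,\,\bar t-t_1\sim\sqrt{\om-\om^*}$; expanding $f(t_1)=f(s)$ and $t_1^{-1}+s^{-1}=\gamma\sqrt\om$ to third order and substituting into $dN/ds$ yields
\[
d''(\om^*+0)=\Phi(\mu)
\]
for an explicit elementary function $\Phi$. The values $\Phi(2)>0$, $\Phi(5/2)<0$, together with strict monotonicity of $\Phi$ on $[2,5/2]$, localise a unique zero $\mu^\star\in(2,5/2)$, proving~(3); the strict ordering $d''(\om^*+0)<d''(\om^*-0)$ for $\mu\le 2$ in~(1) follows by explicit comparison of $\Phi(\mu)$ with the symmetric value $K_\mu(\om^*)^{1/\mu-3/2}F(\bar t)$. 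Finally, as $\om\to+\infty$ the asymmetric solution approaches a single free soliton placed on one side of the defect: $t_1\sim(\gamma\sqrt\om)^{-1}\to0$, $s\to1$, so $G(t_1)+G(s)\to\int_0^1(1-u^2)^{1/\mu-1}du>0$ and $N(\om)\sim c_\mu\,\om^{1/\mu-1/2}$ with $c_\mu>0$; differentiating gives $d''(\om)\sim\tfrac12(1/\mu-1/2)c_\mu\om^{1/\mu-3/2}$, negative for $\mu>2$, which proves~(4).

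The main obstacle will be the precise extraction of $\Phi(\mu)$. Because $t_1,t_2$ are analytic in $\sqrt{\om-\om^*}$ rather than in $\om-\om^*$, one must push the expansion of the implicit functions $(t_1(\om),t_2(\om))$ one order beyond the leading $\sqrt{\om-\om^*}$ term to capture all $O(1)$ contributions to $d''$; this requires both Taylor coefficients $f''(\bar t)$ and $f'''(\bar t)$ and a careful tracking of the cross-cancellations between the $G(t_1)$ and $G(t_2)$ contributions. Localising $\mu^\star$ in $(2,5/2)$ then reduces to a quantitative verification on the resulting transcendental expression $\Phi(\mu)$.
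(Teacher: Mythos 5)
Your overall strategy coincides with the paper's: both start from $d'(\om)=\tfrac12\|\psi_\om\|^2$, write the mass (up to constants) as $\om^{1/\mu-1/2}[G(t_1)+G(t_2)]$ with $G(s)=\int_s^1(1-u^2)^{1/\mu-1}du$, and reduce everything to a sign analysis of the two terms produced by differentiation. Your symmetric-branch argument is complete and correct (the integration by parts giving $F(\bar t)>0$ together with $F'(a)=2(\mu-1)(1-a^2)^{1/\mu-2}>0$ is a slightly cleaner packaging of the paper's monotonicity argument for $r(\om)$). However, three of the remaining hard steps are asserted rather than proved, and each is a genuine gap.

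First, on the asymmetric branch for $\mu\le2$ you claim that $dN/ds$ splits into a $(2-\mu)$-multiple of positive quantities plus a term whose sign is that of $s^2-st_1+t_1^2-\bar t^2$, with both contributions non-negative. Neither the decomposition nor the sign of that second term is derived, and this is precisely where the difficulty lies: positivity of the relevant term is equivalent to $(\mu+1)t_2^3-\mu t_2>\mu t_1-(\mu+1)t_1^3$ (the paper's \eqref{toprove}), which is \emph{not} a consequence of \eqref{525} — that inequality compares $t^4[f'(t)]^2$ at $t_1$ and $t_2$, whereas here one must compare $t^{4-4\mu}[f'(t)]^2$, a different weight requiring its own integral-comparison argument (the function $\Sigma$ and the case split at $\bar t/\sqrt3$). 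Your candidate quantity $s^2-st_1+t_1^2-\bar t^2$ vanishes at the bifurcation point, so its sign along all of $T_2$ is not obvious and would itself need an argument of comparable depth.

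Second, for point 3 you posit $d''(\om^*+0)=\Phi(\mu)$ with $\Phi$ ``explicit elementary'' and strictly monotone on $[2,5/2]$. The actual limit contains the non-elementary integral $\int_{\bar t}^1(1-t^2)^{1/\mu-1}dt$, and the monotonicity on $(2,5/2)$ — which is exactly what makes $\mu^\star$ exist and be unique — is the content of \eqref{wprimo} and the sign analysis following it; it does not come for free from the boundary values $\Phi(2)>0$, $\Phi(5/2)<0$. Moreover the right limit of the second term (\eqref{sdraio}) requires a second-order expansion, since the first derivatives of both numerator and denominator vanish at $\bar t$; you correctly flag this as the main obstacle but do not resolve it.

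Third, for point 4 you differentiate the asymptotic relation $N(\om)\sim c_\mu\om^{1/\mu-1/2}$; an asymptotic equivalence cannot be differentiated. Writing out $N'$ produces the extra term $\om^{1/\mu-1/2}[G'(t_1)t_1'+G'(t_2)t_2']$, in which $(1-t_2^2)^{1/\mu-1}$ blows up as $t_2\to1$ for $\mu>1$; one needs the precise rates $t_2'=O(\om^{-\mu-1})$ against $1-t_2^2=O(\om^{-\mu})$ (the paper's \eqref{tres}) to see that this contribution vanishes. Your conclusion is correct, but the step is not justified without those derivative asymptotics.
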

\begin{proof}
First, we notice that, given $\mu > 0$, $\om > \om_0$,
and denoted by $\psi_\om$ a 
solution to the problem \ref{prob1} corresponding to the chosen value of $\mu$ and $\om$, 
one has
\be \nnn 
d^\prime (\om) \ = \  \f 1 2 \| \psi_\om \|^2.
\ee
Indeed, from definition \eqref{dom}, using
the stationarity  of $\psi_\om$ we get
\be \nonumber \begin{split}
d^\prime (\om) \ = \ &  \f d {d\om} S_\om ( \psi_\om )
  \ = \  \f d {d\om} E ( \psi_\om ) + \f 1 2 M ( \psi_\om )
+ \f \om 2  \f d {d\om} M ( \psi_\om )
 \ = \  S_\om^\prime [ \psi_\om ]  \f d {d \om} \psi_\om +  \f 1 2 M ( \psi_\om ) 
 \ = \  \f 1 2 M ( \psi_\om ).
\end{split}
\ee
Expliciting $M (\psi_\om)$ one obtains
\be
d^\prime (\om)
\ = \  \label{diprimo}
 \left( \f {\mu + 1} \lambda \right)^{\f 1 \mu}
\f {\om^{\f 1 \mu - \f 1 2}} {2 \mu} 
 \left[ \int_{\zeta_1(\om)}^1 (1-t^2)^{\f 1 \mu -1} \,  dt 
+  \int_{\zeta_2(\om)}^1 (1-t^2)^{\f 1 \mu -1} \,  dt 
\right],
\ee
provided that the change of variable $t = \tanh (\mu \sqrt \om (x -
x_1))$  $(t = \tanh (\mu \sqrt \om (x -
x_2)))$ has been performed in the first (second) integral. The lower
bounds in the intervals of integration are defined by
\be \label{zetai}
\zeta_i (\om) \ = \ \left\{ \begin{array}{ccc}
\f 2 {\gamma \sqrt{\om}}, & & \om \in \left( \om_0,
  \om^* \right] \\ & & \\
t_i, & &  \om \in \left( \om^*,
+ \infty \right) 
\end{array} \right. , \quad i=1,2
\ee
where the couple $t_1, t_2$ is the unique solution to the system \eqref{tsystem}
such that $t_1 < \bar t < t_2$, where $\bar t = \sqrt{\f \mu {\mu + 1}}$.

Differentiating \eqref{diprimo} yields
\be \label{disecondo} \begin{split}
d^{\prime \prime} (\om) \ = \ & \left( \f {\mu + 1} \lambda \right)^{\f 1 \mu}
\f {\om^{\f 1 \mu - \f 3 2}} {2 \mu} \left\{ \left( \f 1 \mu - \f 1 2
\right) \left[ \int_{\zeta_1(\om)}^1 (1-t^2)^{\f 1 \mu -1} \,  dt 
+  \int_{\zeta_2(\om)}^1 (1-t^2)^{\f 1 \mu -1} \,  dt 
\right] \right. \\ & \left.
- \om \left[ \zeta_1^\prime (\om) (1 - \zeta_1^2(\om))^{\f 1 \mu -1} +
 \zeta_2^\prime (\om)  (1 - \zeta_2^2 (\om))^{\f 1 \mu -1}
\right] \right\}.
\end{split}
\ee
Let us denote
\be \label{unoddue} \begin{split}
(I) \ : = \ &  \left( \f 1 \mu - \f 1 2
\right) \left[ \int_{\zeta_1(\om)}^1 (1-t^2)^{\f 1 \mu -1} \,  dt 
+  \int_{\zeta_2(\om)}^1 (1-t^2)^{\f 1 \mu -1} \,  dt 
\right] \\
(II) \ : = \ & - \om \left[ \zeta_1^\prime (\om) (1 - \zeta_1^2(\om))^{\f 1 \mu -1} +
 \zeta_2^\prime (\om)  (1 - \zeta_2^2 (\om))^{\f 1 \mu -1}
\right].
\end{split}
\ee

\bigskip

\noindent
{\em 1. $0 < \mu \leq 2$.}

\medskip

\noindent
The quantity $(I)$ is
positive for any $\om > \om_0$. Moreover, by \eqref{zetai},
for $\om \in (\om_0, \om^*)$ the quantity 
$(II)$ can be explicitly evaluated as
\be \label{ii} (II)
\ = \
\f 2 {\gamma \sqrt \om} \left( 1 - \f 4 {\gamma^2 \om} \right)^{\f 1 \mu - 1}
\ee
which is positive too. Therefore, by \eqref{disecondo} and \eqref{unoddue},
we conclude that
$d^{\prime \prime} (\om) > 0$, if $\om_0 < \om < \om^*$.

\n
To determine the sign of $(II)$ for $\om > \om^*$, it is convenient to
distinguish between the cases $\mu \leq 2$ and $\mu > 2$. Let us start with
$\mu \leq 2$.
Rewriting the first equation in \eqref{tsystem}
as
$$
t_1^2 (1 - t_1^2)^{\f 1 \mu} \ = \ t_2^2 (1 - t_2^2)^{\f 1 \mu}
$$
and differentiating with respect to $\om$, we obtain
$$
t_1^\prime \ = \ \f {\mu t_2 - (\mu + 1) t_2^3} {\mu t_1 - (\mu + 1) t_1^3}
\ \left( \f{1 - t_2^2}   {1 - t_1^2} \right)^{\f 1 \mu - 1} \ t_2^\prime,
$$
where we neglected in the notation the dependence on $\omega$.
Therefore 
$$ (II)
\ = \
- \om 
t_2^\prime (1 - t_2^2)^{\f 1 \mu - 1} \left( 1 +  
\f {\mu t_2 - (\mu + 1) t_2^3} {\mu t_1 - (\mu + 1) t_1^3}\right).
$$
We prove that such a quantity is positive for $\mu < 2$. As $t_1 < \bar t < t_2$,
this reduces to prove
\be \label{toprove}
{(\mu + 1) t_2^3 - \mu t_2} \ > \ {\mu t_1 - (\mu + 1) t_1^3}. 
\ee
To this aim, we define the function
\be \nnn 
\Gamma (t) \ : = \ [(\mu + 1) t^3 - \mu t]^2 \ = \ \f 1 4 t^{4 - 4 \mu} 
[f^\prime (t)]^2,  
\ee
where $f$ is defined as in \eqref{effe}, namely by $f(t) := t^{2 \mu} - t^{2 \mu + 2}$.
By the fundamental theorem of the integral calculus, for any $t \in [0,1]$
\be \label{fundint}
\Gamma (t) \ = \ \int_{\bar t}^t 
\left[ (1 - \mu) s^{3 - 4 \mu}
  [f^\prime (s)]^2 + \f 1 2 s^{4 - 4 \mu} f^\prime (s)  f^{\prime \prime}
  (s) \right] \, ds,
\ee
where $\bar t = \sqrt{\f \mu {\mu + 1}}$.
Now we proceed like in the proof of theorem \ref{theo:bifurcation},
formulas \eqref{links}-\eqref{pbar}. First, we define the function
$s_1$ as the inverse of $f$ in the interval $[0, \bar t]$, as well as
the function $s_2$ that is the inverse of $f$ in $[\bar t, 1]$. Then,
performing the change of variable $u = f(s)$, \eqref{fundint} gives
\be
\Gamma (t) \ = \  \int_{m}^{f(t)}
 \Sigma (s_i (u)) \, du
\ee
where $i = 1$ if $t \in [0, \bar t]$, $i = 2$ if $t \in (\bar t, 1]$,
$m := f (\bar t) = \f {\mu^{\mu}} {(\mu + 1)^{\mu + 1}}$,
and 
\be \nnn 
\Sigma (s) \ : = \
 (1 - \mu) s^{3 - 4 \mu}
  f^\prime (s ) + \f 1 2 s^{4 - 4 \mu}  f^{\prime \prime}
  (s) \ = \ \mu s^{2 - 2 \mu} - 3 (\mu + 1) s^{4 - 2 \mu}.
\ee
Consider the case $t_1 \in [\bar t/ \sqrt 3, \bar t)$. 
The study of the sign of $\Sigma$ and $\Sigma^\prime$ shows that
$\Sigma$ is negative
and strictly decreasing in $(\bar t / \sqrt 3, 1)$,
for any $u \in [f(t_1),m)$ one has
$$
0 \ > \ \Sigma (s_1 (u)) \ > \ \Sigma (\bar t) \ = \ -2 \f {\mu^{2 -
    \mu}} {(1 + \mu)^{1 - \mu}} \ > \ \Sigma (s_2 (u)),
$$
and therefore, denoting $a = f(t_1) = f(t_2)$,
\be \label{firstcase}
\Gamma (t_1) \ = \ - \int_a^m \Sigma (s_1 (u)) \, du \ < \ -  \int_a^m
\Sigma (s_2 (u)) \, du \ = \ \Gamma (t_2).
\ee
Second, consider the case $t_1 < \bar t / \sqrt 3$. 
Write $\Gamma
(t_1)$
as
$$
\Gamma (t_1) \ = \ \int_m^{f(\bar t / \sqrt 3))}  \Sigma (s_1 (u)) \,
du +  \int_{f(\bar t / \sqrt 3)}^{f(t_1)}  
   \Sigma (s_1 (u)) \, du.
$$
Notice that the first integral in the r.h.s. is positive, while the
second is negative, owing to the facts that $\Sigma$ is positive in
$(0, \bar t /\sqrt 3)$ and that $f(t_1) < f(\bar t / \sqrt 3)$. 
Then, denoting by $\bar t_2$ the only point in $(t_1, 1]$ such that 
$f (\bar t / \sqrt 3 ) = f (\bar t _2)$, by \eqref{firstcase},
\be \label{tibar2}
\Gamma (t_1) \ < \Gamma (\bar t /\sqrt 3) \ \leq \ \Gamma (\bar t_2).
\ee
Furthermore, since $t_1 < \bar t /\sqrt 3 $, it must be $\bar t_2 < t_2$, and
since $\Sigma$ is negative in the interval $(\bar t_2, t_2)$,
we obtain $
\Gamma (t_2) = - \int_a^{f(\bar t_2)} \Sigma (s_2 (u)) du + \Gamma (\bar t_2)
>  \Gamma (\bar t_2)
$
that, together with \eqref{tibar2} and \eqref{firstcase} yields
$
\Gamma (t_1) < \Gamma (t_2)
$
for any $t_1 \in [0, \bar t)$, which is equivalent to \eqref{toprove}.
So we proved
$(II) > 0$ if $\mu \leq 2$, that, together with the
positivity of $(I)$,
proves  that $d^{\prime \prime} (\om) > 0$ for any $\om > \om_0$, $\om \neq \om^*$.

In order to complete the proof of
point {\em 1.}, we need to evaluate $d^{\prime \prime} (\om^* \pm 0)$, and to compare
them. By \eqref{disecondo} and \eqref{unoddue}, this amounts to compare the value of
the two limits $\lim_{\om \to \om^* \pm 0} [ (I) + (II)]$. 
From \eqref{unoddue} it is clear that $(I)$ is continuous at $\om^*$, where it takes
the value $\left( \f 2 \mu -  1 
\right) \int_{\sqrt{\f \mu {\mu + 1}}}^1 (1-t^2)^{\f 1 \mu -1} \,  dt $,
so we reduce to study the limits of the term $(II)$ only.
The left limit
is immediately given by \eqref{unoddue} and \eqref{ii},
and reads
\be \label{leftl}
 \lim_{\om \to \om^* - 0} (II) \ = \
  \f {\sqrt \mu} {(\mu + 1)^{\f 1 \mu - \f 1 2}} \ > \ 0.
\ee
Computing the right limit is more complicated. 
Differentiating both equations in 
\eqref{tsystem} one can express the couple $(t_1^\prime, t_2^\prime)$ as a function
of the couple $(t_1, t_2)$, namely
\be \begin{split}
\label{tiprimi}
t_1^\prime \  = \ - \f {\gamma f^\prime (t_2) t_1^2 t_2^2} 
{2 \sqrt \om \left( t_1^2  f^\prime (t_1) +   t_2^2  f^\prime (t_2)
  \right)}, \ & \ \quad
t_2^\prime \  = \ - \f {\gamma f^\prime (t_1) t_1^2 t_2^2} 
{2 \sqrt \om \left( t_1^2  f^\prime (t_1) +   t_2^2  f^\prime (t_2) \right)} ,
\end{split}
\ee
where we neglected in notation the dependence of $t_1$ and $t_2$
on the variable $\om$ and denoted $f(t) = t^{2 \mu} -  t^{2 \mu+2}$.
From \eqref{tiprimi}, \eqref{zetai}, \eqref{unoddue}, and since, by \eqref{tsystem},
$f(t_1) = f(t_2)$,
one gets
\be \label{du}
(II) \ = \ \f{\gamma \sqrt \om} 2 t_1^{2 \mu} t_2^{2 \mu} f^{\f 1 \mu - 1}(t_1)
\f { t_1^{2-2 \mu}   f^\prime (t_1) +   t_2^{2-2 \mu}  f^\prime (t_2)} 
{ t_1^2  f^\prime (t_1) +   t_2^2  f^\prime (t_2)}.
\ee
Here, the only non trivial factor is the last one, in which both numerator
and denominator vanish as $\om$ goes to $\om^*$ from the right. In order to compute
such a limit we pass from the variable $\om$ to the variable $t_1$. In other words, we
consider $t_2$ as a function of $t_1$. 

\noindent
We define the functions
\be \begin{split}
N (t_1) \ & : = \  t_1^{2-2 \mu}   f^\prime (t_1) +   t_2^{2-2 \mu}  (t_1)
 f^\prime (t_2 (t_1)) \\
D (t_1)  \ & : = \  t_1^{2}   f^\prime (t_1) +   t_2^{2}  (t_1)
 f^\prime (t_2 (t_1))
\end{split} \ee
and provide a Taylor expansion near $t_1 = \bar t = \f 2 {\gamma \sqrt \om}$ for
both of them. One immediately gets
\be \begin{split} \nnn 
N^\prime (t_1) \ & : = \ (2 - 2 \mu) \left[ t_1^{1-2 \mu}   f^\prime (t_1) +
 t_2^{1-2 \mu}   f^\prime (t_2) \dot t_2  
\right] +   t_1^{2-2 \mu}   f^{\prime \prime} (t_1) +   
t_2^{2-2 \mu}   f^{\prime \prime} (t_2)  \dot t_2  \\
D^\prime (t_1) \ & : = \ 2 t_1  f^\prime (t_1) + t_1^2  f^{\prime \prime} (t_1)
+ 2  t_2  f^\prime (t_2)  \dot t_2  +  t_2^2  f^{\prime \prime} (t_2) \dot t_2  
\end{split} \ee
where we used the notation $\dot t_2  : = \f {dt_2}{dt_1}(t_1)$. To evaluate
$N^\prime (\bar t)$ and $D^\prime (\bar t)$ we must then compute 
$ \f {dt_2}{dt_1}(\bar t)$. From $f(t_1) = f(t_2)$ it follows 
\be \label{dotti}
\dot t_2 = \f {f^\prime (t_1)}{f^\prime (t_2)}.
\ee
By de l'H\^{o}pital's Theorem,
$$
\lim_{t_1 \to \bar t-0} \dot t_2 \ = \ \lim_{t_1 \to \bar t -0}  
\f {f^{\prime \prime} (t_1)}{f^{\prime \prime} (t_2) \dot t_2} ,
$$
from which one immediately has $\left(\lim_{t_1 \to \bar t} \dot t_2 \right)^2
= 1$. Now, since $f^\prime (t_1) > 0$ and $f^\prime (t_2) < 0$, it must be
$$
\lim_{t_1 \to \bar t-0} \dot t_2 \ = \ - 1.
$$
As a consequence, $N^\prime (\bar t) = D^\prime (\bar t) = 0$. Further
differentiating $N$ and $D$, and recalling that $f^\prime (\bar t) = 0$, one obtains
\be \begin{split} \nnn 
N^{\prime \prime} (\bar t) \ & : = \ 8 (1 -  \mu) \bar t^{1-2 \mu}   
f^{\prime \prime} (\bar t) + 2
 \bar t^{2-2 \mu}   f^{\prime \prime \prime} (\bar t) +
t^{2-2 \mu}   f^{\prime \prime} (\bar t)  \f {d^2t_2}{dt_1^2} (\bar t) \\
D^{\prime \prime} (\bar t) \ & : = \ 8 \bar t
f^{\prime \prime} (\bar t) + 2
 \bar t^{2}   f^{\prime \prime \prime} (\bar t) +
t^{2}   f^{\prime \prime} (\bar t)  \f {d^2t_2}{dt_1^2} (\bar t).
\end{split} \ee
By \eqref{dotti}
$$
 \f {d^2t_2}{dt_1^2} (t_1) \ = \ \f { f^{\prime \prime} (t_1) (f^{\prime})^2 (t_2
(t_1)) -  (f^{\prime})^2 (t_1)  f^{\prime \prime} (t_2(t_1))}
{ (f^{\prime})^3 (t_2
(t_1))},
$$
so, again using de l'H\^{o}pital's theorem, 
\be
\nnn 
\lim_{t_1 \to \bar t - 0} \ = \ - \f{ 2  f^{\prime \prime \prime} (\bar t)}
{ 3  f^{\prime \prime} (\bar t)}.
\ee
It then follows
\be \begin{split} \nnn 
N^{\prime \prime} (\bar t) \ & : = \ 8 (1 -  \mu) \bar t^{1-2 \mu}   
f^{\prime \prime} (\bar t) + \f 4 3 
 \bar t^{2-2 \mu}   f^{\prime \prime \prime} (\bar t) \\
D^{\prime \prime} (\bar t) \ & : = \ 8 \bar t
f^{\prime \prime} (\bar t) + \f 4 3 
 \bar t^{2}   f^{\prime \prime \prime} (\bar t).
\end{split} \ee
So, going back to \eqref{du}, we get
$$
\lim_{\om \to \om^* + 0} (II) \ = \ 
\f{\gamma \sqrt \om^*} 2 \bar t^{4 \mu}  f^{\f 1 \mu - 1}(\bar t)
\f { 6 (1 -  \mu) \bar t^{1-2 \mu}   
f^{\prime \prime} (\bar t) +  \bar t^{2-2 \mu}   f^{\prime \prime \prime} (\bar t)}
{ 6 \bar t
f^{\prime \prime} (\bar t) + 
 \bar t^{2}   f^{\prime \prime \prime} (\bar t)},
$$
which, recalling that $\bar t = \sqrt{\f \mu {\mu + 1}}$
and the definition of the function $f$, gives
\be \label{sdraio}
\lim_{\om \to \om^* + 0} (II) \ = \ \f {\sqrt \mu} {(\mu + 1)^{\f 1 \mu - \f 1 2}}
\f {5 - 2 \mu} {4 \mu + 5} \ > \ 0. 
\ee
Comparing \eqref{leftl} and \eqref{sdraio}, and observing that the
existence of such limits implies the existence of the left and right
derivative and gives their values, one completes the proof of
point {\em 1.}

\bigskip

\noindent
{\em 2. $\mu > 2$, $\om_0 < \om \leq \om^*$.}

\medskip

\noindent
In this case, term $(I)$ is negative, so one must compare its size to
the size of 
$(II)$.

From \eqref{zetai} and  \eqref{disecondo} we know 
that
\be \label{prerre}
d^{\prime \prime} (\om) \ = \ \left( \f {\mu + 1} \lambda \right)^{\f 1 \mu}
\f{\om^{\f 1 \mu - \f 3 2}} \mu r (\om), 
\ee
where
\be \label{erreom}
r (\om) \ = \ \f {2 - \mu} {2 \mu} \int_{\f 2 {\gamma \sqrt \om}}^1 (1
- t^2)^{\f 1 \mu - 1} dt + \f 1 {\gamma \sqrt \om} \left( 1 - \f 4
  {\gamma^2 \om} \right)^{\f 1 \mu - 1} .
\ee 
Then,
\be \nnn 
r^\prime (\om) \ = \ - \f 1 {\gamma \om^{\f 3 2}} \left( 1 - \f 1 \mu \right)  
 \left( 1 - \f 4
  {\gamma^2 \om} \right)^{\f 1 \mu - 2} \ < \ 0.
\ee
We estimate the first term in the
r.h.s. of \eqref{erreom}, for $\om = \om^*$, as
\begin{eqnarray} & & 0 \ > \ \nonumber
\f{2 - \mu}{2 \mu} \int_{\sqrt{ \f \mu {\mu + 1}}}^1 (1 - t^2)^{\f 1
  \mu - 1} dt \, > \, \f{2 - \mu}{2 \mu} \f {\int_{\sqrt{ \f \mu {\mu + 1}}}^1 (1 - t)^{\f 1
  \mu - 1} dt}  {\left( 1 + \sqrt{\f \mu
      {\mu + 1}} \right)^{1 - \f 1 \mu}} 
\ = \
 \f{2 - \mu}  {2 \left(1 + \sqrt{\f \mu
      {\mu + 1}} \right) (\mu + 1)^{\f 1 \mu}}.
\end{eqnarray}
So, 
$$
r (\om^*) \ > \ \f 1  {2 (\mu + 1)^{\f 1 \mu}} \left[ \f {2 - \mu}  {1 + \sqrt{\f \mu
      {\mu + 1}}} + \sqrt{\mu (\mu + 1)} \right] \ > \ \f {2 - \mu +
\sqrt{\mu (\mu + 1)}}  {2 (\mu + 1)^{\f 1 \mu}} \ > \ 0.
$$
Thus, since $r$ is monotonically decreasing, we have $r > 0$ for any
$r \in (\om_0, \om^*]$. 

Finally, since $\lim_{\om \to \om^* - 0} d^{\prime \prime} (\om)$
exists and can be recovered by putting $\om = \om^*$ in \eqref{prerre},
one obtains that $d^{\prime \prime} (\om^* - 0)$
equals such limit and so point 2. is proven.

\bigskip

\noindent
{\em 3. $\mu > 2$.}

\medskip

\noindent
In order to prove points {\em 3 (a,b,c)} we need to evaluate
$d^{\prime \prime} (\om^* + 0)$. 
We compute $\lim_{\om \to \om^*+0} d^{\prime \prime} (\om)$. Then,
$$
\lim_{\om \to \om^*+0} d^{\prime \prime} (\om) = \lim_{\om \to \om^*+0} (I) +
\lim_{\om \to \om^*+0} (II),
$$
with $(I)$ and $(II)$ defined in \eqref{unoddue},
and, by a direct computation,
$$ 
\lim_{\om \to \om^*+0} (I) \ = \
\left( \f 2 \mu -  1  \right) \int_{\sqrt \f \mu {\mu + 1}}^1
(1 - t^2)^{\f 1 \mu - 1} dt  
 \ < \ 0.
$$
Exactly as in the proof of point {\em 1.}, one finds that 
the right limit of the term $(II)$ is given by formula \eqref{sdraio}.

Now, according to \eqref{disecondo} and \eqref{unoddue}, the sign of
$d^{\prime \prime} (\om^* + 0)$, is given by the sign of  
the function
\be \nonumber
w (\mu) \ := \ \lim_{\om \to \om^* + 0} [ (I) + (II) ] \ = \ 
\left( \f 2 \mu - 1  \right) \int_{\sqrt{\f \mu {\mu + 1}}}^1 
(1 - t^2)^{\f 1 \mu - 1} \, dt + \f {\sqrt \mu}{(\mu + 1)^{\f 1 \mu - \f 1 2}}
\f {5 - 2 \mu} {4 \mu + 5}. 
\ee
Such a sign is obviously negative for $\mu \geq \f 5 2$, so we restrict
to $2 < \mu < \f 5 2$.
Then,
\be  \label{wprimo}
\begin{split}
w^\prime (\mu) \ = \ & - \left( \f 2 {\mu^2} + \f {\mu - 2} {2 \mu^{\f 3 2} (
  \mu + 1 )^{\f 3 2}} \right) \int_{\sqrt{\f \mu {\mu + 1}}}^1 
(1 - t^2)^{\f 1 \mu - 1} \, dt - \f {\mu - 2}{ \mu^3} 
 \int_{\sqrt{\f \mu {\mu + 1}}}^1 (1 - t^2)^{\f 1 \mu - 1} 
\log \left( \f 1 {1 - t^2} \right)\, dt \\
& + \f {2 \mu + 1} {2 \sqrt \mu (\mu + 1)^{1 + \f 1 \mu}} \f {5 - 2
  \mu} {4 \mu + 5} - \f {(\mu + 1)^{\f 1 2 - \f 1 \mu}}{\mu^{\f 3 2}}
{\log{(\mu + 1)}} \f {5 - 2
  \mu} {4 \mu + 5}  - \f 1 {\sqrt \mu (\mu + 1)^{\f 1 2 - \f 1 \mu}}
\f {5 - 2 
  \mu} {4 \mu + 5} \\ & - 30
\f  {\sqrt \mu (\mu + 1)^{\f 1 2 - \f 1 \mu}}{(4 \mu + 5)^2}.
\end{split} \ee
For $2 < \mu < \f 5 2$ the only non negative term in \eqref{wprimo} is 
the third one. By elementary computation we find that the sign of the
sum of such term with the last one is negative if and only if
$$
- 16 \mu^3 + 12 \mu^2 + 25 \ < \ 60 \mu^2 .
$$
 Since for $\mu
= 2$ the inequality is verified, it must be verified for any $\mu >
2$, owing to the fact,
easy to check, that the l.h.s is a monotonically decreasing
function, while the r.h.s is monotonically increasing. 
As a consequence, $w$ is monotonically decreasing for $\mu \in [2, \f
5 2]$, so there exists a unique value of $\mu$ that makes $d^{\prime
  \prime} (\om^*+0)$ vanish. Denoting it by $\mu^\star$, we complete the
proof of points {\em 3 (b,c)}. Point {\em 3 (a)} follows by continuity
from
the fact that, if $ \mu < \mu^\star$, then $d^{\prime \prime} (\om^*+
0) > 0$.

\bigskip

\noindent
{\em 4. $\mu > 2, \om \to \infty.$}

\medskip

\noindent
Point 4. follows  from the asymptotics of $t_1$ and $t_2$
as $\om$ goes to $\infty$, in the region $t_2 > t_1 > 0$:
\begin{eqnarray}
t_1 \ = \ \f 1 {\gamma \sqrt \om} + {\rm{o}} (\om^{-\f 1 2
}) \label{egy} & \ & 
t_2 \ = \ 1 - \f 1 {2 \gamma^{2 \mu} \om^\mu} + {\rm{o}}
(\om^{-\mu})  \\
t_1^\prime \ = \ - \f 1 {2 \gamma \om^{\f 3 2}}  + {\rm{o}} (\om^{-\f
  3 2 }) \label{tres} & \ &
t_2^\prime \ = \ \f \mu {2 \gamma^{2\mu} \om^{\mu + 1}}  + {\rm{o}}
(\om^{-\mu -1 }). 
\end{eqnarray}
Let us prove such asymptotics. 
The condition $t_2 > t_1$ selects, among the
solutions to the system \eqref{tsystem}, those belonging to the set
$T_2$ defined in \eqref{tidue}. It is immediately seen that, in such
region, $t_1 \to 0$ and $t_2 \to 1$ as $\om \to \infty$. As a
consequence, from the second equation in \eqref{tsystem} one gets
$$
\lim_{\om \to \infty} \gamma \sqrt \om t_1 \ = \ \lim_{\om \to \infty}
\f {\gamma \sqrt \om t_2} {\gamma \sqrt \om t_2 -1} \ = \ 1,
$$
so the first formula in \eqref{egy} is proven.

\n
From the first equation in \eqref{tsystem} we have
$$
\lim_{\om \to \infty} t_1^{-2 \mu} (1 - t_2^2) \ = \ \lim_{\om \to
  \infty} t_2^{-2 \mu} (1 - t_1^2) \ = \ 1,
$$  
and by the first of \eqref{egy}
$$
1 - t_2^2 \ = \ \f 1 {\gamma^{2 \mu} \om^\mu}  + {\rm{o}}
(\om^{-\mu})
$$
and the second identity in \eqref{egy} immediately follows.

\noindent
In order to prove \eqref{tres} we differentiate
both equations in \eqref{tsystem}, obtaining
\begin{equation} \label{difftsystem}
\left\{ \begin{array}{ccc}
t_1^\prime t_1^{2 \mu - 1} ( \mu - (\mu + 1) t_1^2) & = & t_2^\prime
t_2^{2 \mu - 1} ( \mu - (\mu + 1) t_2^2) \\ & & \\
{t_1^\prime}{t_1^{-2}} + {t_2^\prime}{t_2^{-2}} & = & - \f \gamma {2
  \sqrt \om}
\end{array}
\right.
\end{equation}
Expressing $t_2^\prime$ from the second equation 
and plugging it into the first one we get
$$
\om^{\f 32}t_1^\prime \ = \ - \f {\gamma \om t_1^2 t_2^{2 \mu + 1}
  (\mu - (\mu + 1) t_2^2)} {2 t_2^{2 \mu + 1}
  (\mu - (\mu + 1) t_2^2) + 2 t_1^{2 \mu + 1}
  (\mu - (\mu + 1) t_1^2)},
$$
that converges to $- \f 1 {2 \gamma}$ as $\om$ goes to infinity, and
so the first formula in
 \eqref{tres} is proved. To prove the second one, it is sufficient to
use the first equation in \eqref{difftsystem} and by \eqref{egy},
\eqref{ketto}, the first in \eqref{tres} one finally obtains
$$
\lim_{\om \to \infty} \om^{\mu + 1} t_2^\prime \ = \ \f {\mu}{2
  \gamma^{2 \mu}}
$$ and the proof is complete.
\end{proof}

\subsection{\label{sec:stab} 
Stability and instability of the ground states. Pitchfork
  bifurcation}

We recall the definition of orbital  
neighbourhood and orbital stability.

\begin{defi}
The set
$$ U_\eta (\phi) : = \{ \psi \in Q, \, {\rm s.t.} \, \inf_{\theta \in [0, 2 \pi)}
\| \psi - e^{i \theta} \phi \|_Q \leq \eta \}$$
is called the {\em orbital spherical neighbourhood with radius $\eta$ of the function $\phi$}.
\end{defi}

\begin{defi}\label{defi:orbstab}
We call {\em orbitally stable} (in the future) any stationary state
$\phi$ such that for any $\ve > 0$ there exists $\delta > 0$ such that
$$
\inf_{\theta \in [0, 2 \pi)}
\| \psi -  e^{i \theta} \phi \|_Q \leq \delta \, 
\Longrightarrow \, \sup_{t \geq 0} \inf_{\theta \in [0, 2 \pi)}
\| \psi(t) - e^{i \theta} \phi \|_Q \leq \ve,
$$
where $\psi (t)$ is the solution to the problem \eqref{intform} with $\psi$
as initial datum.
\end{defi}

\begin{defi}
We call {\em orbitally unstable} any stationary state
that is not orbitally stable.
\end{defi}

\begin{prop} [Stability and instability of ground states] \label{nature}
Consider the ground states of the dynamics described by \eqref{nlsdelta'}, defined as
the solutions to problem \ref{prob2} and explicitly computed in
Theorem \ref{theo:bifurcation}. Then,
\begin{enumerate}
\item If $0 < \mu \leq 2$, then 
for any $\om \in (\om_0, + \infty)$, $\om \neq \om^*$, all ground states are stable.
\item If $\mu > 2$, then
\begin{enumerate}
\item there exists $\om_1 > \om^*$ such that all the ground states 
with frequency $\omega < \om_1$ are orbitally stable.
\item There exists $\om_2 \geq \om_1$ such that if $\om > \om_2$ then
all ground states with frequency $\om > \om_2$ are orbitally unstable.
\end{enumerate} 
\end{enumerate}
\end{prop}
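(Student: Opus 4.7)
The plan is to apply the Grillakis--Shatah--Strauss (GSS) criterion for orbital stability/instability of stationary states, combining the spectral information about $L_1^{\gamma,\omega}$ and $L_2^{\gamma,\omega}$ (Propositions \ref{diecitre}, \ref{l1simm}, \ref{l1asimm}) with the convexity information on $d(\omega)$ (Proposition \ref{propdisecondo}). Recall that, once the hypotheses (a), (b), (c) listed in the Introduction are verified, the criterion asserts: $\phi_\omega$ is orbitally stable if $n(L_1^{\gamma,\omega}) - p(\omega) = 0$, and orbitally unstable if $n(L_1^{\gamma,\omega}) - p(\omega)$ is odd, where $p(\omega) \in \{0,1\}$ encodes the sign of $\frac{d^2}{d\omega^2} S_\omega(\phi_\omega)$. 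Since $\phi_\omega$ is a critical point of $S_\omega$, the standard envelope-type computation used at the beginning of the proof of Proposition \ref{propdisecondo} gives $\frac{d^2}{d\omega^2} S_\omega(\phi_\omega) = d''(\omega)$, so $p$ is determined by the sign of $d''$.

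First I would verify the three GSS hypotheses along each ground-state branch. Condition (a) follows from the decomposition $S''_\omega = L_1^{\gamma,\omega} \oplus L_2^{\gamma,\omega}$ established at the beginning of Section 6.1 together with $\sigma_{\mathrm{ess}}(L_j^{\gamma,\omega}) = [\omega,+\infty)$: the essential spectrum of $S''_\omega(\phi_\omega)$ is bounded below by $\omega > \omega_0 > 0$. Condition (b) follows from Proposition \ref{diecitre}(a), once one observes from the explicit form \eqref{formsol} that every ground state satisfies $\phi_\omega(0^+)\phi_\omega(0^-) < 0$, the two sech-profiles being positive but carrying opposite signs on the two half-lines. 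Condition (c) is supplied by Proposition \ref{l1simm}(a) for the symmetric branch $\omega \in (\omega_0,\omega^*)$ and by Proposition \ref{l1asimm} for the asymmetric branches $\omega > \omega^*$, both giving $n(L_1^{\gamma,\omega}) = 1$. Hence on every ground-state branch in the range of interest the count is $n = 1$, so stability is equivalent to $d''(\omega) > 0$ and instability to $d''(\omega) < 0$.

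The conclusion then follows by reading off the sign of $d''(\omega)$ from Proposition \ref{propdisecondo}. In case~1 ($0 < \mu \leq 2$), item~1 of that proposition gives $d''(\omega) > 0$ for every $\omega \in (\omega_0,+\infty) \setminus \{\omega^*\}$, so every such ground state is orbitally stable. In case~2(a) ($\mu > 2$), item~2 gives $d''(\omega) > 0$ on $(\omega_0,\omega^*)$, yielding stability of the symmetric branch; for $\mu < \mu^\star$, item~3(a) extends $d'' > 0$ to a right neighbourhood of $\omega^*$ on the asymmetric branch, and one defines $\omega_1$ as the supremum of frequencies where $d''(\omega) > 0$, which then satisfies $\omega_1 > \omega^*$ as required. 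In case~2(b), item~4 furnishes $\omega_2 \geq \omega^*$ such that $d''(\omega) < 0$ for all $\omega > \omega_2$; on the asymmetric branch, where $n = 1$, this gives $n - p = 1$, hence orbital instability.

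The substantive work for this stability statement is therefore concentrated in the preparatory results rather than in the present proposition: the count $n(L_1) = 1$ on both branches (the companion fact $n = 2$ for the symmetric continuation past $\omega^*$, isolated in Proposition \ref{l1simm}(c), being precisely what makes that continuation orbitally unstable) and the rather delicate sign analysis of $d''$ near $\omega^*$ and at infinity carried out in Proposition \ref{propdisecondo}. The main care required here is to check that the map $\omega \mapsto \phi_\omega$ is sufficiently smooth on each branch to justify the identification $\frac{d^2}{d\omega^2} S_\omega(\phi_\omega) = d''(\omega)$ used above; this is transparent from the explicit formulas in Theorem \ref{theo:bifurcation} and from the implicit-function argument underlying system \eqref{tsystem}.
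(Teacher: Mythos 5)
Your proposal is correct and follows essentially the same route as the paper: the paper likewise invokes the Grillakis--Shatah--Strauss theorems, verifying their assumptions via Propositions \ref{loch2}, \ref{conslaws}, \ref{propsystem}, \ref{diecitre}, \ref{l1simm}~a), b) and \ref{l1asimm}, and then concludes stability or instability from the sign of $d''(\om)$ supplied by Proposition \ref{propdisecondo}. Your additional remarks (the explicit count $n-p$, and the smoothness of $\om \mapsto \phi_\om$ needed to identify $\f{d^2}{d\om^2}S_\om(\phi_\om)$ with $d''(\om)$) are consistent with, and slightly more explicit than, the paper's argument.
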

\begin{proof}
Points 1 and 2 (a) follow
 from Theorem 2 in \cite{[GSS1]}. Indeed, notice that 
Assumption 1 in such theorem is proven by Propositions \ref{loch2}
and \ref{conslaws},
while Assumption 2 follows from Proposition \ref{propsystem} and
Theorem \ref{theo:bifurcation}. 
Furthermore, owing to propositions \ref{diecitre}, \ref{l1simm}
a), b), and \ref{l1asimm}, Assumption 3 is verified for all ground
states. Therefore, in order to establish orbital stability, it is
sufficient to remark that,  for the considered cases, Proposition
\ref{propdisecondo} establishes  $d^{\prime
  \prime} (\om) > 0$. 

Case 2 b) follows from Theorem 4.7 in \cite{[GSS1]}, as we know
from Proposition \ref{propdisecondo} that $d^{\prime \prime} (\om) <
0$.
Thus, the theorem is proven. 
\end{proof}
\begin{figure}
\begin{center}
{}{}\scalebox{0.43}{\includegraphics{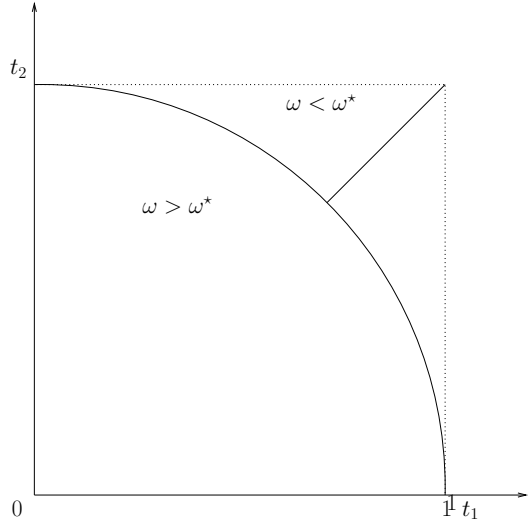}}
\caption{{\bf Hypocritical and critical case (i.e. $\mu \leq 2$)}. All
ground states are stable.}
\end{center}
\end{figure}

\begin{figure}
\begin{center}
{}{}\scalebox{0.43}{\includegraphics{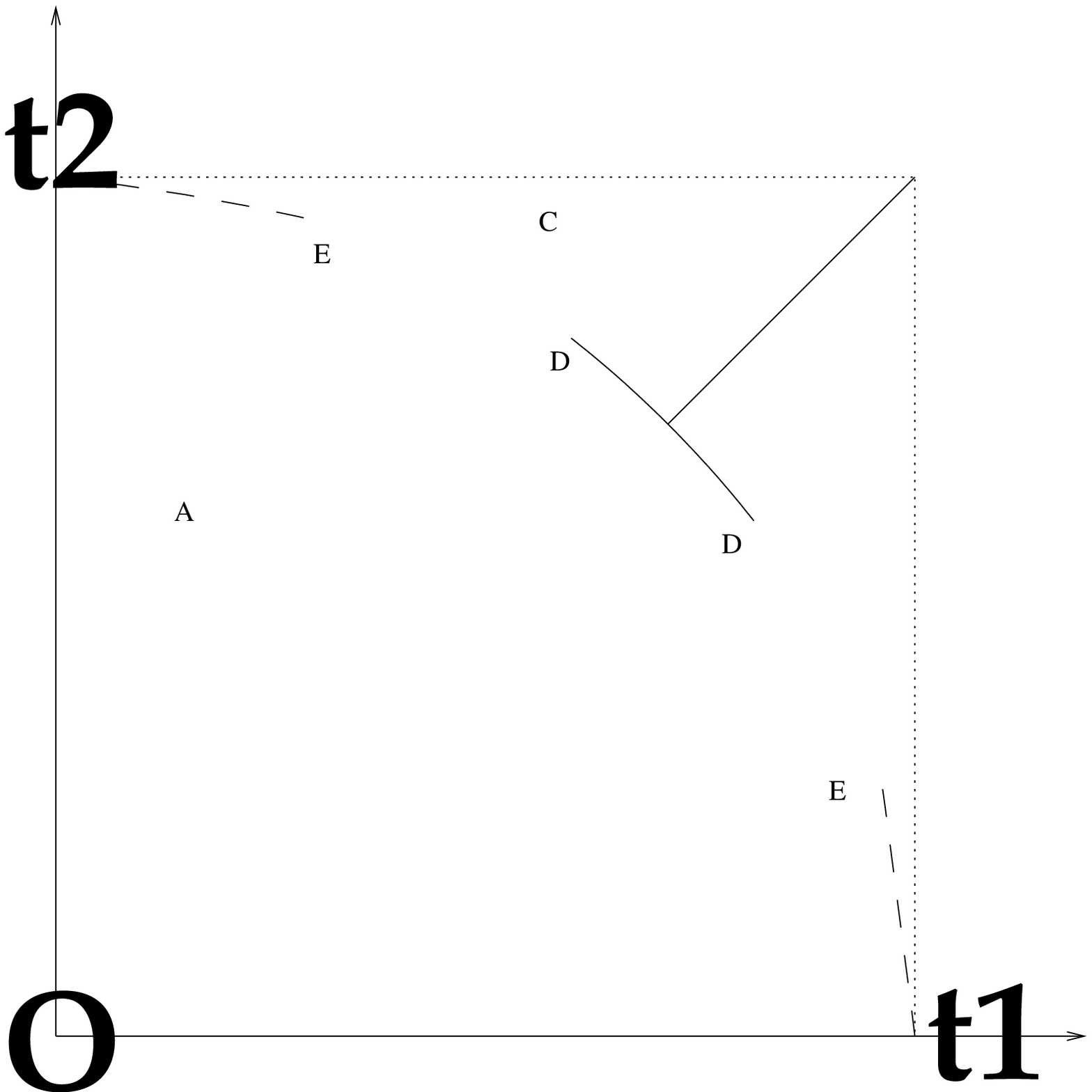}}
\caption{{\bf First hypercritical case (i.e. $2 < \mu < \mu^\star
< 2.5$)}. Symmetric ground states are stable. Immediately after
bifurcation, the two newborn asymmetric states are still stable. At
large frequencies, they become unstable.}
\end{center}
\end{figure}

\begin{figure}
\begin{center}
{}{}\scalebox{0.43}{\includegraphics{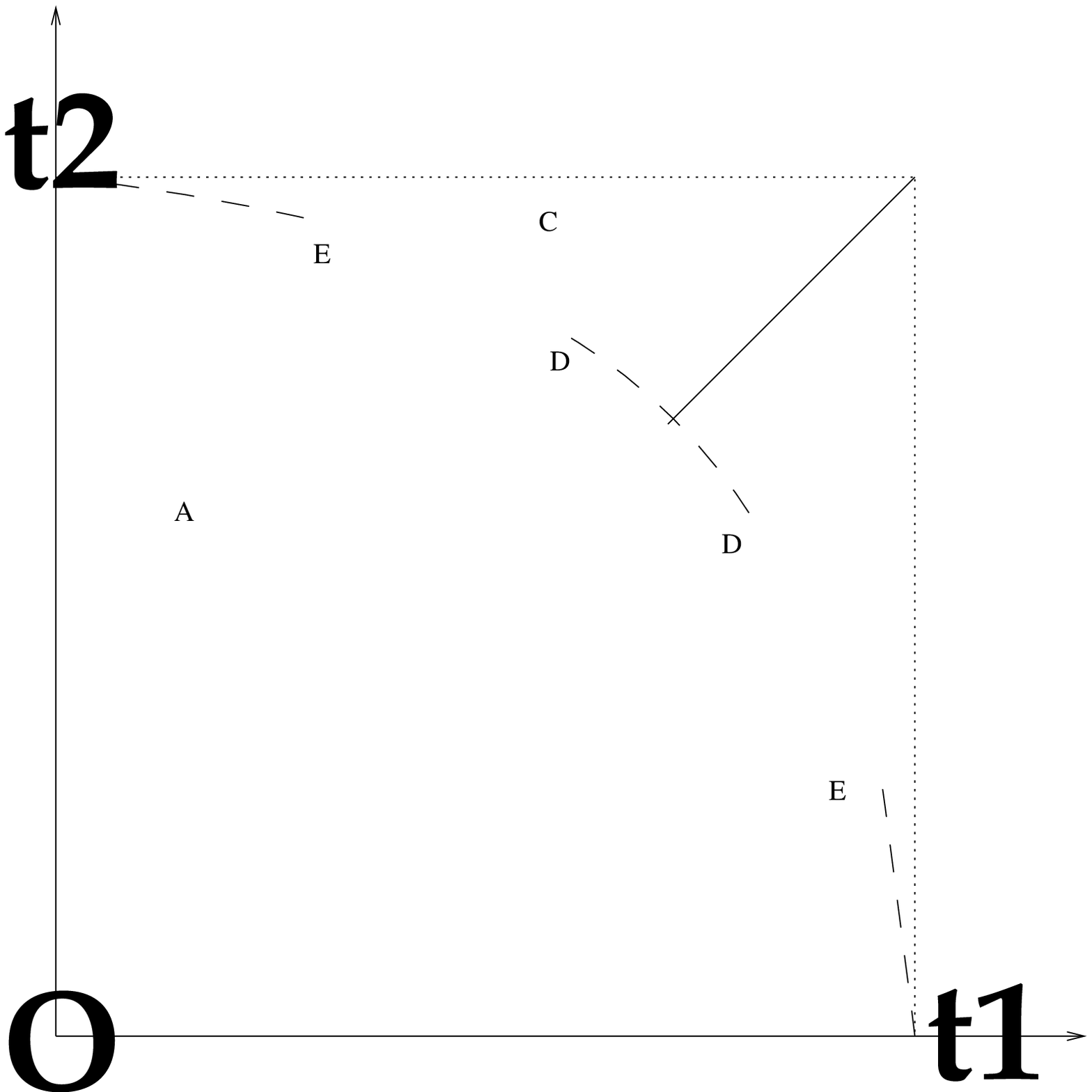}}
\caption{{\bf Second hypercritical case (i.e. $\mu > \mu^\star$)}. 
Symmetric ground states are stable. Change of stability
occurs immediately after bifurcation.}
\end{center}
\end{figure}

\begin{rem}{\em
By Theorem 3 in \cite{[GSS1]} we have that, under the hypotheses of
Theorem \ref{nature}, the ground states solve Problem \ref{natural},
i.e., they minimize {\em at least locally}
the energy \ref{energy} among the functions with
the same $L^2$-norm.}
\end{rem}

\begin{theorem}[Pitchfork bifurcation] \label{pitchfork}
Given $\mu > 0$, if $\om > \om^*$ then
the stationary solutions $\psi_{\om}^{y,-y,\theta}$
defined in Theorem \ref{theo:bifurcation} are orbitally unstable.
\end{theorem}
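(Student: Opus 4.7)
The plan is to apply the Grillakis-Shatah-Strauss (GSS) framework recalled in the introduction to the stationary solutions $\psi_{\om}^{y,-y,\theta}$, viewed as the continuation beyond $\om^*$ of the odd (symmetric) branch. As a first step, I would verify the three spectral hypotheses for the linearization operators $L_{1,\rm{sym}}^{\gamma,\om}$ and $L_{2,\rm{sym}}^{\gamma,\om}$: condition (a) is immediate because the essential spectrum equals $[\om,+\infty)\subset(0,\infty)$; condition (b) follows from Proposition \ref{diecitre}, whose hypothesis $\psi(0^+)\psi(0^-)<0$ is met by $\psi_{\om}^{y,-y,\theta}$ thanks to the sign flip at the origin built into \eqref{formsol}; condition (c) with $n=2$ is exactly Proposition \ref{l1simm}(c).

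Second, I would compute $d_{\rm{sym}}''(\om) := \f{d^2}{d\om^2} S_{\om}(\psi_{\om}^{y,-y,\theta})$ along the symmetric branch extended past $\om^*$. The derivation of Proposition \ref{propdisecondo} applies verbatim once one observes that on the symmetric branch $\zeta_1(\om)=\zeta_2(\om)=2/(\gamma\sqrt\om)$ for every $\om>\om_0$ (not only for $\om\leq\om^*$), yielding the factorization $d_{\rm{sym}}''(\om)\propto r(\om)$ with $r$ as in \eqref{erreom}. The sign analysis already carried out there gives: for $0<\mu\leq 2$, $r(\om)>0$ on all of $(\om_0,+\infty)$; for $\mu>2$, $r$ is strictly decreasing, strictly positive on $(\om_0,\om^*]$ and strictly negative as $\om\to+\infty$ (since the boundary term in \eqref{erreom} vanishes while the integral term has a negative finite limit), so there is a unique $\om_c>\om^*$ with $r(\om_c)=0$.

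Finally, I would invoke the GSS instability criterion. Whenever $d_{\rm{sym}}''(\om)>0$ one has $p(\om)=1$, so the count $n(L_1)-p=2-1=1$ is odd and Theorem 4.7 of \cite{[GSS1]} yields orbital instability at once. The main obstacle is the regime $\mu>2$, $\om>\om_c$, where $d_{\rm{sym}}''(\om)<0$ gives $n-p=2$ (even), a case outside the standard ``$n-p$ odd'' criterion. I expect to handle it by appealing to an extension of the GSS framework that yields orbital instability whenever $n(L_1)>p(\om)$, regardless of parity; the unstable direction can be constructed explicitly from an antisymmetric negative eigenfunction of $L_{1,\rm{sym}}^{\gamma,\om}$ (analogous to the $\xi_{-1}$ of Proposition \ref{l1simm}(b)), which is orthogonal to both $\psi_\om$ and $i\psi_\om$ and hence survives the symmetry quotient, producing an unstable eigenvalue of the linearized flow.
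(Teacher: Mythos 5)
Your setup and your treatment of the case $\f{d^2}{d\om^2}S_\om(\psi_\om^{y,-y,\theta})>0$ match the paper: the spectral hypotheses are exactly Propositions \ref{diecitre} and \ref{l1simm}, the second derivative of the action along the continued symmetric branch is computed as in Proposition \ref{propdisecondo} with $\zeta_1(\om)=\zeta_2(\om)=2/(\gamma\sqrt\om)$ for all $\om$, and when that quantity is positive the count $n-p=2-1=1$ is odd and instability follows from the standard criterion (the paper invokes Theorem 6.2 of \cite{[GSS2]} at this point).

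The genuine gap is in the regime where $\f{d^2}{d\om^2}S_\om(\psi_\om^{y,-y,\theta})<0$, i.e.\ $n-p=2$. There is no general extension of the Grillakis--Shatah--Strauss theory asserting orbital instability whenever $n(L_1)>p$: when $n-p$ is even and positive, the refined eigenvalue-counting formulas allow the excess negative directions of $L_1$ to be absorbed by purely imaginary eigenvalues of negative Krein signature, so the linearization may be spectrally stable; and a negative direction of $L_1$, even an antisymmetric one orthogonal to $\psi_\om$ and $i\psi_\om$, does not by itself produce an unstable eigenvalue of the linearized flow. So the last step of your argument, as stated, does not close. The paper resolves this case differently, and your instinct about antisymmetry points to the fix but is used in the opposite direction: one restricts the \emph{dynamics} to the subspace $Q_a$ of antisymmetric functions, which is shown to be invariant under the flow by inspecting the explicit integral kernel of $e^{-iH_\gamma t}$ together with Duhamel's formula; within $Q_a$ the state $\psi_\om^{y,-y,\theta}$ is a minimizer of $S_\om$ on the Nehari manifold (via the unitary map $U_\sharp$ to a $\delta$-interaction problem, as in the proof of Proposition \ref{l1simm}), so the restricted $L_1$ has exactly one negative eigenvalue, while $p=0$ there; the odd-parity criterion then gives instability in $Q_a$, hence a fortiori in $Q$. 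You should also address the borderline frequency at which $\f{d^2}{d\om^2}S_\om(\psi_\om^{y,-y,\theta})=0$, which your case split omits and which the paper handles by the purely nonlinear instability argument of \cite{cp}.
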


\begin{proof}
Closely mimicking the computation that led to formula
\eqref{disecondo}, one gets
\be \nnn 
\begin{split}
\f{d^2 S_\om (\psi_{\om}^{y,-y,\theta})}{d \om^2} (\om)
\ = \ & \left( \f {\mu + 1} \lambda \right)^{\f 1 \mu}
\f {\om^{\f 1 \mu - \f 3 2}} {\mu} \left\{ \left( \f 1 \mu - \f 1 2
\right) \int_{\f 2 {\gamma \sqrt \om}}^1 (1-t^2)^{\f 1 \mu -1} \,  dt 
+ \f 2 {\gamma \sqrt \om} \left( 1 - \f 4 {\gamma^2 \om} \right)^{\f 1 \mu - 1}
\right\}. \end{split}
\ee
As $\om > \om^*$, we know by proposition \ref{l1simm} we know that the
number of negative  
eigenvalues of the operator $L^{\gamma, \om}$ equals two. 
If $\f{d^2 S_\om (\psi_{\om}^{y,-y,\theta})}{d \om^2} (\om)$ is
positive (for instance
if $\mu \leq 2$), 
then the result follows
from Theorem 6.2 in \cite{[GSS2]}.

On the other hand, if $\f{d^2 S_\om (\psi_{\om}^{y,-y,\theta})}{d
  \om^2} (\om)$ is negative, then we restrict the problem to the space
$Q_a$ of
antisymmetric functions lying in $Q$. First, from the
explicit knowledge of the propagator of the Schr\"odinger equation
with a $\delta^\prime$ interaction, represented by the
integral kernel (see \cite{[ABD]})
$$
e^{-iH_\gamma t} (x,y) \ = \ \f {e^{i \f {(x-y)^2}{4t}}}{\sqrt{4 \pi i t}}
+ \epsilon(xy)  \f {e^{i \f {(|x|+|y|)^2}{4t}}}{\sqrt{4 \pi i t}}
+ \f{ \epsilon(xy)}{2 \gamma} \int_0^{+\infty} e^{\f{-2u}\gamma}
\f {e^{i \f {(|x|+|y|- u)^2}{4t}}}{\sqrt{4 \pi i t}} +  
\f{2 \epsilon(xy)}{ \gamma}{e^{i \f{4t}{\gamma^2}}} e^{-\f 2 \gamma
|x|+|y|}
$$
it appears that,  denoting  $\widetilde g (x) = g (-x)$,
one has
$$
\widetilde{e^{-iH_\gamma t} \psi_0} = e^{-iH_\gamma t}\widetilde
 \psi_0
$$
Let us consider the problem
\eqref{nlsdelta'}, and initial data
$\psi_0^a$ such that $
\widetilde \psi_0^a = -  \psi_0^a$.
Then, applying Duhamel's formula to 
\eqref{nlsdelta'}, one finds
$$
\psi_t \ = \ e^{-iH_\gamma t}\psi_0 + i \lambda \int_0^t  e^{-iH_\gamma (t-s)}
| \psi_s |^{2 \mu} \psi_s,
$$
so that
\be \nonumber \begin{split}
\widetilde  \psi_t \ & = \ e^{-iH_\gamma t}
\widetilde \psi_0 + i \lambda \int_0^t  e^{-iH_\gamma (t-s)}
| \widetilde \psi_s |^{2 \mu} \widetilde \psi_s 
  \ = \ - e^{-iH_\gamma t}
 \psi_0 + i \lambda \int_0^t  e^{-iH_\gamma (t-s)}
| \widetilde \psi_s |^{2 \mu} \widetilde \psi_s
\end{split} \ee
It follows that $\widetilde \psi_t$ solves \eqref{nlsdelta'} with 
$- \psi_0$ as initial data. Since \eqref{nlsdelta'} is invariant under 
multiplication by a phase factor, and since the solution is unique, it must be
$\widetilde  \psi_t = - \psi_t$, and so we have that \eqref{nlsdelta'}
preserves the antisymmetry, and therefore the evolution problem \eqref{nlsdelta'}
is 
well-defined in $Q_a$. 

Now, as already remarked in the proof of Proposition \ref{l1simm}, the
functions $\psi_{\om}^{y,-y,\theta}$ are the minimizers of $S_\om$
among the antisymmetric functions belonging to the Nehari
manifold. The time-evolution operator, linearized around them, admits
a simple negative eigenvalue, so, by Theorem 4.1 in \cite{[GSS1]}, the
stationary state $\psi_{\om}^{y,-y,\theta}$ is orbitally unstable in $Q_a$, so, 
{\em a fortiori}, it is orbitally unstable in $Q$. 
To complete the proof, we recall that the instability of $\psi_\om^{y,-y,\theta}$
in the case $\f{d^2 S_\om (\psi_{\om}^{y,-y,\theta})}{d \om^2} (\om) = 0$ can be
established by the argument in \cite{cp}.
\end{proof}

We are then in the presence of a pitchfork bifurcation, that can be depicted as
in Figure 5.

\begin{figure}
\begin{center}
{}{}\scalebox{0.43}{\includegraphics{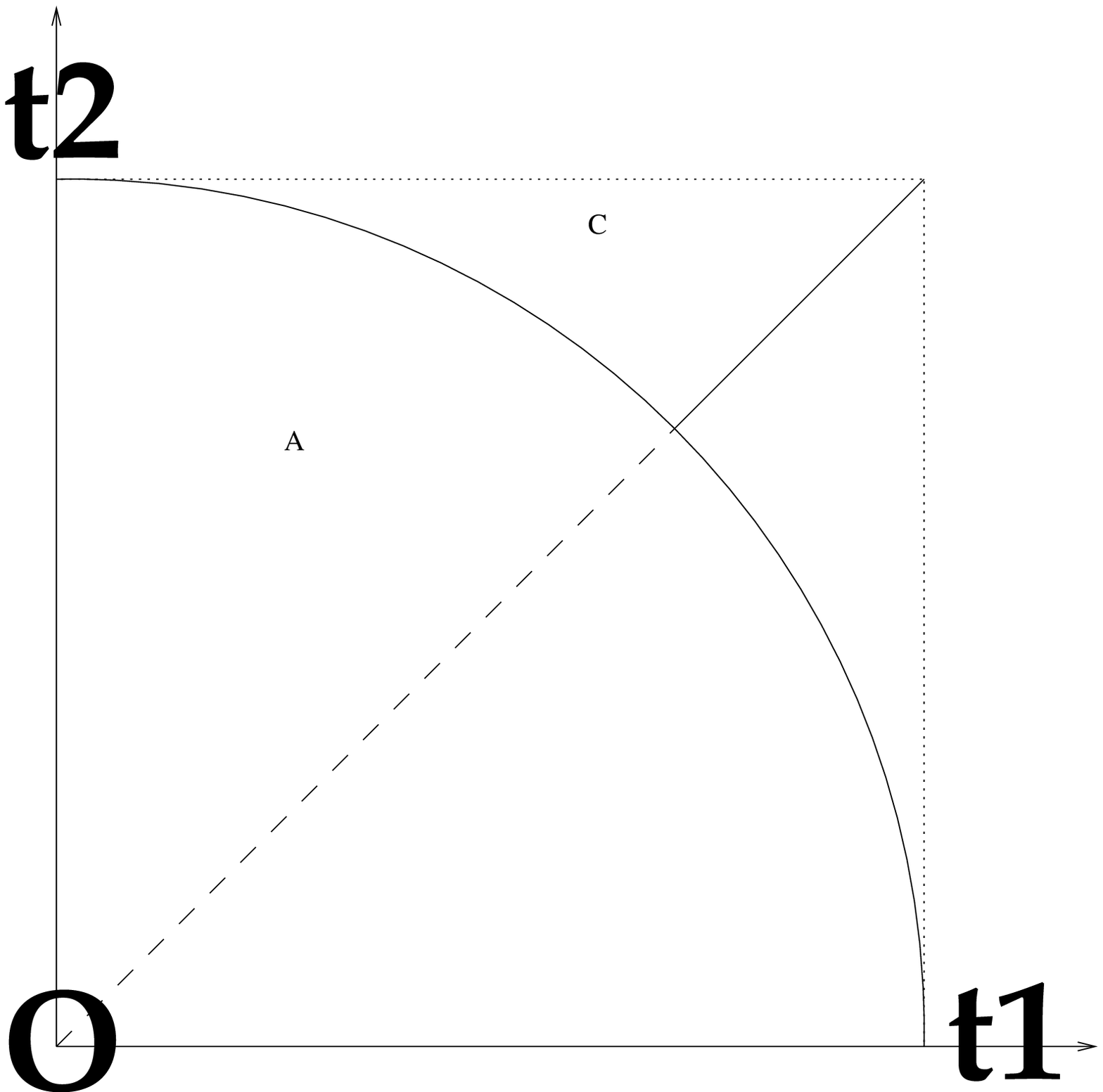}}
\caption{{\bf Pitchfork bifurcation}. The branch of the (anti-)symmetric
ground states can be continued beyond bifurcation, for any value of $\om$,
but the corresponding stationary states are orbitally unstable. By the same
modification, i.e., adding the unstable straight branch from the origin
up to $(\om^*, \om^*)$ to Figures 3 and 4, one obtains the pictures
 that correspond to
the cases $2 < \mu < \mu^\star$ and $\mu > \mu^\star$.}
\end{center}
\end{figure}

We leave open the issue of optimizing $\om_1$ and $\om_2$, and possibly getting
$\om_1 = \om_2$. In other words, we do not know whether, for a frequency beyond
the bifurcation, each ground state undergoes only a change of variable or more.

Finally, we don't treat here the problem of determining the stability features
of the ground states at the bifurcation frequency. This will be the subject of a
forthcoming note.

{\bf Acknowledgements.} 
The authors thank Andrea Sacchetti for many discussions during the
initial step of the project, that helped us to focus the problem.
R. A. is grateful to Cinzia Casagrande for some
explanations on algebraic curves and to Sergio Conti for an illuminating
discussion on the properties of the symmetric stationary states beyond
bifurcation.

R. A. is partially supported by 
 the PRIN2009 grant {\em Critical Point Theory and Perturbative 
Methods for Nonlinear Differential Equations}.

\end{document}